\documentclass[a4paper,USenglish,modulo]{lipics-v2019} 
\newcommand{\iflong}[1]{#1}
\newcommand{\ifshort}[1]{}


\usepackage[utf8]{inputenc}
\usepackage{cite}
\usepackage{todonotes}
\usepackage[outercaption]{sidecap} 
\usepackage{wrapfig}

\theoremstyle{plain}

\newtheorem{observation}[theorem]{Observation}

{
\theoremstyle{definition}
\newtheorem{myremark}[theorem]{Remark}
}
\makeatletter
\g@addto@macro\bfseries{\boldmath}
\makeatother
\usepackage{mdframed}

\graphicspath{{figures/}}

\newtheorem{fact}[theorem]{Fact}

\usepackage{complexity}
\usepackage{thmtools, thm-restate}

\newcommand{\ee}[1]{\todo[color=green!20]{E: #1}}

\newcommand{\bigoh}{\ensuremath{{\mathcal O}}}
\newcommand{\cH}{\ensuremath{\mathcal H}}
\newcommand{\cG}{\ensuremath{\mathcal G}}

\newcommand{\aV}{V_{\text{add}}}
\newcommand{\aE}{E_{\text{add}}}
\newcommand{\aEpar}{\aE^{H}}

\newcommand{\pH}{\cH^{\times}}	
\newcommand{\pG}{\cG^{\times}}

\newcommand{\dr}[2][]{\ensuremath{\mathcal{#2}
\ifthenelse { \equal {#1} {} }  
{}   
{_{#1}}   
}
}

\author{Eduard Eiben}{Department of Computer Science, Royal Holloway, University of London, Egham, United Kingdom}{eduard.eiben@rhul.ac.uk}{https://orcid.org/0000-0003-2628-3435}{}
\author{Robert Ganian}{Algorithms and Complexity Group, TU Wien, Vienna, Austria}{rganian@ac.tuwien.ac.at}{https://orcid.org/0000-0002-7762-8045}{Robert Ganian acknowledges support by the Austrian Science Fund (FWF, project P31336).}
\author{Thekla Hamm}{Algorithms and Complexity Group, TU Wien, Vienna, Austria}{thamm@ac.tuwien.ac.at}{}{Thekla Hamm acknowledges support by the Austrian Science Fund (FWF, projects P31336 and W1255-N23).}
\author{Fabian Klute}{Deptartment of Information and Computing Sciences, Utrecht University, the Netherlands}%
{f.m.klute@uu.nl}{https://orcid.org/0000-0002-7791-3604}{}
\author{Martin Nöllenburg}{Algorithms and Complexity Group, TU Wien, Vienna, Austria}{noellenburg@ac.tuwien.ac.at}{https://orcid.org/0000-0003-0454-3937}{}

\authorrunning{E.~Eiben, R.~Ganian, T.~Hamm, F.~Klute, M.~Nöllenburg}

\hideLIPIcs

\keywords{Extension problems, 1-planarity}

\ccsdesc[300]{Theory of computation~Computational geometry}
\iflong{\title{Extending Nearly Complete 1-Planar Drawings in Polynomial Time}}
\ifshort{\title{Extending Nearly Complete 1-Planar Drawings in Polynomial Time}}
\titlerunning{A Polynomial-Time Algorithm for Extending Nearly Complete Partial 1-Planar Drawings}

\nolinenumbers


\begin{document}
\maketitle
\begin{abstract}
The problem of extending partial geometric graph representations such as plane graphs has received considerable attention in recent years. In particular, given a graph $G$, a connected subgraph $H$ of $G$ and a drawing $\cH$ of $H$, the extension problem asks whether $\cH$ can be extended into a drawing of $G$ while maintaining some desired property of the drawing (e.g., planarity). 

In their breakthrough result, Angelini et al. [ACM	 TALG 2015] showed that the extension problem is polynomial-time solvable when the aim is to preserve planarity. 
{Very r}ecently {we considered} this problem
for partial 1-planar drawings~[ICALP 2020], which are drawings in the plane that allow each edge to have at most one crossing.
The
most important question identified and left open in that work is whether the problem can be solved in polynomial time when $H$ can be obtained from $G$ by deleting a bounded number of vertices and edges.
In this work, we answer this question positively by providing a constructive polynomial-time decision algorithm.

\end{abstract}

\newcommand{\edg}{embedding graph}

\newpage

\section{Introduction}
	Planarity is a fundamental concept in graph theory and especially in graph drawing, where planar graphs are exactly those graphs that admit a crossing-free node-link drawing in the plane. It is well known that testing whether a graph is planar can be carried out in polynomial time, and in the positive case one can also construct a plane drawing~\cite{defraysseixHowDrawPlanar1990,s-epgg-90}. But what if we are given a more refined question: given a graph $G$ where some subgraph $H$ of $G$ already has a fixed plane drawing $\cH$, is it possible to extend $\cH$ to a full plane drawing of $G$? 	
	The corresponding problem is an example of so-called drawing extension problems, which are motivated (among others) from network visualization applications: there,  important patterns (subgraphs) might be required to have a special layout, or new vertices and edges in a dynamic graph may need to be inserted into an existing (partial) connected drawing which must remain stable in order to preserve the mental map~\cite{mels-lam-95}.

	The problem of extending partial planar drawings was solved thanks to the breakthrough result of Angelini, Di Battista, Frati, Jelinek, Kratochvil, Patrignani
	and Rutter~\cite{adfjkp-tppeg-15}, who provided a linear-time algorithm that answers the above question as well as constructs the desired planar drawing of $G$ (if it exists). 
	Unfortunately, it is often the case that we cannot hope for a plane drawing of $G$ extending $\cH$---either because $G$ itself is not planar, or because the partial drawing $\cH$ cannot be extended to a plane one. A natural way to deal with this situation is to relax the restriction from planarity to a more general class of graphs. 
	In our very recent work~\cite{Aug1P-ICALP}, we investigated the extension problem of partial \emph{1-planar drawings}, one of the most natural and most studied generalizations of planarity~\cite{klm-ab1-17,dlm-sgdbp-19,r-sk-65}. 
	A graph is 1-planar if it admits a drawing in the plane with at most one crossing per edge. Unlike planar graphs, recognizing 1-planar graphs is \NP-complete~\cite{gb-agewce-07,km-mo1h1t-13}, even if the graph is a planar graph plus a single edge~\cite{cm-aepgmcn1h-13}---and hence the extension problem of 1-planar drawings is also \NP-complete~\cite{Aug1P-ICALP}.

	In spite of this initial observation, we showed that the extension problem for 1-planar drawings is polynomial-time solvable when the \emph{edge deletion distance} between $H$ and $G$ is bounded~\cite{Aug1P-ICALP}. However, already in that paper it was pointed out that requiring the edge deletion distance to be bounded is rather restrictive: after all, the deletion of a vertex (including all of its incident edges) from a graph is often considered an atomic operation and yet could have an arbitrarily large impact on the {edge deletion} measure. That is why that article proposed to measure the distance between $H$ and $G$ in terms of the \emph{edge+vertex deletion distance}, i.e., the minimum number of vertex- and edge-deletion operations required to obtain $H$ from $G$. Yet---in spite of providing partial results exploring this notion---the existence of a polynomial-time algorithm for extending partial 1-planar drawings of connected graphs with bounded edge+vertex deletion distance was left as a prominent open question.	
In this paper, we resolve this open question as follows.

\begin{theorem}\label{thm:main}
	Let $\kappa$ be a fixed non-negative integer. Given a graph $G$, a connected subgraph $H$ of $G$ and a 1-planar drawing $\cH$ of $H$ such that $H$ can be obtained from $G$ by a sequence of at most $\kappa$ vertex and edge deletions, it is possible to determine whether $\cH$ can be extended to a 1-planar drawing of $G$ in polynomial time, and if so to compute such an extension.
\end{theorem}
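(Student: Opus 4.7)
I plan to build on the approach developed in our ICALP 2020 paper~\cite{Aug1P-ICALP}, where the special case $\aV=\emptyset$ was solved: there the algorithm enumerates a polynomial number of combinatorial ``sketches'' describing the routings of the $\le\kappa$ added edges through $\cH$, and then reduces each resulting subproblem to the partial planar drawing extension problem solved by the Angelini et al.\ algorithm~\cite{adfjkp-tppeg-15}. The new difficulty is that while $|\aV|\le\kappa$ and $|\aEpar|\le\kappa$, the set $\aEfree$ of added edges incident to $\aV$ can be large: each $v\in\aV$ may have many neighbors in $V(H)$, so we cannot afford to individually guess the routing of every edge in $\aEfree$.

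My plan proceeds in three steps. First, I would branch over a polynomial-size set of partial sketches specifying (i) the face of $\cH$ that will contain each $v\in\aV$, (ii) the full combinatorial routing of every edge of $\aEpar$, of every edge joining two vertices of $\aV$, and of every edge of $\aEfree$ that participates in a crossing with another \emph{added} edge, and (iii) the cyclic order around each $v\in\aV$ of these ``sketched'' incidences, together with the pairs of sketched edges that cross each other. Because at most $O(\kappa^2)$ added edges can be involved in crossings with other added edges and each added edge has at most one crossing, the sketch has bounded combinatorial type and admits polynomially many completions. Second, I would collapse the residual task, for each sketch, into a per-face subproblem: a yet-unrouted edge $vu\in\aEfree$ with $v$ placed in face $f$ of the sketched drawing must either lie entirely inside $f$ and terminate on its boundary, or cross a single edge of $\cH$ bounding $f$ and terminate on the boundary of the adjacent face. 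Third, each per-face subproblem becomes a partial \emph{planar} drawing extension problem on an enlarged ``super-face'' that glues $f$ to the slivers of neighbouring faces reachable by crossing one $\cH$-edge; this is resolved in polynomial time by invoking the Angelini et al.\ theorem.

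The main obstacle I anticipate is the coupling between adjacent faces induced by the 1-planarity budget on edges of $\cH$: an edge $e\in E(H)$ bordering faces $f$ and $f'$ may be crossed by at most one added edge in total, so if residual $\aEfree$-edges placed in both $f$ and $f'$ wish to cross $e$, only one of them may. Since the sketch already fixes which $\cH$-edges are crossed by the bounded part of the added structure, only $O(\kappa)$ further $\cH$-edges can be ``contended'' for crossing by residual $\aEfree$-edges; I would therefore extend the branching by a polynomial factor that assigns each contended $\cH$-edge to at most one of its two incident faces (or to neither). After this additional branching, the per-face planar extension instances become genuinely independent, and the correctness and polynomial running time of the whole procedure follow by composing the $|V(G)|^{O(\kappa)}$ enumeration of sketches with the polynomial-time planar extension algorithm.
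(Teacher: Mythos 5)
Your first step---branching over the placement of vertices in $\aV$, the routings of $\aEpar$ and of edges within $\aV$, and the $\bigoh(\kappa^2)$ added edges that cross other added edges---matches the paper's Section~\ref{sec:branching} (note that the bound you assert on mutually crossing added edges is exactly Lemma~\ref{lem:1pXP:few_crossings} and is not free: it requires the connectedness of $H$ via a $K_{3,3}$-minor argument, and fails for disconnected $H$). The genuine gap lies in your second and third steps. After untangling, each residual edge indeed stays in its cell or crosses one $\cH$-edge into a neighbouring cell, but the resulting per-cell problem is \emph{not} a partial planar drawing extension instance. The obstruction is the 1-planarity budget itself: each boundary edge of a cell is a ``gate'' usable by at most one added edge, and up to $\kappa$ added vertices sitting in the same cell (or in adjacent cells) compete for these gates. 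A planar extension oracle in a glued ``super-face'' cannot express this capacity constraint, nor the fact that which gate an edge uses determines which sliver of the neighbouring cell it lands in and how it interleaves with edges of other added vertices there. This is precisely why the paper, even for the residual two-vertex case, needs a bespoke dynamic program over delimiters combined with a network-flow subroutine (Section~\ref{sec:2vtcs} and Lemma~\ref{lem:flow}), rather than a call to the Angelini et al.\ algorithm; your proposal would imply that the already nontrivial $\kappa=2$ case of~\cite{Aug1P-ICALP} reduces to planar extension, which it does not.

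Your patch for the coupling---that only $\bigoh(\kappa)$ edges of $\cH$ are ``contended'' and can be assigned to one side by branching---is false. Take two added vertices $v,w$ in the same cell $f$ (or in adjacent cells), each with $\Theta(n)$ neighbours reachable only by crossing boundary edges of $f$: then $\Theta(n)$ edges of $\cH$ are simultaneously candidates for crossings by residual edges of $v$ and of $w$, so resolving the contention by exhaustive assignment costs $2^{\Theta(n)}$ branches. Fixing which $\cH$-edges the \emph{sketched} edges cross says nothing about this. The paper's way around it is structural rather than enumerative: it shows every solution induces $\bigoh(\kappa^3)$ \emph{base regions} whose boundaries consist of at most two added edges each (Section~\ref{sec:baseregions}), branches on those $\bigoh(\kappa^3)$ delimiting drawings, and then uses planarity-based counting (Lemma~\ref{lem:planarBoundedNeighborhood}) to show only $\bigoh(\kappa^3)$ cells and vertices interact with three or more base cells, so that after further bounded branching every remaining cell is assigned to at most two base cells and the instance decomposes into pairwise subinstances solvable by the two-vertex routine. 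Without an analogue of the base-region decomposition and the subsequent pairwise isolation, your per-face subproblems are neither independent nor planar-extension instances, and the proof does not go through.
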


\smallskip
\noindent \textbf{Proof Techniques.}\quad
As the first ingredient for our proof, we use the connectedness of $H$ to obtain a bound on the number of edges in $E(G)\setminus E(H)$ which are pairwise crossing. This allows us to perform exhaustive branching to reduce to the case where all that remains is to insert (a possibly large number of) missing edges incident to at most $\kappa$ vertices (notably those in $V(G)\setminus V(H)$) and where we can assume that these remaining missing edges are pairwise non-crossing.
While this step would seem to represent a significant simplification of the problem, it in fact merely exposes its most challenging part.
This {reduction} step is described in Section~\ref{sec:branching}.

Next, in Section~\ref{sec:baseregions} we analyze the structure of a hypothetical solution in order to partition each cell\footnote{Cells can be viewed as the analogue of faces in 1-planar drawings.} containing at least one missing vertex into 
\emph{base regions}. Intuitively, base regions correspond to a part of a cell which ``belongs'' to a certain missing vertex, in the sense that edges incident to other missing vertices may only interact with a base region in a limited way (but may still be present). 
	We show that every solution has at most $ \bigoh(\kappa^3) $ many base regions, whose boundaries are each determined by the drawing of at most two edges. This allows us to apply a further branching step to identify the boundaries of such base regions.
	
	Third, we show how to subdivide and mark base regions as well as other cells as reserved for drawings of edges that are incident to one of at most two specific added vertices in Section~\ref{sec:inter}.
	The marked cells can be appropriately grouped together into a bounded number of independent subinstances of a restricted problem, where each such subinstance has the crucial property that it only contains missing edges that are incident to its two assigned vertices and must be routed via the subdivided base regions allocated to the subinstance.
%
	
	To complete the proof, Section~\ref{sec:2vtcs} provides an algorithm that can solve the independent subinstances obtained as above. The algorithm expands on 
	the previously developed algorithm for the case of $\kappa=2$~\cite[Section 6]{Aug1P-ICALP} to deal with some added difficulties arising from the fact that the subinstances may geometrically interfere with one another in the plane.


%
%

	\smallskip
	\noindent \textbf{Related Work.}\quad	
	The definition of 1-planarity dates back to Ringel (1965)~\cite{r-sk-65}  and since then the class of 1-planar graphs has been of considerable interest in graph theory, graph drawing and (geometric) graph algorithms, see the recent annotated bibliography on 1-planarity by Kobourov et al.~\cite{klm-ab1-17} collecting 143 references. 
More generally speaking, interest in various classes of beyond-planar graphs (not limited to, but including 1-planar graphs) has steadily been on the rise~\cite{dlm-sgdbp-19,ht-bpg-20} in the last decade.

Our recent work on the extension problem for 1-planar graphs~\cite{Aug1P-ICALP} established the \emph{fixed-parameter tractability}~\cite{DowneyF13,CyganFKLMPPS15} of the problem when parameterized by the edge deletion distance between $H$ and $G$. The proof of that result heavily relied on the fact that the total number of edge crossings introduced by adding the missing edges was upper-bounded by the number of added edges. In particular, this made it possible to define an auxiliary graph $H'$ of bounded treewidth that captured information about the partial drawing $\cH$, whereas the extension problem could then be encoded as a formula in Monadic Second Order Logic over~$H'$. At that point, the problem could be solved by invoking Courcelle's Theorem~\cite{Courcelle90}. 

The same paper used an extension of this idea to solve the extension problem for the more restrictive IC-planar graphs~\cite{albertson08,LiottaM16,Brandenburg18a} with respect to the vertex+edge deletion distance---the key distinction here is that while adding $\kappa$ vertices to an incomplete IC-planar drawing can only create $\kappa$ new crossings, adding just two vertices to an incomplete 1-planar drawing may require an arbitrarily large number of new crossings. As a final result, the paper provided a polynomial-time algorithm that resolved the special case of adding two vertices into a 1-planar drawing; the core of this algorithm relied on dynamic programming and case analysis. A slightly generalized version of this algorithm is also used as a subroutine in the last part of our proof in this paper.

\ifshort{
Other related work also studied extension problems of partial representations (other than drawings) for specific graph classes~\cite{adp-esd-19,Arroyo2019-ext1edge,p-epsd-06,cegl-dgpwpofpa-12,mnr-ecpdg-15,cfglms-dpespg-15,ddf-eupgd-19,br-pclp-17,kkorss-eprpuig-17,kkosv-eprig-17,kkos-eprscg-15,kkkw-eprfgpg-12,cfk-eprcg-13,cdkms-crpgeprh-14,cggkl-pvrep-18}, even in settings where recognition is polynomial-time solvable.
}
\iflong{
	More broadly, we note that the result of Angelini et al.\ is in contrast to other algorithmic extension problems, e.g., on graph coloring of perfect graphs~\cite{DBLP:journals/jgt/KratochvilS97} or 3-edge coloring of cubic bipartite graphs~\cite{f-cepepbg-03}, which are both polynomially tractable but become \NP-complete if partial colorings are specified.
	Again more related to extending partial planar drawings, it is well known by Fáry's Theorem that every planar graph admits a planar straight-line drawing, but testing straight-line extensibility of partial planar straight-line drawings is generally \NP-hard~\cite{p-epsd-06}.
	Polynomial-time algorithms are known for certain special cases, e.g.,
	if the subgraph $H$ is a cycle drawn as a convex polygon, and the straight-line extension must be inside~\cite{cegl-dgpwpofpa-12} or outside~\cite{mnr-ecpdg-15} the polygon.
	Yet, if only the partial drawing is a straight-line drawing and the added edges can be drawn as polylines, Chan et al.~\cite{cfglms-dpespg-15} showed that if a planar extension exists, then there is also one, 
	where all new edges are polylines with at most a linear number of bends. 
	This generalizes a classic result by Pach and Wenger~\cite{DBLP:journals/gc/PachW01} that any $n$-vertex planar graph can be drawn on any set of $n$ points in the plane using polyline edges with $O(n)$ bends.
	Similarly, level-planarity testing takes linear time~\cite{jlm-lptlt-98}, but testing the extensibility of partial level-planar drawings is \NP-complete~\cite{br-pclp-17}.
	Recently, Da Lozzo et al.~\cite{ddf-eupgd-19} studied the extension of partial upward planar drawings for directed graphs, which is  generally \NP-complete, but some special cases admit polynomial-time algorithms.
	On the other end of the planarity spectrum, Arroyo et al.~\cite{adp-esd-19,Arroyo2019-ext1edge} studied drawing extension problems, where the number of crossings per edge is not restricted, yet the drawing must be \emph{simple}, i.e., any pair of edges can intersect in at most one point. 
	They showed that the simple drawing extension problem is \NP-complete~\cite{adp-esd-19}, even if just one edge is to be added~\cite{Arroyo2019-ext1edge}.
	Other related work also studied extensibility problems of partial representations for specific graph classes \cite{kkorss-eprpuig-17,kkosv-eprig-17,kkos-eprscg-15,kkkw-eprfgpg-12,cfk-eprcg-13,cdkms-crpgeprh-14,cggkl-pvrep-18}.
	}

	\medskip


\section{Preliminaries}
\label{sec:prelims}
\noindent \textbf{Graphs and Drawings in the Plane.}
We refer to the standard book by Diestel for basic graph terminology~\cite{Diestel}. For a simple graph $G$, let $V(G)$ be the set of its vertices and $E(G)$ the set of its edges. 

A \emph{drawing} $\cG$ of $G$ in the plane $\mathbb R^2$ is a function that maps each vertex $v \in V(G)$ to a distinct point $\cG(v) \in \mathbb R^2$ and each edge $e=uv \in E(G)$ to a simple open curve $\cG(e) \subset \mathbb R^2$ with endpoints $\cG(u)$ and $\cG(v)$.
For ease of notation we often identify a vertex $v$ and its drawing $\cG(v)$ as well as an edge $e$ and its drawing $\cG(e)$. \ifshort{Throughout the paper we will assume that: (i) no edge passes through a vertex other than its endpoints, (ii) any two edges intersect in at most one point, which is either a common endpoint or a proper \emph{crossing} (i.e., edges cannot touch), and (iii) no three edges cross in a single point.}\iflong{
	We say that a drawing $\cG$ is a \emph{good drawing} (also known as  a \emph{simple topological graph}) if (i) no edge passes through a vertex other than its endpoints, (ii) any two edges intersect in at most one point, which is either a common endpoint or a proper \emph{crossing} (i.e., edges cannot touch), and (iii) no three edges cross in a single point.
	For the rest of this paper we require that a drawing is always good.}
For a drawing $\cG$ of $G$ and $e\in E(G)$, we use $\cG-e$ to denote the drawing of $G-e$ obtained by removing the drawing of $e$ from $\cG$, and for $J\subseteq E(G)$ we define $\cG-J$ analogously.

We say that $\cG$ is \emph{planar} if no two edges $e_1, e_2 \in E(G)$ cross in $\cG$; if the graph $G$ admits a planar drawing, we say that $G$ is planar. A planar drawing $\cG$ subdivides the plane into connected regions called \emph{faces}, where exactly one face, the \emph{outer} (or \emph{external}) face is unbounded.  The \emph{boundary} of a face is the set of edges and vertices whose drawings delimit the face.
Further, $\cG$ induces for each vertex $v \in V(G)$ a cyclic order of its neighbors by using the clockwise order of its incident edges.
This set of cyclic orders is called a \emph{rotation scheme}.
Two planar drawings $\cG_1$ and $\cG_2$ of the same graph $G$ are \emph{equivalent} if they have the same rotation scheme and the same outer face; equivalence classes of planar drawings are also called \emph{embeddings}.
A \emph{plane} graph is a planar graph with a fixed embedding.
For a plane graph, its \emph{dual graph} is defined by introducing a vertex for each face, and connecting two faces by an edge, whenever they are adjacent, i.e.\ share an edge on their boundary.

A drawing $\cG$ is \emph{1-planar} if each edge has at most one crossing and a graph $G$ is \emph{1-planar} if it admits a 1-planar drawing.
Similarly to planar drawings, 1-planar drawings\iflong{ also define a rotation scheme and} subdivide the plane into connected regions, which we call \emph{cells} in order to distinguish them from the faces of a planar drawing.
The \emph{planarization} $G^\times$ of a 1-planar drawing $\cG$ of $G$ is a graph $G^\times$ with $V(G) \subseteq V(G^\times)$ that introduces for each crossing $\gamma$ of $\cG$ a \emph{dummy vertex} $v_{\gamma} \in V(G^\times)$ and that replaces each pair of crossing edges $uv, wx$ in $E(G)$ by the four \emph{half-edges} $uv_{\gamma}, vv_{\gamma}, wv_{\gamma}, xv_{\gamma}$ in $E(G^\times)$, where $\gamma$ is the crossing of $uv$ and $wx$. In addition all crossing-free edges of $E(G)$ belong to $E(G^\times)$.
Obviously, $G^\times$ is planar and the drawing $\pG$ of $G^\times$ corresponds to $\cG$ with the crossings replaced by the dummy vertices.

\medskip

\noindent \textbf{Extending 1-Planar Drawings.}
Given a graph $ G $ and a subgraph $ H $ of $ G $ with a 1-planar drawing $ \cH $ of $ H $, we say that a drawing $ \cG $ of $ G $ is an \emph{extension} of $ \cH $ {(to the graph \(G\))} if the planarization $H^\times$ of $\cH$ and the planarization $\pG$ of $\cG$ restricted to $\cH^\times$ have the same embedding. We can now define our problem of interest.

\begin{mdframed}\label{prob:extension}
	\textsc{1-Planar Drawing Extension}\\
	{\itshape Instance:}  A graph \(G\), a connected subgraph \(H\) of \(G\), and a 1-planar drawing \(\cH\) of \(H\).\\
	{\itshape Task:} Find a 1-planar extension of \(\cH\) to \(G\), or correctly identify that there is none.
\end{mdframed}

A brief discussion about the requirement of $H$ being connected is provided in the Concluding Remarks.

A \emph{solution} of an instance $(G,H,\cH)$ of \textsc{1-Planar Drawing Extension} is a 1-planar drawing $\cG$ of $G$ that is an extension of $\cH$.
We refer to \(\aV := V(G) \setminus V(H)\) as the \emph{added vertices} and to \(\aE := E(G) \setminus E(H)\) as the \emph{added edges}. 
Furthermore, we let 
$\aEpar$
be the set of added edges whose endpoints are already part of the drawing, i.e., 
$\aEpar := \left\lbrace vw \in \aE \mid v, w \in V(H) \right\rbrace$.
It is worth noting that, without loss of generality, we may assume each vertex in $\aV$ to be incident to at least one edge in $\aE$.
 \iflong{Furthermore, it will be useful to assume that $\cH$, $\pH$, $\cG$ and $\pG$ are all drawn atop of each other in the plane, i.e., vertices and edges are drawn in the same coordinates in $\cH$ and $\cG$---this allows us to make statements such as ``a solution $\cG$ draws vertex $v\in \aV$ inside face $f$ of $\pH$''.}

Since \textsc{1-Planar Drawing Extension} is \NP-complete (and remains \NP-complete even if all added edges have at least one endpoint that can be placed freely, i.e., if \(\aEpar = \emptyset\)~\cite{Aug1P-ICALP}), it is natural to strive for efficient algorithms for the case where $\cH$ is nearly a ``complete'' drawing of $G$.
Deletion distance represents a natural and immediate way of quantifying this notion of completeness.
Here, we consider the \emph{vertex+edge deletion distance} $\kappa$ between $H$ and $G$, formalized as $\kappa=|\aV|+|\aEpar|$. We note that the vertex+edge deletion distance is {in general} smaller than the edge deletion distance between $H$ and $G$.

The remainder of the paper is dedicated to a proof of Theorem~\ref{thm:main}, which is achieved by developing a polynomial-time algorithm for \textsc{1-Planar Drawing Extension} when the vertex+edge deletion distance $\kappa$ between $H$ and an $n$-vertex graph $G$ is bounded by a fixed constant.

\section{Initial Branching}
\label{sec:branching}
In this section, we introduce the first ingredient for our proof:
exhaustive branching over the choice in which cell of \(\cH\) every vertex in $ \aV $ will lie in an extension, the drawings of edges in \(\aEpar\), and the drawings of remaining added edges which cross another added edge. In order to perform the last step in polynomial time, we obtain a bound on the number of edges in $\aE$ which are pairwise crossing. This leaves us with a 1-planar drawing $ \cH' $ of some graph $ H' $ with $ H \subseteq H' \subseteq G $ such that $ \aE' = \aE \setminus E(H') $ and for every edge $ uv \in \aE' $ either $ u \in \aV $ or $ v \in \aV $ in each branch. We now provide the details of how all of this is done.

First, note that the number of faces of a planarized 1-planar drawing is linearly bounded in the number of vertices of the original graph~\cite{PachT97}.
Consequently, we can 
exhaustively branch on the choice of cells of \(\cH\) containing the drawings of added vertices in $n^{\bigoh(\kappa)}$ steps.
Recall that once we have decided into which cell of \(\cH\) each added vertex is embedded, the exact position of its embedding is irrelevant in terms of extensibility to \(G\)~\cite{DBLP:journals/gc/PachW01}.
Since $\kappa$ is a fixed constant, this polynomial-time procedure reduces our initial problem to the problem of finding an extension of \(\cH + \mathcal{V}\),
where \(\mathcal{V}\) is an embedding of \(\aV\) into cells of \(\cH\),
to a 1-planar drawing of \(G\).

In the next step we branch over the placement of some edges in $\aE$.
%
To this end, consider the structure of a 1-planar extension of \(\cH + \mathcal{V}\) to a drawing of \(G\) 
and observe that the drawing of an added edge $ e \in \aE $ might:
\begin{enumerate}[(1)]
	\item cross the drawing of at most one different edge in $ \aE $,
	\item cross the drawing of at most one edge of \(H\), or
	\item not cross any edge in $ E(G) $.
\end{enumerate}

We now show that the number of crossings arising from the first case can be bounded by a function of $\kappa$:
\begin{lemma}
		\label{lem:1pXP:few_crossings}
		In any extension of \(\cH + \mathcal{V}\) to a 1-planar drawing of \(G\) there are at most \(|\aEpar| + 3|\aV|^2\) crossings between pairs of edges from \(\aE\).
	\end{lemma}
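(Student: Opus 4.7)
The plan is to partition the crossings between pairs of edges in $\aE$ into three disjoint groups and bound each separately. First, any crossing involving an edge of $\aEpar$ can be charged to that edge; since the drawing is 1-planar, each edge is in at most one crossing, so such crossings number at most $|\aEpar|$. It remains to bound crossings between two edges $e_1, e_2 \in \aE \setminus \aEpar$. The key structural observation I would use is that such a crossing consumes the full 1-planarity budget of both $e_1$ and $e_2$, so neither can additionally cross an edge of $\cH$. Consequently each of $e_1, e_2$ is drawn entirely within a single cell of $\pH$; moreover, since their crossing point lies on both, both edges lie in the same cell $c$, so in particular the $\aV$-endpoints involved all lie in $c$.

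I would then split these remaining crossings into two cases. In \textbf{Case (i)}, both $e_1 = u_1 a_1$ and $e_2 = u_2 a_2$ have their second endpoints $a_1, a_2 \in V(H)$. My goal is to show that each unordered pair $\{u_1,u_2\}\subseteq\aV$ contributes at most $2$ such crossings. This is where the hypothesis that $H$ is connected becomes essential: it implies that $\pH$ is also connected, so the cell $c$ is a topological disk bounded by a single closed walk. All arcs $u_1 a$ with $a\in V(H)\cap \partial c$ lying inside $c$ share only the endpoint $u_1$ and thus form a non-crossing star that partitions $c$ into open ``slices.'' The vertex $u_2$ lies in a single slice $S$, bounded by (at most) two such arcs $\alpha_1,\alpha_2$. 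Any edge $u_2 b$ with $b\in V(H)\cap\partial c$ that crosses some $u_1$-arc must cross exactly one (by 1-planarity) and it must be $\alpha_1$ or $\alpha_2$, for otherwise reaching a boundary point in a non-adjacent slice would require crossing a bounding arc of $S$ first and then an additional $u_1$-arc. Since each of $\alpha_1,\alpha_2$ can itself participate in at most one crossing, the pair $\{u_1,u_2\}$ contributes at most two Case (i) crossings, giving a total of at most $2\binom{|\aV|}{2}$.

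In \textbf{Case (ii)}, at least one of $e_1, e_2$ has both endpoints in $\aV$. Since there are at most $\binom{|\aV|}{2}$ edges in $\aE$ whose both endpoints lie in $\aV$, and each participates in at most one crossing, this case contributes at most $\binom{|\aV|}{2}$ crossings. Summing the three bounds yields $|\aEpar| + 2\binom{|\aV|}{2} + \binom{|\aV|}{2} = |\aEpar| + 3\binom{|\aV|}{2} \leq |\aEpar| + 3|\aV|^2$, as required.

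The main technical obstacle is Case (i). Its slicing step relies crucially on each cell of $\pH$ being a topological disk, and this is precisely the reason the connectedness hypothesis on $H$ is indispensable: without it, cells could have multiply-connected boundaries and a star from $u_1$ would not cleanly partition $c$ into slices with the well-behaved adjacency structure needed to limit $u_2$'s routing options.
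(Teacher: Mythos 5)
Your proof is correct, and its key step takes a genuinely different route from the paper's. For the crucial bound of two crossings per pair $\{u_1,u_2\}\subseteq\aV$ between edges of the form $u_1a_1,u_2a_2$ with $a_1,a_2\in V(H)$, the paper argues by contradiction in the planarization: assuming three such crossings with dummy vertices $c_1,c_2,c_3$, it uses the connectivity of $H$ to contract all the $V(H)$-endpoints onto a single vertex $x$, producing a $K_{3,3}$ minor (parts $\{u_1,u_2,x\}$ and $\{c_1,c_2,c_3\}$) in the planar graph $\pG$. You instead argue directly and topologically: both crossing edges must lie in the common cell containing $u_1$ and $u_2$, the non-crossing star of $u_1$-arcs slices that cell, and any crossing $u_2$-edge must spend its single crossing on one of the (at most two) arcs bounding $u_2$'s slice. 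Both arguments invoke connectivity of $H$, but for different purposes — the paper to make the endpoint-contraction legitimate, you to guarantee the cell is bounded by a single closed walk so the slicing is well defined (your remark about the outer face should strictly be read on the sphere, but this is standard). The paper's minor argument is shorter; your slicing argument is more constructive and localizes exactly which two crossings can occur, which is in the spirit of the base-region analysis the paper develops later. Your accounting of the other two crossing types ($|\aEpar|$ and at most $\binom{|\aV|}{2}$ for edges inside $\aV$, versus the paper's $\bigoh(|\aV|)$ via the edge count of 1-planar graphs) also yields the claimed bound.
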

	\iflong{
		\begin{proof}
			Obviously there are at most \(|\aEpar|\) crossings involving an edge in \(\aEpar\) in an extension of \(\cH + \mathcal{V}\) to a 1-planar drawing of \(G\).
			Similarly{, due to 1-planarity,} there are at most \(|E(G[\aV])| \in \bigoh(|\aV|)\) crossings involving an edge with both endpoints in \(\aV\).
			Now, it suffices to bound the number of crossings between edges that have exactly one endpoint in \(\aV\) by \(2|\aV|^2\).
			For this we consider \(v, w \in \aV\),
			and show that at most two pairs of edges one of which is incident to \(v\) and a vertex in \(V(H)\) and the other to \(w\) and a vertex in \(V(H)\) can have a crossing.
			Then the claim of the lemma follows.
			
			Let \(\mathcal{G}\) be an extension of \(\cH + \mathcal{V}\) to a 1-planar drawing of \(G\).
			Assume, for contradiction, that there are \(v, w \in \aV\) and three pairs \((e_1, f_1), \dotsc, (e_3, f_3)\) of edges whose drawings in \(\mathcal{G}\) cross such that each \(e_i\) is an edge between \(v\) and a vertex in \(V(H)\) and each \(f_i\) is an edge between \(w\) and a vertex in \(V(H)\).
			
			Consider the planarization \(\pG\) of \(\mathcal{G}\) and let \(c_i\) be the vertex introduced for the crossing between \(e_i\) and \(f_i\) for $i=1,2,3$.
			The planarized graph contains all edges of the form \(\{v, c_i\}\), \(\{w, c_i\}\), \(\{x_{e_i}, c_i\}\) and \(\{x_{f_i}, c_i\}\) {where for an edge $q$ we let $x_q$ denote $q\setminus \{v, w\}$, i.e., here $x_q$ will be the endpoint of $q$ in $V(H)$.}
			Since we assume \(H\) to be connected, we can contract the \(x_e\) to a single vertex \(x\) along paths in \(G[H]\) without contracting the edges \(\{v, c_i\}\), \(\{w, c_i\}\) and \(\{x, c_i\}\).
			By assumption \(i = 3\) means we have found a \(K_{3,3}\) minor in a planarized graph, which is a contradiction.
		\end{proof}
	}
	\begin{myremark}
		\label{rem:ce-crossing-bound}
		The claim of Lemma~\ref{lem:1pXP:few_crossings} does not hold if we allow \(H\) to be disconnected. Indeed, Figure~\ref{fig:ce-crossing-bound}
		illustrates how to construct a series of instances with \(|\aV| = 2\) that require \(\bigoh(|V(H)|)\) pairwise crossings between edges in \(\aE\) in \emph{every} solution. 
	\end{myremark}
	\begin{figure}
		\centering
					\includegraphics{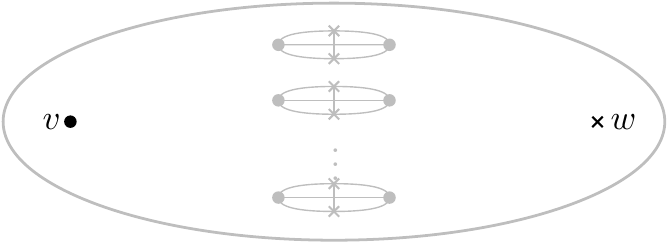}
		\vspace{-0.1cm}
		\caption{Example for Remark~\ref{rem:ce-crossing-bound}.
			\(\cH\) is gray, \(\mathcal{V} = \{v,w\}\), and \(\aE\) connects \(v\) to vertices in \(H\) marked by \(\bullet\) and \(w\) to vertices in \(H\) marked by \(\times\).
			\label{fig:ce-crossing-bound}
		}
		\vspace{-0.4cm}
	\end{figure}
	
\iflong{
Lemma~\ref{lem:1pXP:few_crossings} allows us to apply exhaustive branching to determine which edges will be crossed and how. In particular, we first branch to determine the number $\ell\leq 4\kappa^2$ of edge crossings between pairs of edges from $\aE$, and then which two edges will be involved in the first, second, third,$\ldots$ $\ell$-th crossing. The number of such branches can be upper-bounded by $n^{\bigoh(\kappa^2)}$. Moreover, we note that a pair of pairwise crossing edges can either be drawn inside only one cell, or one of two possible cells (since these edges cannot have any further crossings), and in the latter case we also branch to determine which cell they will be drawn in---after this step, the total number of branches remains upper-bounded by $n^{\bigoh(\kappa^2)}$.

Next, we observe that the drawings of each pair of crossing edges subdivides one cell of \(\cH\) into at most four new cells. Each vertex in $\aV$ that was assigned to the original cell will now belong to one of these four new cells, and we will once again employ branching to determine this. In particular, whenever we place a pair of crossing edges into a cell containing $\ell\leq \kappa$ vertices in $\aV$, we branch to determine which of the four new cells will the added vertices be drawn in; this represents an additional branching factor of at most $4^{\bigoh(\ell)}$ per pair of crossing edges, and hence at most $4^{\bigoh(\kappa^3)}$ in total.
	
    In each branch we can check in polynomial time if the obtained partition induces a partial 1-planar drawing and modify \(\cH\) appropriately if yes; if not, then we discard the corresponding branch.
	Let us call the drawing extension of \(\cH\) up to this point \(\cH_{\text{cross}}\) and the graph that is drawn up to this point \(H_{\text{cross}}\).
	Recall that, based on our branching procedure, we will explicitly assume that drawings of edges missing in \(\cH_{\text{cross}}\) do not mutually intersect in the possible solutions of this branch---in other words, these can either intersect drawings of an edge in \(E(H)\) or no edge.
	
	Moving on, the number of added edges in \(E(G) \setminus E(H_{\text{cross}})\) with both endpoints in \(\aV\) or both endpoints in \(V(H)\) is easily seen to be bounded by \(\kappa^2\).
	This means that we can once again apply exhaustive branching to determine which edge of \(\cH_\text{cross}\), if any, each such edge crosses; if it does not cross another edge of $\cH$ then we will also branch to determine which cell the edge should be drawn in. Altogether, this induces a branching factor of at most $n^{\bigoh(\kappa^2)}$.	
	As above for crossings between edges in \(\aE\), we then branch on the subsets of \(\aV\) arising from the subdivisions of cells in \(\cH_{\text{cross}}\).
	This time, each newly added edge can subdivide at most two cells into at most four subcells in total.
	In this way we consider $4^{\bigoh(\kappa^3)}$ additional branches in total, and---as before---we check whether each branch induces a 1-planar drawing and discard those which do not.
	
	Once again, each branch induces a choice which determines the drawings of the edges involved in such crossings up to extendability.
	
	{Similarly, we connect all connected components which are not yet connected to \(H\) in the partially drawn graph so far by branching on the drawings of at most \(\kappa\) further edges to achieve a drawing of a subgraph of \(G\), each of whose connected components is either connected to \(H\) or all of whose edges are drawn.
	Connected components of the latter type are no longer relevant for extending the respective partial drawing to \(G\) since they contain no endpoints of missing edges and can play no role in separating such endpoints, which is why we omit them from all further considerations.}
	
	A branch for \(\cH_{\text{cross}}\) can be translated into a choice for a 1-plane drawing extension \(\cH'\) of \(\cH_{\text{cross}}\) to a graph \(H_{\text{cross}} \subseteq H' \subseteq G\) in a straightforward way. 
		We formalize this problem and the corresponding statement below. Let a 1-planar extension of $\cH'$ be \emph{untangled} if edges in $E(G)\setminus E(H')$ are mutually non-crossing.
}
\ifshort{
Lemma~\ref{lem:1pXP:few_crossings} allows us to apply exhaustive branching to determine which edges will be crossed, which cell they will be crossed in, and how the previously placed vertices in $\aV$ will be distributed to the new cells created by adding these edges. 
Moving on, the number of added edges with both endpoints in \(\aV\) or both endpoints in \(V(H)\) is easily seen to be bounded by \(\kappa^2\). This means that for each such edge $e$ we can once again apply exhaustive branching to determine which of the edges already present in $\cH$ at this point, if any, $e$ crosses; if $e$ does not cross another edge of $\cH$ then we will also branch to determine which cell $e$ should be drawn in. This, too, requires us to branch on how the vertices in $\aV$ will be distributed to the cells created by these newly placed edges. 
{Similarly, we connect all connected components which are not yet connected to \(H\) in the partially drawn graph so far by branching on the drawings of at most \(\kappa\) further edges to achieve a drawing of a subgraph of \(G\), each of whose connected components is either connected to \(H\) or all of whose edges are drawn.
Connected components of the latter type are no longer relevant for extending the respective partial drawing to \(G\) since they contain no endpoints of missing edges and can play no role in separating such endpoints, which is why we omit them from all further considerations.}

After performing these steps, we are left with a simplified problem which we formally define below.
}
		
\begin{mdframed}\label{prob:extension}
	\textsc{Untangled $\kappa$-Bounded 1-Planar Drawing Extension}\\
	{\itshape Instance:}  A graph \(G\), a
	connected
	subgraph \(H'\) of \(G\) with $V(H')=V(G)$ and with at most $\kappa$ marked vertices such that every edge in $E(G)\setminus E(H')$ has precisely one marked endpoint, and a 1-planar drawing \(\cH'\) of \(H'\).\\
	{\itshape Task:} Find an untangled 1-planar extension of \(\cH'\) to \(G\), or correctly identify that there is none.
\end{mdframed}
Here, a 1-planar extension of $\cH'$ to \(G\) is \emph{untangled} if edges in $E(G)\setminus E(H')$ are mutually non-crossing.
Note that the marked vertices in the problem statement are precisely the vertices in $\aV$. 
\iflong{By applying the branching rules introduced above, we obtain:}\ifshort{By applying the abovementioned branching rules, we obtain:}
\begin{corollary}
\label{cor:reduction}
For every fixed $\kappa$, there is a $n^{\bigoh(\kappa^3)}$-time Turing reduction from \textsc{1-Planar Drawing Extension} restricted to instances of bounded $\kappa$ to \textsc{Untangled $\kappa$-Bounded 1-Planar Drawing Extension}.
\end{corollary}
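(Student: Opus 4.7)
My plan is to assemble the reduction directly from the branching steps whose groundwork has been laid earlier in this section, composing them sequentially while tracking the branching factor so that the claimed $n^{\bigoh(\kappa^3)}$ bound is met. The starting point is an instance $(G, H, \cH)$ of \textsc{1-Planar Drawing Extension} with vertex+edge deletion distance at most $\kappa$, and each branch should produce a well-defined instance of \textsc{Untangled $\kappa$-Bounded 1-Planar Drawing Extension} with the property that the original instance admits a solution if and only if at least one of the produced sub-instances does.

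First I would branch on the cell of $\cH$ in which each vertex in $\aV$ is drawn; since the planarization $\pH$ has only linearly many cells and $|\aV| \le \kappa$, this step contributes a factor $n^{\bigoh(\kappa)}$, and the cited coordinate-freedom result guarantees that the exact position within a cell is irrelevant. Next, using the crossing bound provided by Lemma~\ref{lem:1pXP:few_crossings}, I would branch over each of the $\bigoh(\kappa^2)$ possible mutual crossings between added edges: for each crossing I would guess the pair of edges involved, the cell hosting the crossing, and the way the vertices of $\aV$ lying in that cell distribute among the at most four subcells produced. After this I would treat the at most $\kappa^2$ remaining added edges whose endpoints are both in $\aV$ or both in $V(H)$: for each such edge I would guess either the single edge of the current partial drawing it crosses or the cell it lies in crossing-free, and again branch on the induced redistribution of $\aV$-vertices among the resulting subcells. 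Finally, to ensure the subgraph drawn so far is connected to $H$, I would branch on the drawings of at most $\kappa$ further edges connecting leftover components back to $H$; any component that remains detached after this step contains no endpoints of missing edges and may safely be dropped.

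In every branch I would discard the sub-instance the moment the committed choices are inconsistent with a 1-planar drawing; otherwise the branch produces a triple $(G, H', \cH')$ with $H \subseteq H' \subseteq G$, $V(H') = V(G)$, at most $\kappa$ marked vertices (namely $\aV$), and every remaining added edge incident to exactly one of them, which matches precisely the input format of \textsc{Untangled $\kappa$-Bounded 1-Planar Drawing Extension}. Correctness follows because every hypothetical solution of the original instance makes a unique choice at each branching step, so at least one branch yields a satisfiable sub-instance, whereas spurious branches generate infeasible sub-instances that are either discarded during the consistency check or simply have no untangled extension. The main obstacle I anticipate is the bookkeeping needed to guarantee that the various independent branching steps do not conflict and that their product stays within $n^{\bigoh(\kappa^3)}$; the dominant contributions are the $n^{\bigoh(\kappa^2)}$ factor from choosing the crossings together with a constant $4^{\bigoh(\kappa^3)}$ factor from the cell redistributions of $\aV$-vertices, which together comfortably fit within the asserted bound for fixed $\kappa$.
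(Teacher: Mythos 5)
Your proposal is correct and follows essentially the same route as the paper: branching on the cells hosting the added vertices, then using Lemma~\ref{lem:1pXP:few_crossings} to branch on the $\bigoh(\kappa^2)$ mutual crossings of added edges and the redistribution of $\aV$-vertices into the resulting subcells, then handling the at most $\kappa^2$ added edges with both endpoints in $\aV$ or both in $V(H)$, connecting stray components, and discarding infeasible branches. The accounting of the branching factors ($n^{\bigoh(\kappa^2)}$ for the crossing choices times a $4^{\bigoh(\kappa^3)}$ factor for the subcell assignments) also matches the paper's analysis.
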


%
%
%
%
%
%
	
The following observation about the obtained instances of 	\textsc{Untangled $\kappa$-Bounded 1-Planar Drawing Extension} will be useful later on.
\begin{observation}
		\label{obs:H'avertexdeg}
		The degree
		of every vertex in \(\aV\) in the graph \(H'\) lies in \(\bigoh(\kappa^2)\).
	\end{observation}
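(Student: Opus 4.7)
The plan is to directly analyze which edges of $H'$ can be incident to a given vertex $v\in \aV$ and to bound them by separately accounting for each of the branching rounds carried out in Section~\ref{sec:branching}. Since $v\in \aV = V(G)\setminus V(H)$, no edge of the original subgraph $H$ is incident to $v$, so every edge meeting $v$ in $H'$ must have been inserted into $H'$ during one of those rounds, and every such edge belongs to $\aE$.

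First I would account for edges of $\aE$ that cross another edge of $\aE$ in the envisioned solution: by Lemma~\ref{lem:1pXP:few_crossings} there are at most $|\aEpar|+3|\aV|^2\in \bigoh(\kappa^2)$ such crossings in any extension, hence the branching round devoted to drawing pairwise-crossing added edges commits at most $\bigoh(\kappa^2)$ edges into $H'$, and in particular at most $\bigoh(\kappa^2)$ of them can be incident to $v$. Second, the branching round that handles added edges with both endpoints in $\aV$ (those with both endpoints in $V(H)$ cannot be incident to $v$) contributes at most $|\aV|-1<\kappa$ edges incident to $v$. Third, the concluding connectivity step adds at most $\kappa$ edges overall, so at most $\kappa$ of them can touch $v$.

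Summing the three contributions gives $\bigoh(\kappa^2)+\bigoh(\kappa)+\bigoh(\kappa)=\bigoh(\kappa^2)$, which is exactly the claimed bound. I expect no real obstacle: the argument is essentially a bookkeeping exercise over the three branching rounds, and the only nontrivial ingredient it invokes is the already-established Lemma~\ref{lem:1pXP:few_crossings}.
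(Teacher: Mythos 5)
Your proposal is correct and follows essentially the same route as the paper: the paper's proof simply notes that since $v\in\aV$ every edge of $H'$ at $v$ lies in $E(H')\setminus E(H)$, and that $|E(H')\setminus E(H)|\in\bigoh(\kappa^2)$ ``by construction'' --- your round-by-round bookkeeping is exactly the expansion of that ``by construction'' claim. The only cosmetic difference is that you bound the edges incident to $v$ directly rather than bounding $|E(H')\setminus E(H)|$ as a whole, which changes nothing of substance.
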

	\iflong{
	\begin{proof}
		Let \(v \in \aV\).
		All edges incident to \(v\) in \(H'\) are in \(E(H')\setminus E(H)\), and by construction it follows that $|E(H')\setminus E(H)|\in \bigoh(\kappa^2)$.
	\end{proof}
	}
	
	
	\section{Base Regions}
	\label{sec:baseregions}
	
	\newcommand{\nice}{untangled 1-planar}
	
Let us now consider an instance $(G, H', \cH')$ obtained by Corollary~\ref{cor:reduction}. 
	To {simplify} terminology we refer to edges in \(E(G) \setminus E(H')\) as \emph{new edges}.
	\iflong{
	Recall that all new edges are incident to exactly one vertex in \(\aV\) and we want to find an extension $ \cG $ of \(\cH'\) to \(G\) in which the drawings of the new edges only cross edges in \(E(H')\), i.e., an \nice\ extension.}
	
	We will now show that in the planarization of any hypothetical \nice\ extension \(\cG\) of \(\cH'\) to \(G\) we can identify parts of cells of \(\cH'\) which only contain parts of drawings of edges in \(E(\pG) \setminus E(H')\) that are incident to a specific \(v \in \aV\).
	We will call such subsets \emph{base regions} and associate each region to the corresponding \(v \in \aV\).
	Intuitively, base regions of \(v\) determine designated areas \iflong{of the plane }in which new edges incident to $ v $ can start.
	
	\begin{definition}
		\label{def:baseregions}
		A \emph{base region} of some \(v \in \aV\) in \(\cG\) is an inclusion maximal connected subset of a cell of \(\cH'\) containing \(v\), which
		\begin{itemize}
			\item does not contain \(v\);
			\item is bounded by parts of \(\cH'\) and drawings of edges in \(E(\pG) \setminus E(\cH'^{\times})\) which are incident~to~\(v\);
			\item contains the drawing of at least one edge in \(\pG\) which is incident to \(v\); and
			\item contains no drawing of an edge in \(E(G^{\times}) \setminus E(H'^{\times})\) which is incident to some \(w \in \aV \setminus \{v\}\).
		\end{itemize}
	\end{definition}
	\begin{myremark}
	\label{rem:basedifficult}
		An illustration of Definition~\ref{def:baseregions} is provided in Figure~\ref{fig:baseregions}; notice that drawings of new edges of $G$ (not $G^\times$)  with marked endpoints different from $v$ can still intersect the interior of the base region of $v$.
	\end{myremark}
	
	Fixing the boundaries of base regions of all vertices in \(\aV\) to find a hypothetical solution with these base regions determines which edges of \(H'\) can be crossed to draw new edges incident to each region-specific vertex. However, base regions do not give explicit structural restrictions on the drawings of new edges beyond the point at which they cross edges of \(H'\).
	%
	

\ifshort{	
\begin{figure}
  \begin{minipage}[c]{0.6\textwidth}
  \centering
		\includegraphics[width=0.8\textwidth,page=2]{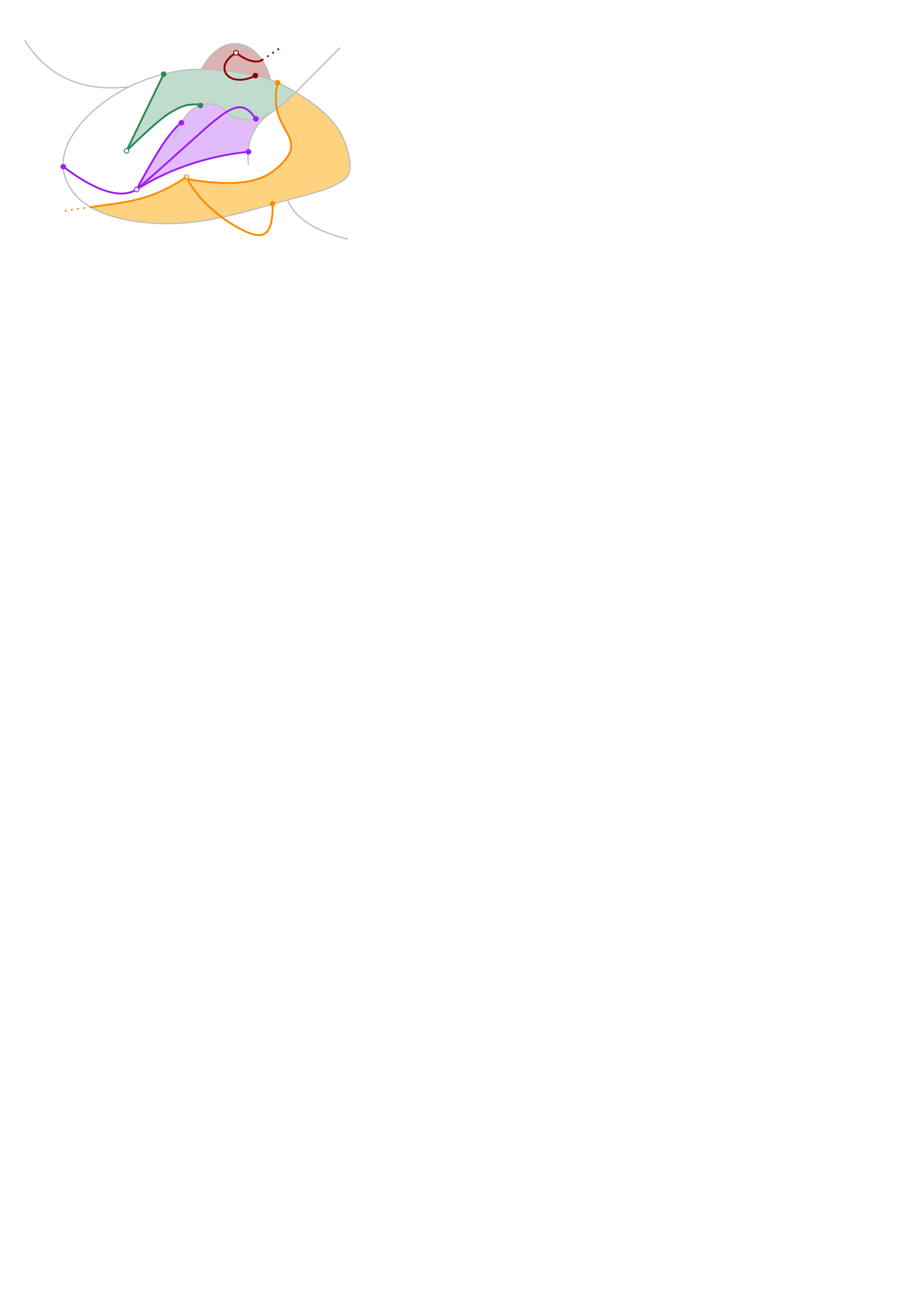}
  \end{minipage}\hfill
  \begin{minipage}[c]{0.4\textwidth}
\caption{Illustration of Definition~\ref{def:baseregions}. 
			{Part of }\(\cH'\) is dark-gray, the rest of $ \cH' $ is {indicated} by the light-gray background. %
			Vertices in $ \aV $ are the square marks. %
			Thick colored edges bound the base regions and thin edges are inserted into them. %
			{E.g.\ an edge that intersects a base region of a vertex other than its marked endpoint (see Remark~\ref{rem:basedifficult}) is the orange edge that intersects a green base region.}
			}
		\label{fig:baseregions}
		  \end{minipage}
		  \end{figure}
  }

\iflong{
	\begin{figure}[btp]
		\centering
		\includegraphics[width=.6\textwidth,page=2]{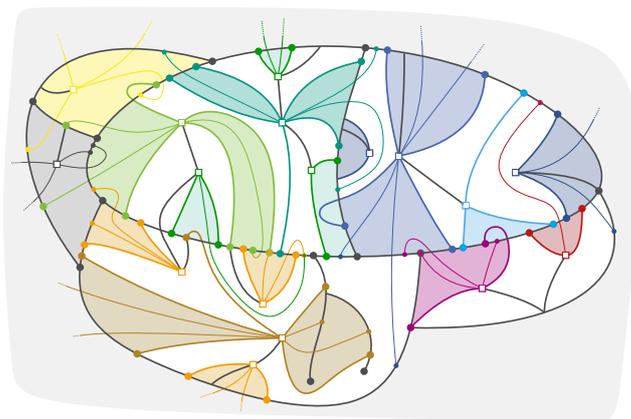}
\caption{Illustration of Definition~\ref{def:baseregions}. 
			\(\cH'\) is the dark-gray graph and the rest of $ \cH' $ is hinted by the light-gray background, %
			the vertices in $ \aV $ are the square marks. %
			Thick colored edges bound the base regions and thin edges are inserted into them. %
			The figure illustrates a few special situations. %
			There are three base regions that each consist only of a single edge (one green, brown, and blue). %
			There is one green base region in which no further edges are added. %
			The yellow and grey vertices have their whole cells as base regions since no other new vertex is inside those cells. %
			The brown vertex has a base region lying completely inside a cell of $ \cH' $ %
			since the part of $ \cH' $ that bounds the base region is 1-connected in $ H' $.%
			\vspace{-0.5cm}}
		\label{fig:baseregions}
	\end{figure}
	}
	

	\begin{myremark}
		\label{rem:baseregions}
		The following basic facts about base regions are easily seen:
		\begin{enumerate}
						\item \label{rem:baseregioncell}\(v \in \aV\) only has base regions in cells containing \(v\) (on their boundary or their interior).
			\item Two base regions intersect only in the boundary of a cell of \(\cH'\).
		\end{enumerate}
	\end{myremark}
	
Our aim for the remainder of this section is to show that we can branch to determine the boundaries of base regions. To this end, it suffices to show that the number of base regions is bounded by a function of \(\kappa\).	
	First, we prove an auxiliary proposition that we then use to show that the number of base regions in each cell of \(\cH'\) lies in \(\bigoh(\kappa)\).
\begin{proposition}
		\label{prop:singlebase}
		In every \nice\ extension of \(\cH'\) to \(G\), each cell of \(\cH'\) that contains an added vertex contains at least one added vertex that has precisely one base region in that cell.
	\end{proposition}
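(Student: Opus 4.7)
The plan is to identify via an extremal argument an added vertex $v^* \in S := \aV \cap \partial C$ (where $\partial C$ is the topological boundary of $C$) whose clean territory in $C$ forms exactly one inclusion-maximal connected subset satisfying the conditions of Definition~\ref{def:baseregions}.

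First, I would exploit the untangled property: since new edges incident to distinct added vertices are pairwise non-crossing, for each $v \in S$ the substructure $T_v$ formed by $v$ together with its $v$-incident new edges drawn in $C$ is a planar star rooted at $v$ with its leaves on $\partial C$ (at non-marked vertices or at dummy crossings with $\cH'$-edges). For any pair $v \ne w \in S$, because $T_w$ is connected, rooted on $\partial C$, and cannot cross $T_v$, the entire star $T_w$ must lie inside a single sector of the subdivision of $C$ induced by $T_v$. Consequently, the family $\{T_v : v \in S\}$ acquires a laminar, nested structure inside $\bar C$.

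Second, using this nested structure, I would argue that there exists an added vertex $v^* \in S$ for which all other $w \in S \setminus \{v^*\}$ lie in sectors of $T_{v^*}$ that form a contiguous angular range at $v^*$. The existence of such a ``one-sided'' vertex follows from a planarity/topological argument on the nested star arrangement: if no vertex were one-sided, the stars would have to interleave around $\partial C$ in a cyclic pattern incompatible with a planar non-crossing embedding of trees rooted on the boundary of a disk. For this chosen $v^*$, the clean sectors of $T_{v^*}$---those whose $\partial C$-arc contains no vertex of $S \setminus \{v^*\}$---form a single contiguous angular range at $v^*$, and I would verify that their union, together with the bounding $v^*$-edges and the $\cH'$-arcs, yields a single inclusion-maximal connected subset of $C$ that satisfies all four conditions of Definition~\ref{def:baseregions}.

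The main obstacle is the rigorous proof of the existence of such an extremal $v^*$. Although intuitively clear from the nested-star picture, formalizing the required contiguity statement---particularly in the presence of sectors hosting multiple other added vertices simultaneously, or arrangements exhibiting deep recursive nesting---calls for either a careful induction on $|S|$ or an analysis of a suitable ``tree of stars'' dual structure obtained by contracting each $T_v$ to a point and working with the resulting planar contact graph on $\bar C$. A further subtlety is the interaction with $\cH'$-arcs and vertices at the endpoints of $v^*$-edges on $\partial C$, which must be treated carefully when verifying inclusion-maximality against competing candidate regions.
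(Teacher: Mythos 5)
Your first step---that untangledness makes the stars $T_v$ pairwise non-crossing, so that each $T_w$ is confined to a single sector of the subdivision induced by $T_v$ and the whole arrangement is laminar---is exactly the observation the paper's proof also rests on. The gap lies in the extremal property you then try to extract. You ask for a $v^*$ such that the other added vertices occupy sectors forming a \emph{contiguous angular range}; this is too weak to yield ``precisely one base region''. If the dirty sectors of $T_{v^*}$ are $D_1,\dotsc,D_k$ with $k\ge 2$, then every $v^*$-incident (half-)edge having dirty sectors on both of its sides is itself an inclusion-maximal connected set that is bounded only by $v^*$-edges, contains a $v^*$-incident edge of $\pG$, and contains no half-edge of another added vertex---that is, a degenerate single-edge base region in the sense of Definition~\ref{def:baseregions} (such single-edge base regions appear explicitly in Figure~\ref{fig:baseregions}, and the tightness example for Proposition~\ref{prop:baseregpercell} is built from exactly these configurations). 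Your $v^*$ would therefore have $1+(k-1)=k$ base regions, and in any case you never verify the ``precisely one'' half of the claim for the region you construct. What is actually needed is a $v^*$ for which all other added vertices (hence, by your own laminarity observation, all of their half-edges) lie in a \emph{single} sector of $T_{v^*}$; equivalently, all of $v^*$'s edges in the cell form one clean fan.

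The paper obtains this stronger property, and at the same time disposes of the existence question you flag as your main obstacle, by making a different family laminar: for each $v$ it considers the region $S_v$ spanned by the two \emph{outermost} $v$-incident edges (with respect to a traversal of the cell boundary) together with the boundary stretch between their non-$v$ endpoints, so that $S_v$ contains \emph{all} of $v$'s edges in the cell. Untangledness makes $\{S_v\}_v$ laminar, a minimal element $S_{v^*}$ contains no other added vertex, and since every $v^*$-edge lies inside $S_{v^*}$ there is no leftover edge that could spawn a second base region; hence $S_{v^*}\setminus\{v^*\}$ is the unique base region of $v^*$. Your argument can be repaired along the same lines by replacing ``contiguous range of dirty sectors'' with ``single dirty sector'' and deriving its existence from minimality in this laminar family of spans rather than from an interleaving argument on the stars themselves.
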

	\iflong{
	\begin{proof}
		Let \(\cG\) be an \nice\ extension of \(\cH'\) to \(G\).
		(If there is none, we are done.)
		Consider a cell \(c\) of \(\cH'\) that contains at least one added vertex.
		Traverse the boundary of \(c\) starting from an arbitrary vertex in counterclockwise direction.
		Note that it might be possible that in this way edges and vertices of \(H'\) may occur two times in this traversal.
		
		Now consider the family \(\mathcal{F}\) of subsets of \(c\) that is bounded by the outermost (in the ordering given by the described traversal) edges in \(E(G^{\times}) \setminus E(H'^{\times})\) which are incident to each \(v \in \aV\) contained in \(c\) and the counterclockwise stretch of the boundary \(c\) between the non-\(v\) endpoints of these edges.
		
		Because we assume that new edges do not mutually cross each other in \(\cG\), \(\mathcal{F}\) of sets is laminar.
		In particular at the lowest level of this laminar family, we find \(v \in \aV\) such that the set \(S_v \in \mathcal{F}\) corresponding to \(v\) does not contain any \(w \in \aV \setminus \{v\}\).
		We show that \(S_v \setminus \{v\}\) is in fact \(v\)'s only base region in \(c\).
		
		We verify that \(S_v \setminus \{v\}\) is a base region of \(v\) in \(c\):
		\begin{itemize}
			\item Obviously \(v \notin S_v \setminus \{v\}\).
			\item By construction of all sets in \(\mathcal{F}\), \(S_v\) is bounded by parts of \(\cH'\) and drawings of edges in \(E(G^{\times}) \setminus E(H'^{\times})\) which are incident to \(v\).
			\item Since for all \(w \in \aV\), \(w\) is separated by the boundary of \(S_v\) from \(S_v\) \(S_v \setminus \{v\}\) cannot contain the drawing of an edge in \(E(G^{\times}) \setminus E(H^{\times})\) which is incident to such a \(w\).
		\end{itemize}
		Finally note that \(S_v \setminus \{v\}\) is connected by construction and because the bounding edges that are incident to \(v\) were chosen to be outermost, inclusion maximal with the previously mentioned criteria.
		
		Moreover, because the edges bounding \(S_v\) were chosen to be outermost, \(S_v\) contains all drawings of edges in \(E(G^{\times}) \setminus E(H'^{\times})\) in \(c\) which are incident to \(v\), and thus \(S_v \setminus \{v\}\) is \(v\)'s only base region in \(c\).
	\end{proof}
	}

\begin{proposition}
		\label{prop:baseregpercell}
		In every \nice\ extension \(\cG\) of \(\cH'\) to \(G\), the total number of base regions  in every cell of \(\cH'\) is at most \(\max(1, 2(\kappa - 1))\).
	\end{proposition}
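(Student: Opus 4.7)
The plan is to exploit the laminar family $\mathcal{F}=\{S_v\}$ introduced in the proof of Proposition~\ref{prop:singlebase}. Let $k$ denote the number of added vertices in $c$ that have at least one incident new edge drawn in $c$; clearly $k \leq \kappa$. If $k=0$ there are trivially no base regions, and if $k=1$ Proposition~\ref{prop:singlebase} itself immediately yields exactly one base region. So the main task is to prove the bound $2(k-1)$ for the case $k \geq 2$.

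For that case I would first define a forest $T$ on the $k$ relevant added vertices by letting the parent of $v$ be the vertex $u$ for which $S_u$ is the smallest strict superset of $S_v$ in $\mathcal{F}$. Next observe that the $d_v$ new edges of $v$ drawn in $c$ emanate from $v$ and divide $c$ into $d_v$ angular \emph{sectors} in the cyclic order around $v$, each bounded by two consecutive $v$-edges and an arc of $\partial c$. Since the drawing is untangled, no edge incident to another added vertex $w$ can cross a $v$-edge, so the entire drawing of $w$ together with its new edges in $c$ lies in a single sector of $v$; by laminarity this is one of the $d_v-1$ sectors contained in $S_v$ if $w$ is a descendant of $v$ in $T$, and the single remaining outer sector otherwise. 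Call a sector \emph{occupied} if it contains an edge incident to some $w \neq v$, and \emph{empty} otherwise. The key structural observation is that the base regions of $v$ are in bijection with the maximal cyclic runs of empty sectors of $v$, so the number of base regions of $v$ is bounded above by the number of occupied sectors.

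Writing $m_v$ for the number of children of $v$ in $T$ and $n_v$ for the number of added vertices in $c$ not descendant from $v$ in $T$, I argue that the number of occupied sectors of $v$ is at most $m_v + [n_v \geq 1]$: at most $m_v$ inner sectors (one per child, since distinct children can share a sector), plus possibly the single outer sector. Letting $R$ be the number of trees in $T$, summing over $v$ yields
\[
\sum_v(\text{base regions of } v) \leq \sum_v m_v + \sum_v [n_v \geq 1] = (k-R) + \sum_v[n_v \geq 1].
\]
A two-case finish suffices. If $R=1$, only the unique root has $n_v=0$, so $\sum_v[n_v \geq 1]=k-1$ and the total is at most $(k-1)+(k-1)=2(k-1)$. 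If $R \geq 2$, every $v$ has $n_v \geq 1$, so $\sum_v[n_v \geq 1]=k$ and the total is at most $(k-R)+k \leq 2(k-1)$.

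The main subtlety is the asymmetric treatment of the outer sector of $v$ — which can absorb arbitrarily many ancestors and non-relatives yet contributes at most one occupied sector — versus the inner sectors, each charged individually to a child in $T$. For the bound on base regions in terms of occupied sectors to be valid one must also verify that when $k \geq 2$ no $v$ can have all its sectors empty (otherwise the cyclic count of empty runs would be $1$ instead of matching occupied runs); this is immediate since every such $v$ has at least one child or at least one non-descendant and hence at least one occupied sector.
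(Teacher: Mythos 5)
Your route is genuinely different from the paper's: the paper argues by induction on $\kappa$, using Proposition~\ref{prop:singlebase} to locate an innermost vertex $v$ with a single base region and then claiming that re-inserting $v$'s edges increases the count in the cell by at most two (one region for $v$ plus at most one extra region for the vertex $w$ whose region it splits). Your global count over the forest induced by the laminar family $\{S_v\}$, with the $R=1$ versus $R\ge 2$ case split at the end, is a more explicit and quantitative way to organise the same structural facts (laminarity and untangledness), and the final arithmetic is correct given your premises.

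The gap is in the premise that the $d_v$ new edges at $v$ cut the cell $c$ into $d_v$ sectors \emph{in cyclic order around $v$}; that picture presumes $v$ lies in the interior of $c$. In the instances produced by Corollary~\ref{cor:reduction}, $H'$ is connected with $V(H')=V(G)$, so every $v\in\aV$ already has an incident edge in $H'$ and hence lies on the \emph{boundary} of every cell containing it (cf.\ Remark~\ref{rem:baseregions} and Observation~\ref{obs:H'avertexdeg}). The $d_v$ arcs then cut $c$ into $d_v+1$ sectors arranged \emph{linearly} between the two germs of $\cH'$-edges at $v$. This breaks the count in two places: the ``single remaining outer sector'' can be two extreme sectors, each occupiable by a different non-descendant, so the outer contribution is $\min(2,n_v)$ rather than $[n_v\ge 1]$; and in a linear arrangement the number of maximal empty runs can exceed the number of occupied sectors by one. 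Either correction ruins the $2(\kappa-1)$ total --- for instance, two nested vertices whose unique occupied sector each sits in a middle position of its linear arrangement would yield four regions for $\kappa=2$. To rescue the argument you must show that the two extreme sectors are effectively cyclically adjacent, i.e.\ that a base region may be connected \emph{through} the open $\cH'$-edges incident to $v$ (Definition~\ref{def:baseregions} only constrains what \emph{bounds} a base region, and the paper's own assertion in Proposition~\ref{prop:singlebase} that $S_v\setminus\{v\}$ is connected tacitly relies on such a reading); this is plausible when the two germs belong to one pendant edge, but needs a separate argument when they belong to distinct edges of $H'$ whose far sides lie in other cells. As written, the step ``base regions of $v$ are in bijection with maximal cyclic runs of empty sectors, hence at most the number of occupied sectors'' is not justified in the actual setting of the proposition.
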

	\iflong{
	\begin{proof}
		We will prove the proposition by induction on \(\kappa\). Consider first the case that \(\kappa \le 2\).
		If each cell \(c\) of \(\cH'\) that contains at most one vertex in \(\aV\), then obviously \(c\) is the only base region within \(c\) in \(\cG\). Now if a cell \(c\) contains exactly two vertices, \(u,v\), of \(\aV\), then the proposition follows rather straightforwardly from the fact that the new edges do not cross in any \nice\ extension of \(\cH'\) and that the base regions for \(u\) and \(v\) are bounded by new edges with endpoints in \(u\) and \(v\), respectively. 
		
		Now, consider the case that \(\kappa > 2\). 	
		Consider a cell \(c\) of \(\cH'\) which contains some vertices in \(\aV\).
		By Proposition~\ref{prop:singlebase} there is \(v \in \aV\) such that \(v\) has exactly one base region within \(c\) in \(\cG\).
		By induction hypothesis, \(\cG\) restricted to \(G - \{e \in E(G) \setminus E(H') \mid v \in e\}\) has at most \(2(\kappa - 2)\) base regions within \(c\).
		The base region of \(v\) in \(c\) is immediately contained in at most one base region \(b\) within \(c\) in \(\cG\) restricted to \(G - \{e \in E(G) \setminus E(H') \mid v \in e\}\).
		Let \(w \in \aV \setminus \{v\}\) be the vertex for which \(b\) is a base region in \(\cG\) restricted to \(G - \{e \in E(G) \setminus E(H') \mid v \in e\}\).
		Once one takes \(v\) into consideration, \(b\) is subdivided into two base regions for \(w\), each of them bounded by an edge in \(E(G^{\times}) \setminus E(H'^{\times})\) incident to \(w\) bounding \(b\) and an edge in \(E(G^{\times}) \setminus E(H'^{\times})\) incident to \(w\) that is closest possible with respect to a cyclical traversal of the boundary of \(c\) to the boundary of \(v\)'s base region.
		This means the number of base regions within \(c\) in \(\cG\) is at most two larger than the number of base regions within \(c\) in \(\cG\) restricted to \(G - \{e \in E(G) \setminus E(H') \mid v \in e\}\) (\(v\)'s base region and at most 1 additional base region for \(w\)).
	\end{proof}
	}
	\begin{myremark}
		\label{rem:tightbase}
The bound in the proof of Proposition~\ref{prop:baseregpercell} is tight;
		see Figure~\ref{fig:manybases}.
	\end{myremark}
	
	\begin{figure}[btp]
		\centering
		\includegraphics[page=\iflong{2}\ifshort{3}]{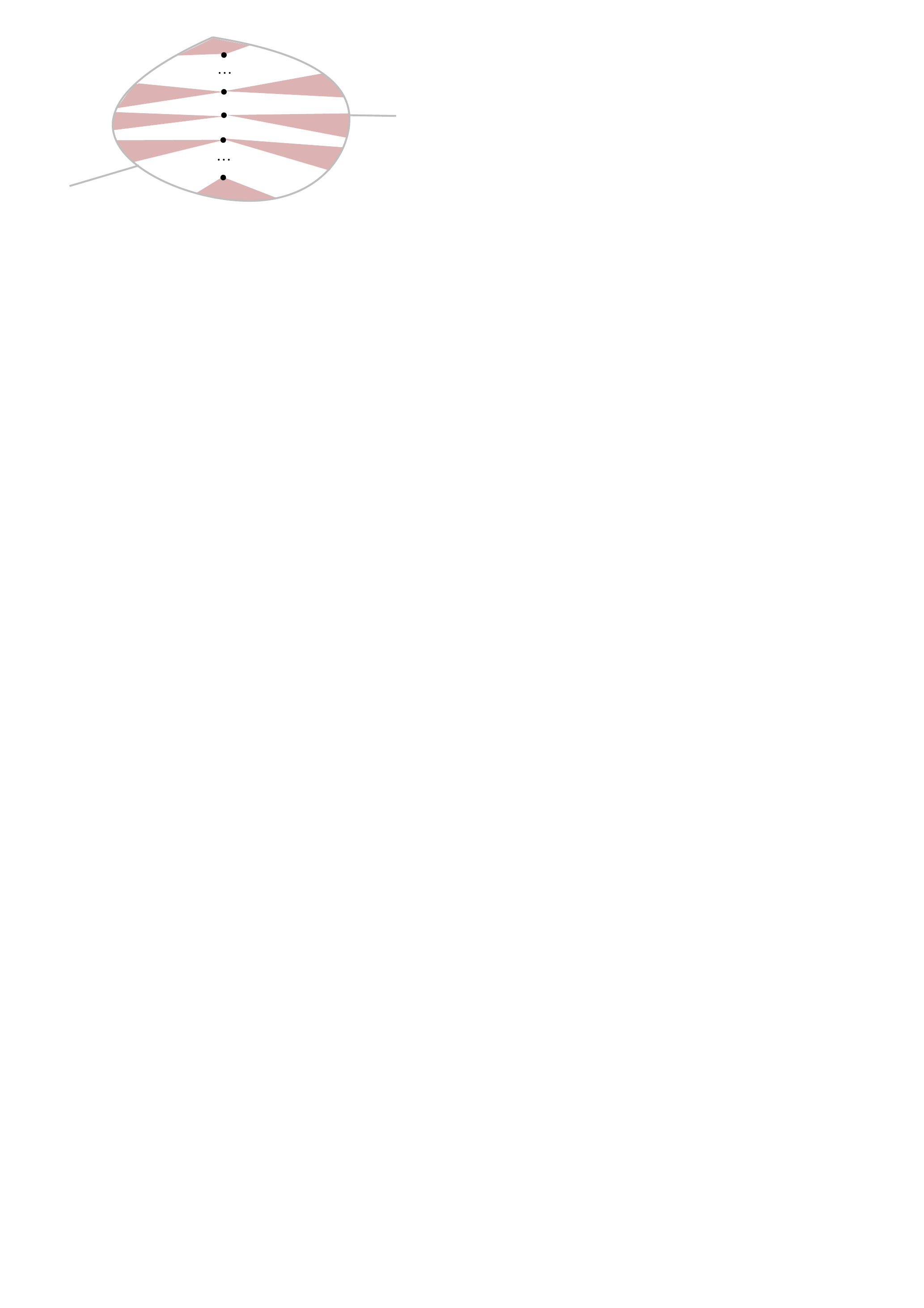}
		\caption{Example for Remark~\ref{rem:tightbase}. \(\cH'\) is gray, \(\aV\) is black and potential base regions are in blue. \vspace{-0.5cm}
			\label{fig:manybases}}
	\end{figure}
	
	In combination with Point~\ref{rem:baseregioncell} of Remark~\ref{rem:baseregions} and the degree bound given in Observation~\ref{obs:H'avertexdeg}, we obtain the following.
	
\begin{lemma}
		\label{lem:baseregions}
		The total number of base regions in any \nice\ extension of \(\cH'\) to \(G\) lies in \(\bigoh(\kappa^3)\).
	\end{lemma}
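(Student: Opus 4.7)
The plan is to combine Proposition~\ref{prop:baseregpercell} with the cell-incidence bound implied by Observation~\ref{obs:H'avertexdeg} via a double-counting argument. A naive multiplication of the two bounds would only give $\bigoh(\kappa^4)$, so some care is needed in order to save a factor of $\kappa$; overcoming this loss is the main obstacle.

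First, I would sharpen Proposition~\ref{prop:baseregpercell} to a per-cell bound depending on $k_c := |\aV \cap c|$ rather than on the global parameter $\kappa$: any cell $c$ of $\cH'$ with $k_c \geq 1$ contains at most $2k_c - 1$ base regions. This follows from exactly the same induction as in the proof of Proposition~\ref{prop:baseregpercell}, observing that the recursion $f(k_c) \leq f(k_c - 1) + 2$ with $f(1) = 1$ depends only on $k_c$, not on $\kappa$.

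Next, I would combine this sharpened bound with Point~\ref{rem:baseregioncell} of Remark~\ref{rem:baseregions} (which ensures every base region lies in a cell containing its associated added vertex) and swap the order of summation:
\[
\sum_{c} (\text{base regions in } c) \;\leq\; \sum_{c} 2 k_c \;=\; 2 \sum_{v \in \aV} |\{c : v \in c\}|.
\]
For each fixed $v \in \aV$, the cells of $\cH'$ containing $v$ correspond to the angles around $v$ in the planarization (or to the unique cell containing $v$ in its interior if $v$ is isolated in $H'$), so their number is at most $\max(1, \deg_{H'}(v)) \in \bigoh(\kappa^2)$ by Observation~\ref{obs:H'avertexdeg}. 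Together with $|\aV| \leq \kappa$, the right-hand side above is $\bigoh(\kappa) \cdot \bigoh(\kappa^2) = \bigoh(\kappa^3)$, which yields the claim.
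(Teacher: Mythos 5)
Your proposal is correct and follows essentially the same route as the paper: bound the number of cells incident to each added vertex via Observation~\ref{obs:H'avertexdeg}, restrict attention to those cells via Remark~\ref{rem:baseregions}.\ref{rem:baseregioncell}, and combine with the per-cell bound of Proposition~\ref{prop:baseregpercell}. Your only addition is to make explicit the refinement of Proposition~\ref{prop:baseregpercell} to a bound of $\bigoh(k_c)$ per cell (which its inductive proof indeed yields) and the resulting change in the order of summation --- a detail the paper's one-line ``combining these yields the claim'' leaves implicit, and which is needed to land on $\bigoh(\kappa^3)$ rather than $\bigoh(\kappa^4)$.
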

	\iflong{
	\begin{proof}
		Note that in \(H'\) any \(v \in \aV\) has vertex degree in \(\bigoh(\kappa^2)\) (Observation~\ref{obs:H'avertexdeg}).
		The number of cells of \(\cH'\) containing \(v\) on their boundary or in their interior is at most linear in its vertex degree.
		Hence the number of cells of \(\cH'\) containing \(v\) lies in \(\bigoh(\kappa^2)\).
		Combining this with Remark~\ref{rem:baseregions}.\ref{rem:baseregioncell} and Proposition~\ref{prop:baseregpercell} yields the claim.
	\end{proof}
	}
	
	Observe that every base region is bounded by \(\cH'\) and at most two new edges (since these are each incident to the same vertex in \(\aV\), which is not included in the base region and thus cannot be used for connectivity).
	Hence Lemma~\ref{lem:baseregions} allows us to branch on the drawings of the edges that bound base regions in polynomial time, in a similar fashion as the branching carried out in Section~\ref{sec:branching}.
	In each branch, our aim will be to decide whether the arising 1-planar drawing \(\cH''\) of \(H''\) (where \(H' \subseteq H'' \subseteq G\)) can be extended to an \nice\ drawing of \(G\) with an additional restriction: notably, in the planarization of such an extension, the newly drawn edges immediately incident to \(\aV\) are all drawn in \(\bigoh(\kappa^3)\) distinguished \emph{base cells} of \(\cH''\).
	These base cells are defined as the cells of \(\cH''\) corresponding to the base regions identified in the given branch.

	\section{Interactions between Base Cells}
	\label{sec:inter}
	\newcommand{\vnice}{based untangled 1-planar}
	Let \(\cH''\) be an extension of \(\cH'\) obtained from the previous step described in Section~\ref{sec:baseregions}.
	From now on, we refer to edges in \(E(G) \setminus E(H'')\) as \emph{new edges}.
	Recall that we still want to find an \nice\ extension.
	Additionally, for each \(v \in \aV\), we have identified a set \(\mathfrak{B}_v\) of base cells, {and} we require every new edge incident to \(v\) to start in one such base cell.
			Here, we say that a new edge \(e\) \emph{starts} in or \emph{exits} through the cell in which, for every arbitrarily small $\varepsilon$, the points on the drawing of \(e\) lie at an \(\varepsilon\)-distance from the unique endpoint of \(e\) in \(\aV\).
	We call extensions satisfying this property \emph{\vnice}.
	Furthermore, recall that \(\left\vert \bigcup_{v\in \aV} \mathfrak{B}_v \right\vert \in \bigoh(\kappa^3)\).
	
As noted in Remark~\ref{rem:basedifficult}, edges can cross the boundary of base cells (i.e., ``cross out'' of the base cell they started in) and enter other base cells or cells containing edges from multiple base regions. 
This makes resolving the remaining \textsc{1-planar Extension} problem non-obvious. 
In this and the next section we apply a two-step approach to deal with this issue and complete the proof {of our main result}: first, we subdivide $\cH''$ into parts where only two base cells interact and which can be solved independently (Subsections~\ref{subsec:isointer} and~\ref{subsec:groupinter}), and then we use a dynamic programming algorithm to directly solve each such independent part (Section~\ref{sec:2vtcs}).
\ifshort{Note that the newly added edges could subdivide some of the base cells, technically not making them cells in the extended drawing anymore---however, we always use the term `base cell' to refer to the original base cells for all marking procedures.}
	
	\subsection{Isolating Base Cell Pair Interactions}
	\label{subsec:isointer}
	Our goal here {is to somewhat separate interactions of edges starting in many different base cells.
	To achieve this we aim} to reach a state where each cell is ``assigned'' to at most two base cells{, meaning that only edges starting in these base cells can interact in the respective cell}. 	
	
	We begin with an important definition that will be used throughout this subsection.
	\begin{definition}
		A cell \(c\) of \(\cH''\) is \emph{accessible} from a base cell \(\mathfrak{b} \in \mathfrak{B}_v\) of some \(v \in \aV\), if an edge from \(v\) to a vertex on the boundary of 
		$c$
		can be inserted into \(\cH''\) in a 1-plane way, such that before crossing another edge, it is drawn within \(\mathfrak{b}\), and some part of the edge is drawn within \(c\).
	\end{definition}
	
	\iflong{
	\begin{myremark}
		In particular, a base cell is always accessible from itself.
	\end{myremark}
	}
	
	\begin{observation}
		\label{obs:accessdist}
		A cell of \(\cH''\) that is not a neighbor of a base cell $\mathfrak{b}$ in the dual graph of \(\cH''^{\times}\) is not accessible from $\mathfrak{b}$.
	\end{observation}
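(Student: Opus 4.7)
The plan is to exploit 1-planarity by tracing the hypothetical inserted edge through its at most one crossing and arguing that only cells directly adjacent to $\mathfrak{b}$ in the dual graph can contain any of its curve. Suppose for contradiction that $c$ is accessible from a base cell $\mathfrak{b}\in\mathfrak{B}_v$: by definition there is an edge $e$ from $v$ to some vertex $u$ on the boundary of $c$ whose insertion keeps $\cH''$ 1-planar, that is drawn entirely in $\mathfrak{b}$ up until its first crossing, and some part of whose drawing lies in $c$.

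First, since the drawing after insertion is 1-planar, $e$ itself can have at most one crossing with $\cH''^{\times}$. If $e$ has no crossing at all, its curve never leaves the face of $\cH''^{\times}$ that contains $v$, namely $\mathfrak{b}$; hence $u\in\partial\mathfrak{b}$ and every point of $e$ lies in $\mathfrak{b}$, forcing $c=\mathfrak{b}$. Otherwise, let $f\in E(\cH''^{\times})$ be the unique edge crossed by $e$. Before the crossing $e$ lies inside $\mathfrak{b}$ by accessibility, so $f$ borders $\mathfrak{b}$; after the crossing, $e$ cannot cross any further edge, so its remaining portion lies entirely within the unique face $c'$ of $\cH''^{\times}$ sharing $f$ with $\mathfrak{b}$. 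Thus the drawing of $e$ is contained in $\mathfrak{b}\cup f\cup c'$, and $c'$ is a dual-neighbor of $\mathfrak{b}$ in $\cH''^{\times}$.

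Finally, since the definition of accessibility requires some part of $e$ to be drawn within $c$, the cell $c$ must be one of the at-most-two cells visited by the curve, i.e.\ $c\in\{\mathfrak{b},c'\}$. In either case $c$ is $\mathfrak{b}$ itself or a dual-neighbor of $\mathfrak{b}$, which is the desired contrapositive. There is no real obstacle here; the only point requiring care is to distinguish ``some part of $e$ lies in $c$'' from ``$u$ lies on the boundary of $c$'', which is precisely what pins $c$ down to one of the two cells the curve actually traverses.
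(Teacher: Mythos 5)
Your proof is correct and matches the reasoning the paper leaves implicit: the observation is stated without proof precisely because an inserted edge with at most one crossing traverses at most two faces of \(\cH''^{\times}\) — the base cell \(\mathfrak{b}\) before the crossing and a dual neighbour of \(\mathfrak{b}\) after it. Your case split on whether the edge crosses, and your care about ``part of the edge drawn within \(c\)'' versus merely ending on the boundary of \(c\), is exactly the intended justification.
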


	Our next aim will be to bound the number of cells of \(\cH'' \) that are accessible from three different base cells. To do this, we will use the following lemma which is an immediate consequence of the well-known fact that planar graphs have bounded expansion and Point 2 of Lemma 4.3 of previous work by Gajarsk\'y et al.~\cite{GajarskyHOORRVS17}.
	\begin{lemma}[\hspace{-0.001cm}\cite{GajarskyHOORRVS17}]\label{lem:planarBoundedNeighborhood}
		Let $G=(X\cup Y,E)$ be a planar bipartite graph with parts $X$ and $Y$. Then there are at most $\bigoh(|X|)$ distinct subsets $X'\subseteq X$ such that $X'=N(u)$ for some $u\in Y$.
	\end{lemma}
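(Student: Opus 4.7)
The plan is to bound the number of distinct realized neighborhoods by first choosing, for each subset $X' \subseteq X$ that equals $N(u)$ for some $u \in Y$, exactly one representative $u \in Y$, collecting these into a set $Y' \subseteq Y$. Since $G' = G[X \cup Y']$ is a planar bipartite graph in which all vertices of $Y'$ have pairwise distinct neighborhoods by construction, it suffices to show $|Y'| \in \bigoh(|X|)$, and I would do so by splitting $Y'$ by the degree of its vertices.

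The contributions of vertices in $Y'$ of degree $0$ or $1$ are immediate: there is at most one of the former and at most $|X|$ of the latter. For the vertices of degree at least $3$, I would apply the standard bipartite planar edge bound $|E(G')| \leq 2(|X|+|Y'|)-4$ to the subgraph induced on $X$ together with these high-degree representatives; since each such representative contributes at least three edges, this forces the number of degree-$\geq 3$ representatives to be at most $2|X|-4$.

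The main case is representatives $u \in Y'$ of degree exactly $2$. Here I would construct an auxiliary graph $H$ on vertex set $X$ containing the edge $\{x_1,x_2\}$ for each such $u$ with $N(u) = \{x_1, x_2\}$. Distinctness of neighborhoods ensures $H$ is simple, and $H$ is planar: starting from a planar embedding of $G'$ restricted to $X$ and the degree-$2$ representatives, and replacing each two-edge path through a degree-$2$ representative by a single curve along those two edges (then deleting the representative) produces a planar drawing of $H$ without new crossings. The planar edge bound $|E(H)| \leq 3|X|-6$ then yields $\bigoh(|X|)$ degree-$2$ representatives.

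The only step that requires genuine care is verifying planarity of $H$ in the degree-$2$ case---specifically, that the curve-replacement procedure introduces no spurious crossings and no parallel edges---but this follows from the fact that the original edges do not cross in the embedding of $G'$, that the replacement operates locally at each representative, and that distinct representatives in $Y'$ have distinct two-element neighborhoods. Summing the four contributions gives $|Y'| \in \bigoh(|X|)$, which is the claim.
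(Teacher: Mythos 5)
Your proof is correct. Note, however, that the paper does not actually prove this lemma at all: it imports it verbatim from Gajarsk\'y et al.\ \cite{GajarskyHOORRVS17}, remarking only that it follows from the bounded expansion of planar graphs together with a neighborhood-complexity bound (Point~2 of Lemma~4.3 of that work). So your argument is necessarily a different route---a self-contained, first-principles counting proof in place of an appeal to the bounded-expansion machinery. The decomposition by degree of the representatives is the natural one, and each piece checks out: the degree-$0$ and degree-$1$ cases are trivial; the degree-$\geq 3$ case correctly combines the bipartite planar bound $|E|\leq 2n-4$ with the lower bound of three edges per representative to get at most $2|X|-4$ such vertices; and in the degree-$2$ case the auxiliary graph $H$ is a minor of $G$ (suppress each degree-$2$ representative), hence planar, and simple precisely because the representatives were chosen with pairwise distinct neighborhoods, so $|E(H)|\leq 3|X|-6$ bounds their number. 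The trade-off is the usual one: the citation buys brevity and generality (the bounded-expansion result applies to any class of bounded expansion, not just planar graphs, which matters elsewhere in \cite{GajarskyHOORRVS17}), whereas your argument buys transparency and even yields explicit constants ($6|X|-9$ or so in total), which the $\bigoh(|X|)$ statement does not require but does not hurt. For the purposes of this paper, where the lemma is only ever applied to (subgraphs of) planar dual graphs and planarizations, your elementary proof would serve as a fully adequate replacement for the citation.
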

	
	Using Observation~\ref{obs:accessdist} and Lemma~\ref{lem:planarBoundedNeighborhood}, we prove the following.
\begin{proposition}
		\label{prop:accessbound}
		There are at most \(\bigoh(\kappa^3)\) cells of \(\cH''\) which are accessible from three different base cells.
	\end{proposition}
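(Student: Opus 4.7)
The plan is to reduce the statement to an application of Lemma~\ref{lem:planarBoundedNeighborhood}. By Observation~\ref{obs:accessdist}, every cell accessible from a base cell must be a neighbor of it in the dual graph of the planar planarization $\cH''^{\times}$. Hence it suffices to bound the number of cells of $\cH''$ that are dual-adjacent to at least three base cells.

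To this end, I would set up the bipartite graph $G'$ whose one part $X$ consists of the base cells (of which there are at most $\bigoh(\kappa^3)$ by Lemma~\ref{lem:baseregions}) and whose other part $Y$ consists of all remaining cells of $\cH''$, with an edge between $x\in X$ and $y\in Y$ whenever the corresponding cells share part of their boundary. Since the dual of the planar graph $\cH''^{\times}$ is planar, $G'$---being a subgraph thereof with any parallel edges collapsed---is a simple planar bipartite graph. Every cell of $\cH''$ that is accessible from at least three distinct base cells corresponds, via Observation~\ref{obs:accessdist}, to a vertex of $Y$ of degree at least three in $G'$.

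Applying Lemma~\ref{lem:planarBoundedNeighborhood} to $G'$ then yields that at most $\bigoh(|X|) = \bigoh(\kappa^3)$ distinct subsets of $X$ arise as neighborhoods $N(y)$ for some $y\in Y$. It remains to show that for each fixed $X'\subseteq X$ of size at least three, at most two vertices $y\in Y$ can satisfy $N(y)\supseteq X'$; otherwise, three such $y$'s together with any three vertices of $X'$ would form a $K_{3,3}$ subgraph of $G'$, contradicting its planarity. Partitioning the cells accessible from at least three base cells according to their (necessarily size-at-least-three) neighborhoods in $X$ and applying this bound to each class yields the desired overall bound of $\bigoh(\kappa^3)$.

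I do not expect major obstacles: the core of the argument is a direct combination of Observation~\ref{obs:accessdist}, Lemma~\ref{lem:baseregions}, Lemma~\ref{lem:planarBoundedNeighborhood}, and the standard Kuratowski-style argument ruling out $K_{3,3}$. The only point that needs slight care is verifying that collapsing parallel edges in the planar dual to obtain the simple bipartite graph $G'$ does not lose accessibility information—this is fine because accessibility only depends on \emph{whether}, not \emph{how}, two cells share boundary.
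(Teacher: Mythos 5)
Your proposal is correct and follows essentially the same route as the paper's proof: both pass to the (bipartite sub)graph of the dual of $\cH''^{\times}$ induced between base cells and the remaining cells via Observation~\ref{obs:accessdist}, invoke Lemma~\ref{lem:planarBoundedNeighborhood} together with Lemma~\ref{lem:baseregions} to bound the number of distinct neighborhoods, and finish with the $K_{3,3}$-in-a-planar-graph argument to show each neighborhood class of size at least three contains at most two cells. The only cosmetic difference is that the paper explicitly notes at the outset that base cells themselves are already bounded by $\bigoh(\kappa^3)$ and so only non-base cells need to be counted, which your setup handles implicitly by placing them in $X$.
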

	\iflong{
	\begin{proof}
		Since the number of base cells is already bounded by \(\bigoh(\kappa^3)\), we only need to bound the number of non-base cells accessible from three different base cells.  
		Consider the graph $D$ that is the dual of \(\cH''^{\times}\). For the sake of exposition, we will identify the cells of \(\cH''\) and the vertices of $D$. It follows from Observation~\ref{obs:accessdist} that if a cell \(c\) is accessible from a base cell \(\mathfrak{b}\), then \(c\) and \(\mathfrak{b}\) are adjacent in $D$. Now let \(\mathfrak{B}\) be the set of vertices of \(D\) corresponding to the base cells of \(\cH''\). Since the class of planar graphs is closed under edge deletion, by Lemma~\ref{lem:planarBoundedNeighborhood} there are at most \(\bigoh(|\mathfrak{B}|)\) subsets \(\mathfrak{B}'\)  of \(\mathfrak{B}\) such that $(N_D(c)\cap \mathfrak{B})=\mathfrak{B}'$ for some cell \(c\) of \(\cH''\) that is not a base cell. Clearly for every non-base cell \(c\) which is accessible from three different base cells, we have \(|N_D(c)\cap \mathfrak{B}|\ge 3\).  Now let \(\mathfrak{B}'\) be an arbitrary subset of \(\mathfrak{B}\) of size at least three. Since, \(K_{3,3}\) cannot be a subgraph of a planar graph, it follows from the planarity of $D$ that  
		there are at most $2$ non-base cells \(c\) of \(\cH''\) with \(N_D(c)\cap \mathfrak{B}=\mathfrak{B}'\). Hence, counting also base cells, there are at most \(\bigoh(|\mathfrak{B}|)=\bigoh(\kappa^3)\) cells of \(\cH''\) accessible from three different base cells.	
	\end{proof}
	}
	
	Proposition~\ref{prop:accessbound} allows us to employ a more detailed branching procedure on the structure of a hypothetical solution in cells which are accessible from many (notably, at least three) base cells.
	%
	%
	%
	Our aim is to divide every cell of \(\cH''\) that is accessible from at least three base cells into parts that delimit interactions of pairs of vertices.
	This will then allow us to treat the subcells resulting from this division as cells that are accessible from only two added base cells.
We note that a hypothetical solution will not induce a unique division of a cell into such parts (in contrast to base regions, which are delimited by edges of \cH\ and hence uniquely determined).

	For the remainder of this subsection, let \(c\) be a cell of \(\cH''\) that is accessible from at least three base cells
	and \(\cG\) be a \vnice\ extension of \(\cH''\) to \(G\).
	We proceed as follows:
	First, traverse the boundary of the face of \(\cH''^{\times}\) that corresponds to \(c\), starting from an arbitrary vertex in counterclockwise direction (vertices may appear twice).
	Let the obtained ordering be given by \(v_1, \dotsc, v_{\ell}\).
	Mark each encountered \(v \notin \aV\) with each base cell for which \(v\) is the endpoint of an edge in \(E(\pG)\) which arises from an edge in \(E(G) \setminus E(H'')\) that starts in that base cell (see Figure~\ref{fig:sharedcells} for an example).
	{Note that we mark vertices of a planarized drawing \(\pG\). In particular crossing vertices are also marked.} To avoid confusion, we {also} call attention to the fact that this marking procedure is only defined with respect to a hypothetical solution; keeping that in mind, our next task is to obtain a bound on the number of vertices marked with more than two base cells.
	
\begin{proposition}
		\label{prop:2markbound}
		There are at most \(\bigoh(\kappa^3)\) vertices of \(\cH''\) which are marked with at least three different base cells.
	\end{proposition}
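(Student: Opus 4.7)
The approach mirrors that of Proposition~\ref{prop:accessbound}: the plan is to construct a bipartite auxiliary graph $G_{\text{aux}}$ whose two sides are the collection of base cells $\mathfrak{B} = \bigcup_{v \in \aV} \mathfrak{B}_v$ and the set of vertices of $\cH''^{\times}$ that are marked with at least one base cell, with an edge $\{\mathfrak{b}, u\}$ whenever $u$ is marked with $\mathfrak{b}$. Since $|\mathfrak{B}| \in \bigoh(\kappa^3)$ by Lemma~\ref{lem:baseregions}, once the planarity of $G_{\text{aux}}$ is established Lemma~\ref{lem:planarBoundedNeighborhood} (applied with $X = \mathfrak{B}$) yields that the number of distinct subsets $\mathfrak{B}' \subseteq \mathfrak{B}$ appearing as the neighborhood of some marked vertex is $\bigoh(\kappa^3)$, and $K_{3,3}$-freeness of planar graphs implies that for each such $\mathfrak{B}'$ with $|\mathfrak{B}'| \geq 3$, at most two marked vertices have $\mathfrak{B}'$ as their neighborhood. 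Combining these, the number of marked vertices with at least three base cells in their neighborhood is $\bigoh(\kappa^3)$.

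The crux is producing a planar drawing of $G_{\text{aux}}$. For each base cell $\mathfrak{b} \in \mathfrak{B}_v$, I would pick a representative point $p_\mathfrak{b}$ in the open interior of $\mathfrak{b}$, placed arbitrarily close to its associated added vertex $v$; each marked vertex $u$ represents itself. For every marking $(u, \mathfrak{b})$ there is, by definition, at least one new edge $e$ starting at $v$ in $\mathfrak{b}$ whose planarization contains $u$ as the endpoint of a half-edge; I would draw the auxiliary edge $\{\mathfrak{b}, u\}$ as a curve that starts at $p_\mathfrak{b}$, approaches $v$ inside $\mathfrak{b}$, and then tracks $e$ up to $u$ (stopping at the crossing vertex if $u$ is a dummy vertex arising from $e$'s unique crossing). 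Because new edges are pairwise non-crossing in the untangled extension $\cG$, and because all curves emanating from $v$ inside a single base cell can be routed radially as a star, auxiliary edges originating from distinct new edges cannot cross; the only coincidences are shared prefixes among auxiliary edges originating from the same new edge and from base cells sharing $v$ on their boundary, which can be removed by a local perturbation near $v$ and along $e$.

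The main obstacle will be carrying out this planarity argument carefully, in particular verifying that (i) the two auxiliary edges arising from a single crossed new edge (one ending at the dummy vertex, one at the far endpoint) can be drawn along the same curve up to the crossing and then separated without introducing crossings with auxiliary edges of other new edges, and (ii) auxiliary edges emanating from different base cells $\mathfrak{b}, \mathfrak{b}' \in \mathfrak{B}_v$ of a common $v \in \aV$ can be cleanly separated in a neighborhood of $v$, which is ensured by the fact that each base cell is bounded in part by new edges incident to $v$ that locally partition the region around $v$. With these separations in place, the drawing is planar, and the counting argument of the first paragraph closes the proof exactly along the template of Proposition~\ref{prop:accessbound}.
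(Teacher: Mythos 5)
Your proposal is correct and follows essentially the same route as the paper: the paper also builds a planar auxiliary graph by taking the (planar, by untangledness) drawing of the new edges, inserting one vertex per base cell, and routing each new edge from its base-cell vertex to its marked endpoint, then concludes via Lemma~\ref{lem:planarBoundedNeighborhood} and $K_{3,3}$-freeness exactly as you do. Your write-up merely spells out the planarity of this auxiliary drawing (handling dummy vertices and the separation near the added vertex) in more detail than the paper, which simply asserts that planarity "is easily verified".
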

	\iflong{
	\begin{proof}
		By the untangledness assumption, the drawing obtained from \(\cG\) by removing edges in \(\cH''\) is planar.
		The resulting planar graph can be modified by inserting vertices for each base cell, and connecting edges, that start in that base cell to their marked endpoint through this vertex.
		This is easily verified to maintain planarity.
		We call the resulting graph \(G'\) and let \(\mathfrak{B}\) be the newly inserted vertices for base cells.
		Since the class of planar graphs is closed under removal of edges, Lemma~\ref{lem:planarBoundedNeighborhood} implies that for \(v \in V(G') \cap V(H)\), there are at most \(|\mathfrak{B}|\) distinct subsets \(X' \subseteq \mathfrak{B}\) such that \(X' = N_{G'}(v)\).
		Clearly for every vertex in \(V(G') \cap V(H)\) which is marked with three different base cells, we have \(|N_{G'}(v)\cap \mathfrak{B}|\ge 3\).
		Now let \(\mathfrak{B}'\) be an arbitrary subset of \(\mathfrak{B}\) of size at least three.
		Since, \(K_{3,3}\) cannot be a subgraph of a planar graph, it follows from the planarity of $G'$ that  
		there are at most $2$ vertices \(v\) of \(\cH\) with \(N_{G'}(v)\cap \mathfrak{B}=\mathfrak{B}'\).
		Hence, counting also base cells, there are at most \(\bigoh(|\mathfrak{B}|)=\bigoh(\kappa^3)\) vertices marked with three different base cells.
	\end{proof}
	}
	
	Proposition~\ref{prop:2markbound} allows us to branch on the drawings of all missing edges incident to vertices which are marked with three {more} different base cells, and insert them into \(\cH''\). Note that this operation could subdivide some cells which are not base cells; whenever that happens, we recompute the accessibility of the new cells, and we observe that the bound given in Proposition~\ref{prop:accessbound} still applies. On the other hand, the newly added edges could subdivide some of the base cells, technically not making them cells in the extended drawing anymore---in this case, we still use the term `base cell' to refer to the original base cells for all further marking and labeling procedures.

\begin{figure}
  \begin{minipage}[c]{0.6\textwidth}
		\includegraphics[page=3]{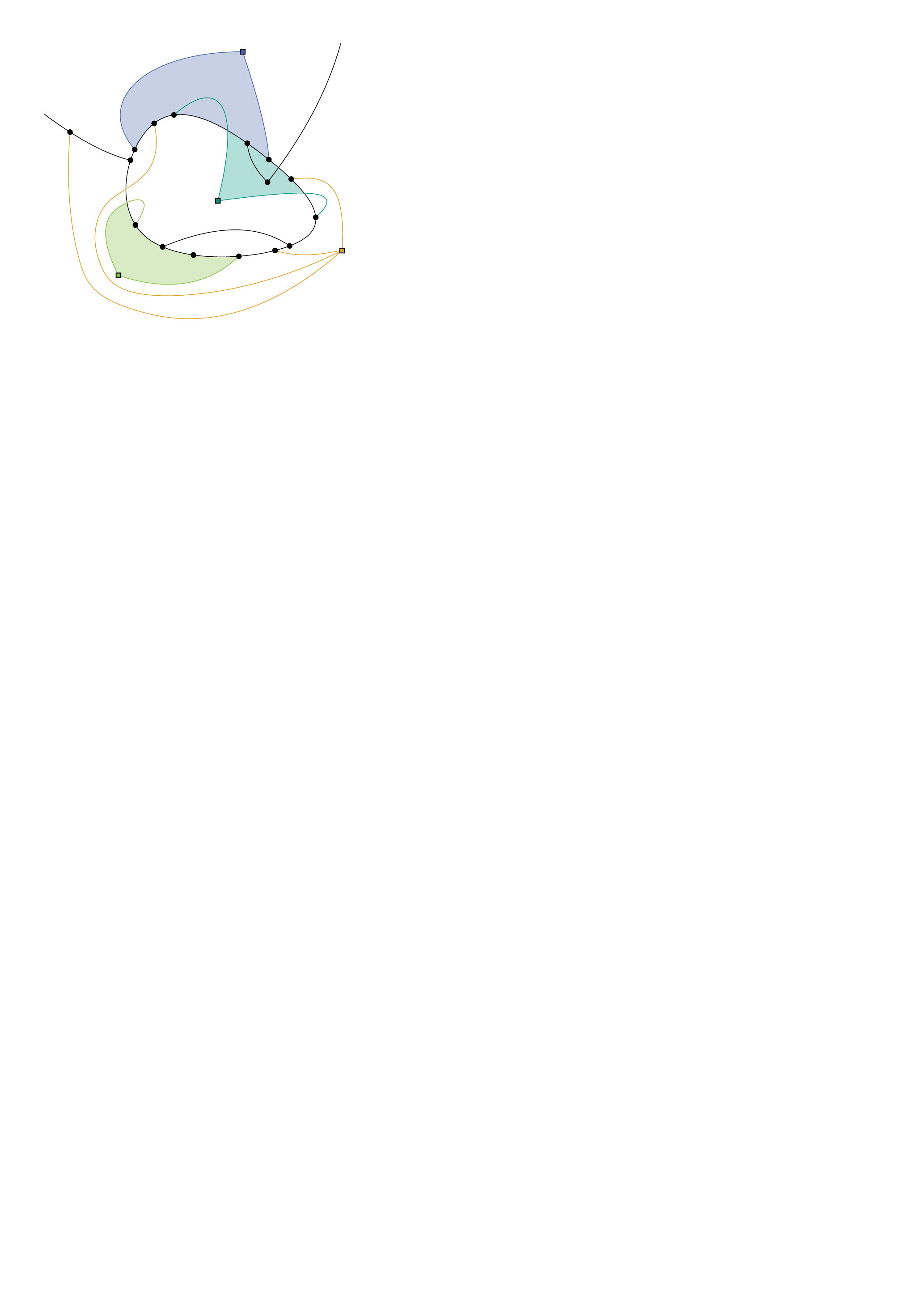}
		\vspace{-0.4cm}
  \end{minipage}\hfill
  \begin{minipage}[c]{0.4\textwidth}
\caption{Illustration of the marking of one cell $ c $ in $ \cH'' $. Square vertices are in $ \aV $ and their base {cells} are marked in the corresponding colors. Filled disks are their neighbors, to which the edges are not yet in \(\cH''\) on the boundary of $ c $.
		Crosses and filled disks receive at least one marker.
		The green-purple curve marks a stretch of the two corresponding base {cells}. Note that the dashed lines represent a possible 1-planar drawing; not all edges are drawn in $ c $.}
				\vspace{-0.4cm}
		\label{fig:sharedcells}
  \end{minipage}
\end{figure}

	After adding the above edges in a branch, we can assume that every vertex is marked by at most two base cells.
	\iflong{	Denote the set of markers for \(v\) by \(\mu(v)\).
	We now identify the following forbidden substructure:

\begin{proposition}
		\label{prop:forbstruct}
		Let \(c\) be a cell of \(\cH''\). There are \emph{no}
		base cells \(\mathfrak{b}_1,\mathfrak{b}_2,\mathfrak{b}_3\) and
		vertices \(u_1,u_2,u_3\), \(v_1,v_2,v_3\) and \(w_1,w_2,w_3\) on the planarized boundary of \(c\),
		whose first occurrence along an arbitrary traversal of the boundary of \(c\) is in this order,
		where possibly \(x_1 = x_2\) or \(x_2 = x_3\) for some \(x \in \{u,v,w\}\),
		such that for all \(i \in \{1,2,3\}\) and \(x \in \{u,v,w\}\), \(\mathfrak{b}_i \in \mu(x_i)\).
	\end{proposition}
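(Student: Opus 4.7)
I would prove Proposition~\ref{prop:forbstruct} by contradiction, using a $K_{3,3}$-obstacle argument. Assume that in some \vnice{} extension $\cG$ of $\cH''$ to $G$ there exist base cells $\mathfrak{b}_1,\mathfrak{b}_2,\mathfrak{b}_3$, a cell $c$ of $\cH''$, and boundary vertices $u_i,v_i,w_i$ ($i\in\{1,2,3\}$) of $c$ satisfying the listed conditions. For each $i$ let $a_i\in\aV$ be the unique added vertex with $\mathfrak{b}_i\in\mathfrak{B}_{a_i}$. By the marker definition, for every pair $(i,x)\in\{1,2,3\}\times\{u,v,w\}$ the marker $\mathfrak{b}_i\in\mu(x_i)$ witnesses a new edge $e_{x,i}\in E(G)\setminus E(H'')$ from $a_i$ whose drawing starts in $\mathfrak{b}_i$ and whose corresponding edge or half-edge of $\pG$ terminates at $x_i$. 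Because $\cG$ is untangled, the drawings of these nine new edges are pairwise non-crossing in the plane.

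The strategy is to show that the nine non-crossing curves together with the three base cells and the boundary of $c$ realize $K_{3,3}$ as a planar topological minor, contradicting its non-planarity. I would take as the three \emph{left} branch sets the closures $\overline{\mathfrak{b}_1}, \overline{\mathfrak{b}_2}, \overline{\mathfrak{b}_3}$ together with the respective $a_i$: each is a connected, simply connected region of the plane containing $a_i$ on its boundary, and distinct base cells of $\cH''$ have pairwise disjoint interiors by Remark~\ref{rem:baseregions}. As the three \emph{right} branch sets I would take the three subarcs of the topological boundary of the simply connected region $c$ running through $u_1u_2u_3$, $v_1v_2v_3$, and $w_1w_2w_3$, respectively; the prescribed cyclic order of first occurrences on this simple closed boundary curve ensures that the three subarcs are pairwise disjoint, while the permitted degenerate cases $x_1 = x_2$ or $x_2 = x_3$ merely shrink the affected arc.

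Contracting each branch set to a single point then turns each of the nine non-crossing edges $e_{x,i}$ into an edge from the ``$a_i$-point'' to the ``$x$-point'', producing a planar drawing of $K_{3,3}$ --- the desired contradiction. The main subtlety I anticipate is ensuring that the contractions are planarity-preserving also in edge cases where a source region meets a target arc at a common vertex (e.g.\ when $\mathfrak{b}_i = c$, so that $a_i$ lies on the boundary of $c$, possibly within a target arc), or when the combinatorial boundary walk of $c$ visits vertices multiple times. Both can be handled by working purely topologically with the closed disk $c$: small perturbations of the contraction regions that avoid the interiors of the nine non-crossing curves suffice to turn them into genuinely disjoint regions whose pairwise connections under the nine curves realize the forbidden $K_{3,3}$.
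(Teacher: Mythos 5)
Your proof takes essentially the same route as the paper's: both derive a contradiction from a \(K_{3,3}\)-type planarity obstruction, with the three base cells contracted to one side of the bipartition, the three groups \(u_1u_2u_3\), \(v_1v_2v_3\), \(w_1w_2w_3\) of boundary vertices to the other, and the connections realized by the pairwise non-crossing new-edge curves guaranteed by untangledness. The only (cosmetic) difference is that the paper builds the second side by drawing an auxiliary plane path \(u_1,\dotsc,w_3\) along the boundary of \(c\) and invoking the Jordan curve theorem on the resulting path-plus-apexes structure, rather than contracting boundary arcs directly, and the topological edge cases you flag (repeated occurrences on the boundary walk, regions meeting at shared boundary points) are treated with the same brevity in the paper's own argument.
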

	\begin{proof}
		Because \(u_1, u_2, u_3, v_1, v_2, v_3, w_1, w_2, w_3\) occur for the first time in this order along some traversal of the boundary of \(c\) one can connect these vertices by a path in a way that is plane together with \(\cH''^\times\).
		Moreover, \(\mathfrak{b}_1\), \(\mathfrak{b}_2\), \(\mathfrak{b}_3\) can be represented as three distinct	vertices.
		
		By the construction of \(\mu\)  from \(\cH''\) the situation excluded in the statement of the proposition induces the graph that is given by three vertices \(b_1,b_2,b_3\), the disjoint path
		\(u_1, u_2, u_3, v_1, v_2, v_3, w_1, w_2, w_3\)
		where potentially for each  \(x \in \{u,v,w\}\), \(x_1 = x_2\) or \(x_2 = x_3\),
		together with the edges \(\{b_i,x_i\}\) for \(i \in \{1,2,3\}\) and \(x \in \{u,v,w\}\).
		Using the Jordan curve theorem, one can argue that this minor cannot be drawn without the drawings of two edges of the form \(\{b_i,x_i\}\) for \(i \in \{1,2,3\}\) and \(x \in \{u,v,w\}\) intersecting or one of these edges intersecting two edges of the path.
		This contradicts the properties of \(\cG\).
	\end{proof}
	}
	\iflong{One can divide the boundary of \(c\) into connected parts, on which only a specific pair of added vertices appears in the markers. 
	For base cells \(\mathfrak{b},\mathfrak{c}\), we refer to such parts as \emph{\(\mathfrak{b},\mathfrak{c}\)-stretches} in $c$,
	and just \emph{stretches} when we do not want to specify the related base cell pair.
	Such a division is not unique.
	
	We will consider a division which is obtained in a straightforward and greedy way:
	Go through \(v_1, \dotsc, v_\ell\)
	in one of two states: \textit{new stretch}, which is also the initial state, and \textit{running stretch}, which receives the base cells \(\mathfrak{b},\mathfrak{c}\) and a \(\mathfrak{b},\mathfrak{c}\)-stretch as parameters.\\
	If in the state \textit{new stretch} proceed until two different base cells \(\mathfrak{b},\mathfrak{c}\) are encountered in the markers since the beginning of the current \textit{new stretch} state.
	When this happens start a new \(\mathfrak{b},\mathfrak{c}\)-stretch which initially contains every vertex encountered since the beginning of the current \textit{new stretch} state.
	Then switch into \textit{running stretch} state with parameters \(\mathfrak{b},\mathfrak{c}\) and the started stretch.\\
	If in the state \textit{running stretch} with parameters \(\mathfrak{b},\mathfrak{c}\) and a \(\mathfrak{b},\mathfrak{c}\)-stretch to add the encountered vertices to the \(\mathfrak{b},\mathfrak{c}\)-stretch until a vertex contains some base cell \(\mathfrak{d} \notin \{\mathfrak{b},\mathfrak{c}\}\) in its marker (note that we also check this for the vertex in which the current \textit{running stretch} stage starts).
	When this happens switch into \textit{new stretch state}.\\
	When reaching \(v_\ell\) check if the first constructed stretch can be combined with the last stretch under construction (no matter if the last state is \textit{new stretch} or \textit{running stretch}), and merge them if possible.
	Observe that if this is not possible, there is a marker that prevents this, which means we are never in the situation of ending at \(v_\ell\) in the \textit{starting stretch} stage without being able to merge apart from the pathological case in which \(\mu\) maps only to a single base cell.
	This case can be neglected as then no further subdivision of \(c\) is necessary for us to be able to achieve the hypothetical solution \(\cG\).
	
	
	\begin{lemma}
		\label{lem:stretchbound}
		Let \(c\) be a cell of \(\cH''\) and \(\mathfrak{b}\) be a base cell.
		After the described procedure for the cell \(c\), there are at most \(\bigoh(\kappa^3)\) \(\mathfrak{b},\cdot\)-stretches where \(\cdot\) is a placeholder for arbitrary base cells other than \(\mathfrak{b}\).
	\end{lemma}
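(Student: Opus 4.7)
The plan is to translate the forbidden configuration of Proposition~\ref{prop:forbstruct} into a Davenport-Schinzel-like constraint on the cyclic sequence of partner labels of $\mathfrak{b},\cdot$-stretches and to combine it with the bound $|\mathfrak{B}| \in \bigoh(\kappa^3)$ established by Lemma~\ref{lem:baseregions}.

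Concretely, we will enumerate the $\mathfrak{b},\cdot$-stretches in cyclic order along the boundary of $c$ and, for each such stretch $s$, denote by $\mathfrak{c}_s$ its partner, i.e., the second base cell of $s$'s label. By construction of the greedy stretch procedure, each $\mathfrak{b},\cdot$-stretch contains at least one $\mathfrak{c}_s$-marked vertex on its arc (and at least one $\mathfrak{b}$-marked vertex). The central structural observation will be that the cyclic sequence $\mathfrak{c}_{s_1},\mathfrak{c}_{s_2},\ldots$ cannot contain the pattern $\mathfrak{d}_1\mathfrak{d}_2\mathfrak{d}_3\mathfrak{d}_1\mathfrak{d}_2\mathfrak{d}_3\mathfrak{d}_1\mathfrak{d}_2\mathfrak{d}_3$ as a cyclic subsequence for any three pairwise distinct base cells $\mathfrak{d}_1,\mathfrak{d}_2,\mathfrak{d}_3$: such a subsequence would allow us to select the nine corresponding $\mathfrak{c}_s$-marked vertices, exhibiting the configuration forbidden by Proposition~\ref{prop:forbstruct} with $\mathfrak{b}_i = \mathfrak{d}_i$. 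Combined with the fact that consecutive stretches within a single run of $\mathfrak{b},\cdot$-stretches have distinct partners (by the definition of the greedy procedure), this restriction will force the cyclic sequence to have length linear in $|\mathfrak{B}|$, i.e., $\bigoh(\kappa^3)$.

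The main obstacle is deriving the linear length bound from the forbidden-subsequence constraint: a naive combination that bounds the number of distinct partners $\mathfrak{c}_s$ and the stretches per partner separately only yields a quadratic $\bigoh(\kappa^6)$ bound. To achieve the tight $\bigoh(\kappa^3)$ bound we will resort to an amortized charging argument, for instance by encoding the stretches together with the base cells as vertices of an auxiliary planar bipartite graph in which every stretch is connected to its two label base cells, with the drawing realized inside $c$ along actual untangled edges of $\cG$ (their pairwise non-crossingness guaranteeing planarity). Combining Euler's formula for this planar structure with the forbidden-subsequence constraint above should allow us to remove the extraneous factor of $|\mathfrak{B}|$ from the naive bound and close out the proof.
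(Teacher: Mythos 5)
Your reading of Proposition~\ref{prop:forbstruct} as a forbidden-subsequence condition on the marker sequence along the boundary of \(c\) is the right starting point, and you are also right that a naive per-partner count does not immediately give a linear bound. However, there are two genuine gaps. First, the constraint you extract is strictly weaker than what Proposition~\ref{prop:forbstruct} actually provides: by insisting that the three base cells \(\mathfrak{d}_1,\mathfrak{d}_2,\mathfrak{d}_3\) of the forbidden configuration all be \emph{partners} (hence all different from \(\mathfrak{b}\)), you never invoke the proposition with \(\mathfrak{b}\) itself playing the role of one of the three base cells. As a consequence, your constraint is satisfied by an arbitrarily long alternation of \(\mathfrak{b},\mathfrak{c}\)- and \(\mathfrak{b},\mathfrak{c}'\)-stretches for two fixed partners \(\mathfrak{c},\mathfrak{c}'\): the partner sequence \(\mathfrak{c},\mathfrak{c}',\mathfrak{c},\mathfrak{c}',\dotsc\) uses only two symbols, has no two consecutive symbols equal, and contains no pattern \((\mathfrak{d}_1\mathfrak{d}_2\mathfrak{d}_3)^3\) on three \emph{distinct} symbols. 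So your stated constraints simply do not bound the number of \(\mathfrak{b},\cdot\)-stretches at all. Second, the proposed repair via a planar bipartite auxiliary graph and Euler's formula cannot close this hole: every stretch-vertex in that graph has degree exactly \(2\), so the graph has only \(2m\) edges on \(m+|\mathfrak{B}|\) vertices, and planarity imposes no restriction on such sparse (multi)graphs — indeed the bad alternating configuration above is realized by a planar multigraph of unbounded size (parallel length-two paths between \(\mathfrak{b}\) and \(\mathfrak{c}\) or \(\mathfrak{c}'\)). A further, more minor, issue is that even a correct forbidden-subsequence argument of Davenport--Schinzel/Klazar type would only yield a near-linear rather than linear bound in \(|\mathfrak{B}|\).

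The paper's proof avoids all of this by drawing two sharper consequences directly from Proposition~\ref{prop:forbstruct} (crucially, applied with \(\mathfrak{b}\) and \(\mathfrak{c}\) among the three base cells): (i) at most one partner \(\mathfrak{c}\) can admit three or more \(\mathfrak{b},\mathfrak{c}\)-stretches, and (ii) if such a \(\mathfrak{c}\) exists, then every base cell \(\mathfrak{d}\notin\{\mathfrak{b},\mathfrak{c}\}\) participates in at most two stretches of \emph{any} kind in \(c\). Combined with the inclusion-maximality of the greedy construction — consecutive \(\mathfrak{b},\mathfrak{c}\)-stretches must be separated by stretches whose label pair differs from \(\{\mathfrak{b},\mathfrak{c}\}\), and those separators are bounded by (ii) — a short direct count gives at most \(4(z-2)\) many \(\mathfrak{b},\cdot\)-stretches, where \(z=|\mathfrak{B}|\in\bigoh(\kappa^3)\). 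To salvage your approach you would at minimum need to add the analogues of (i) and (ii), at which point the auxiliary planar graph becomes unnecessary.
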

	\begin{proof}
		Let \(z\) denote the number of base cells.
		Recall that by Lemma~\ref{lem:baseregions} the number of base cells lies in \(\bigoh(\kappa^3)\).
		
		By construction all stretches are inclusion maximal.
		If for all base cells \(\mathfrak{c} \neq \mathfrak{b}\) there are at most two \(\mathfrak{b},\mathfrak{c}\)-stretches,
		there are obviously at most \(\max\{1, 2(z -1)\}\) \(\mathfrak{b},\cdot\)-stretches.
		
		Otherwise by Proposition~\ref{prop:forbstruct} there is exactly one \(\mathfrak{c} \neq \mathfrak{b}\) for which there are more than two \(\mathfrak{b},\mathfrak{c}\)-stretches.
		Note that even stronger, by  Proposition~\ref{prop:forbstruct} then for each \(\mathfrak{d} \notin \{\mathfrak{b},\mathfrak{c}\}\) there are at most two \(\mathfrak{d},\cdot\)-stretches.
		Because of inclusion maximality, the \(\mathfrak{b},\mathfrak{c}\)-stretches have to be separated by some \(\mathfrak{d},\mathfrak{d}'\)-stretches where \(\{\mathfrak{d},\mathfrak{d}'\} \neq \{\mathfrak{b},\mathfrak{c}\}\).
		Let \(\sigma_\mathfrak{b}\) be the number of these stretches, for which \(\mathfrak{b} \in \{\mathfrak{d},\mathfrak{d}'\}\), \(\sigma_\mathfrak{c}\) be the number of these stretches, for which \(\mathfrak{c} \in \{\mathfrak{d},\mathfrak{d}'\}\), and \(\sigma_\cdot\) be the number of these stretches, for which \(\mathfrak{b},\mathfrak{c} \notin \{\mathfrak{d},\mathfrak{d}'\}\){.}
		Then the number \(\sigma\) of \(\mathfrak{b},\cdot\)-stretches is upper-bounded by
		\(\sigma = 2\sigma_\mathfrak{b} + \sigma_\mathfrak{c} + \sigma_\cdot\).
		Also from our arguments above we know \(\sigma_\mathfrak{b} \leq 2(z-2)\), \(\sigma_\mathfrak{c} \leq 2(z-2) - \sigma_\mathfrak{b}\) and \(\sigma_\cdot \leq z - 2 - \sigma_\mathfrak{b} - \sigma_\mathfrak{c}\).
		Clearly \(\sigma\) is maximized for \(\sigma_\mathfrak{b} = 2(z - 2)\) and achieves a maximum value of \(4(z - 2)\).
	\end{proof}
	
	The above lemma allows us to finally identify a set of edges that we can later branch on to reach a state where each cell can be ``assigned'' to at most two designated base cells:
	}\ifshort{One can divide the boundary of \(c\) into connected parts, on which only a specific pair of added vertices appears in the markers.
	It is then possible to show that the number of such connected parts is bounded, 
	which allows us to finally identify a set of edges that we can later branch on to reach a state where each cell can be ``assigned'' to at most two designated base cells.	
}

\begin{lemma}
\label{lem:stretchBranching}
	Let \(c\) be a cell of \(\cH''\).
	There exist a set $F\subseteq (E(G)\setminus E(H''))$ of at most \(\bigoh(\kappa^{6})\) new edges such that if \(\cH_F''\) denotes the restriction of \(\cG\) to $H+F$, then for every cell \(c'\subseteq c\) of \(\cH_F''\) there exist at most two base cells \(\mathfrak{b}^{c'}_1, \mathfrak{b}^{c'}_2\in \bigcup_{v\in \aV}\mathfrak{B}_v\) such that all new edges that intersect \(c'\) in \(\cG\) start either in \(\mathfrak{b}^{c'}_1\) or in \(\mathfrak{b}^{c'}_2\).
\end{lemma}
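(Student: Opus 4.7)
The plan is to construct $F$ as a small set of edges taken from the hypothetical solution $\cG$ that act as geometric separators between consecutive stretches on the boundary of $c$. The starting point is to bound the number of stretches: by Lemma~\ref{lem:baseregions} there are $\bigoh(\kappa^3)$ base cells in total, and by Lemma~\ref{lem:stretchbound} each base cell $\mathfrak{b}$ participates in at most $\bigoh(\kappa^3)$ stretches of the form $\mathfrak{b},\cdot$. Summing over all $\mathfrak{b}$, the boundary of $c$ is partitioned into at most $\bigoh(\kappa^6)$ stretches by the greedy procedure described before Lemma~\ref{lem:stretchbound}.

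Next, I would walk through consecutive stretch transitions along the boundary of $c$. At each transition between a stretch $S$ (with base-cell labels $\mathfrak{b},\mathfrak{c}$) and the next stretch $S'$ (with labels $\mathfrak{d},\mathfrak{e}$), let $z$ be the boundary vertex at which the transition occurs (if the transition is between two distinct consecutive vertices, take both endpoints). By the construction of the greedy procedure, $z$ carries a marker from one of $\{\mathfrak{b},\mathfrak{c}\}$ or $\{\mathfrak{d},\mathfrak{e}\}$, witnessed by a concrete edge of $E(\pG)\setminus E(\cH''^\times)$ incident to $z$ and starting in the corresponding base cell. For each side of the transition, pick one such witnessing edge from $\cG$ that is outermost (closest to the opposite stretch in the rotation around $z$ inside $c$), and add its underlying edge of $G$ to $F$. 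Since each transition contributes at most two edges and there are $\bigoh(\kappa^6)$ transitions, we get $|F|\in\bigoh(\kappa^6)$.

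To conclude, let $\cH_F''$ be the restriction of $\cG$ to $H''+F$, and consider any sub-cell $c'\subseteq c$ of $\cH_F''$. I would argue that $c'$ meets the boundary of $c$ in a set of consecutive stretches that all share at least one common base-cell label: each separator edge added above, together with the segment of $\partial c$ at the corresponding transition, forms a Jordan arc in $c$ separating the two adjacent stretches. The forbidden configuration of Proposition~\ref{prop:forbstruct} guarantees that no three mutually distinct base cells alternate along $\partial c$, so after adding the separators each sub-cell boundary sees only stretches whose label sets pairwise intersect, forcing a common base cell. A second separator at the other transition of that group then isolates a second base cell, yielding at most two. Consequently every new edge of $\cG$ that intersects $c'$ must start in one of these (at most two) base cells.

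The main obstacle I anticipate is the last step: turning the local fact that each inserted edge separates two adjacent stretches into the global statement that every sub-cell sees only two base cells. This requires a careful Jordan-curve argument on the planarization of $\cH''^\times + F$, using Proposition~\ref{prop:forbstruct} to rule out any labeling pattern around $\partial c'$ that would force a third base cell; the rest is a routine verification that our edges, being drawn in $\cG$ itself, can be restricted into $\cH_F''$ while preserving 1-planarity.
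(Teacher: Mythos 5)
Your construction of $F$ differs from the paper's in a way that breaks the separation argument. The paper takes, for every \emph{pair} of stretches and every base cell \(\mathfrak{b}\), the two \emph{outermost} edges starting in \(\mathfrak{b}\) that connect those two stretches (bounding the size of $F$ via planarity of the stretch--adjacency graph); then any $e \notin F$ intersecting a subcell $c'$ forces $c'$ to be sandwiched between the two outermost edges for $e$'s own stretch pair, so $c'$ meets only (parts of) those two stretches. Your $F$ instead consists of witnessing edges chosen at \emph{transitions between consecutive stretches}. Such a witnessing edge incident to the transition vertex $z$ is an arc from $z$ to wherever that edge happens to enter $c$ (its crossing point with $\cH''$, which can lie on any \(\mathfrak{b},\cdot\)-stretch, possibly a sliver of the same stretch right next to $z$); it is \emph{not} a local wall between the two adjacent stretches, so the claim that it ``separates the two adjacent stretches'' does not hold. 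More importantly, nothing in your $F$ controls an edge $e \notin F$ that connects two far-apart stretches $S_1$ and $S_5$: the region it cuts off contains $S_2, S_3, S_4$ together with pieces of $S_1$ and $S_5$, and the resulting subcells can remain adjacent to three or more distinct base-cell labels. The per-pair outermost edges are exactly what prevents this, and they are what your construction is missing.

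A second, related problem is your appeal to Proposition~\ref{prop:forbstruct}. That proposition forbids only a specific nine-fold alternation pattern \(u_1u_2u_3v_1v_2v_3w_1w_2w_3\) of three base cells; it does not prevent three pairwise disjoint label sets \(\{\mathfrak{b},\mathfrak{c}\},\{\mathfrak{d},\mathfrak{e}\},\{\mathfrak{f},\mathfrak{g}\}\) from appearing on consecutive stretches of $\partial c$, so it cannot force the label sets seen by a subcell to pairwise intersect. In the paper it is used only to bound the \emph{number} of stretches (Lemma~\ref{lem:stretchbound}), not to constrain which labels a subcell sees; the ``at most two base cells per subcell'' conclusion instead comes from the sandwiching by the outermost pair $f_1,f_2 \in F$ plus a short case analysis showing that an edge starting in \(\mathfrak{d}\neq\mathfrak{d}'\) that touches $c'$ would have to be a stretch-to-itself edge already separated from $c'$ by the outermost members of $F$ for that pair. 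Your opening step (summing Lemma~\ref{lem:stretchbound} over the \(\bigoh(\kappa^3)\) base cells to get \(\bigoh(\kappa^6)\) stretches) matches the paper, but the remainder needs to be replaced by the per-pair outermost-edge construction.
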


\iflong{
\begin{proof}
	Given the greedy construction of the stretches for the cell \(c\) as described above,  
	now consider an edge $e$ which intersects the cell $c$ in
	$\cG$.
	{The edge} $e$ must start in some base cell \(\mathfrak{b}\); either \(c\subseteq\mathfrak{b}\), in which case $e$ is drawn from some vertex $v\in \aV$ to a vertex or edge on a \(\mathfrak{b},\cdot\)-stretch on the boundary of $c$, or $c$ is disjoint from $\mathfrak{b}$, in which case $e$ is drawn from an edge on some \(\mathfrak{b},\mathfrak{d}\)-stretch in $c$ to a vertex on some (possibly different) \(\mathfrak{b},\mathfrak{d}'\)-stretch in $c$.
	However, since \(\cG\) is \vnice\ parts of edges drawn into \(c\) do not intersect.
	This implies that the drawings of the at most two outermost edges from each \(\mathfrak{b},\mathfrak{d}\)-stretch to each \(\mathfrak{b},\mathfrak{d}'\)-stretch (where \(\mathfrak{d} = \mathfrak{d}'\) is possible) or to \(v\) itself delimit a subset of \(c\) into which all parts of edges starting in \(\mathfrak{b}\) connecting these stretches (or this stretch and \(v\)) have to be drawn. We will let the set of edges $F$ be precisely the set of the at most two outermost edges from each \(\mathfrak{b},\mathfrak{d}\)-stretch to each \(\mathfrak{b},\mathfrak{d}'\)-stretch. 
	
	Let us first bound the size of $F$. By Lemma~\ref{lem:stretchbound}, there are at most \(\bigoh(\kappa^6)\) stretches in \(c\). Moreover, the new edges in \(\cG\) do not intersect and the stretches of \(c\) are connected subsets of the boundary of \(c\). Hence, the bound on $F$ follows from the planarity of the graph whose vertex set is the set of stretches (+ possibly a vertex $v\in \aV$ {when} \(c\) is a subset of a base cell for $v$) and edge set is the set of pairs such that there is a new edge $e\in F$ connecting the two elements of the vertex set.
	
	Now let us consider a cell \(c'\subseteq c\) that results from including the drawings of edges in $F$ in $\cH''$ whose boundary intersects at least one new edge $e\notin F$. 
	Let \(\mathfrak{b}\) be the base cell in which $e$ starts and let $v$ be its endpoint in $e\cap \aV$. 
	Let us first assume that \(c'\) is not a subset of \(\mathfrak{b}\), then 
	the part of \(e\) that intersects \(c\) is drawn from an edge on some \(\mathfrak{b},\mathfrak{d}\)-stretch in $c$ to a vertex on some (possibly different) \(\mathfrak{b},\mathfrak{d}'\)-stretch in $c$. Since $e\notin F$, there exist two edges $f_1,f_2\in F$ from a vertex/an edge on the \(\mathfrak{b},\mathfrak{d}\)-stretch to a vertex/ an edge on the \(\mathfrak{b},\mathfrak{d}'\)-stretch. Since, $e,f_1,f_2$ do not intersect and $f_1,f_2$ and the two outermost such edges, $c'$ is bounded by a part of this \(\mathfrak{b},\mathfrak{d}\)-stretch, this \(\mathfrak{b},\mathfrak{d}'\)-stretch and edges $f_1,f_2$. It is clear that any edge intersects \(c'\) has to {start} in $\mathfrak{b}, \mathfrak{d}$, or \(\mathfrak{d}'\). If \(\mathfrak{d} = \mathfrak{d}'\), we are done. If  \(\mathfrak{d} \neq \mathfrak{d}'\), we claim that only edges that start in $\mathfrak{b}$ intersect $c'$. Lef $f$ be an edge that starts in $\mathfrak{d}$ and intersects $c'$. $c'$ is incident to only two stretches  \(\mathfrak{b},\mathfrak{d}'\) and  \(\mathfrak{b},\mathfrak{d}\). Furthermore, on \(\mathfrak{b},\mathfrak{d}\) can contain both the intersection point of $f$ with an edge of $H''$ and the endpoint of $f$. Hence $f$ is an edge from \(\mathfrak{b},\mathfrak{d}\) to \(\mathfrak{b},\mathfrak{d}\). But $F$ contains outermost such edge and in particular $f$ has to be separated from $c'$ by $F$ and \(\mathfrak{b},\mathfrak{d}\). Using analogous argument, we obtain that $c'$ does not intersect any edge that starts in  $\mathfrak{d}'$. 
	Finally, if $c'$ is a subset of $\mathfrak{b}$, then we can use similar argument to argue that all edges that start in a different base cell than $\mathfrak{b}$ are separated from $c'$ by an edge in $F$. 
\end{proof}
}

\ifshort{
We now recall that there are at most $\bigoh(\kappa^3)$ cells accessible from at least three base cells, and for each such cell $c$ we will branch to determine a set $F_c$ of at most $\bigoh(\kappa^6)$ edges. We will proceed by assuming that the set $F_c$ is precisely the set of edges obtained by applying Lemma~\ref{lem:stretchBranching} on a hypothetical solution $\cG$. After accounting for some minor technicalities, this gives rise to a branching factor of $n^{\bigoh(\kappa^9)}$. 

 As a consequence, we will proceed under the assumption that all cells have already been marked by {at} most two base cells and every edge that intersects a cell starts in one of the two base cells in the marked set for the cell. However, it is still not possible to cleanly ``split'' an instance into subinstances that only consist of $2$ base cells: the remaining issue is that a vertex $w$ on the boundary between cells assigned to different base cells may still be accessed from multiple cells. The number of times such a situation may occur is \emph{not} bounded by a function of $\kappa$, and hence a simple branching will not suffice here; the next subsection is dedicated to resolving this obstacle.
}
\iflong{
We now recall that there are at most $\bigoh(\kappa^3)$ cells accessible from at least three base cells, and for each such cell $c$ we proceed by branching to determine a set $F_c$ of at most $\bigoh(\kappa^6)$ edges. We will proceed by assuming that the set $F_c$ is precisely the set of edges obtained by applying Lemma~\ref{lem:stretchBranching} on a hypothetical solution $\cG$. Moreover, the drawing of each new edge added in this manner is uniquely determined (up to extendability to $G$) by which edge it crosses (if any); hence, in order to determine the drawing of these edges it suffices to branch to determine their crossings, and we proceed by doing so. 

Each such drawing splits $c$ into $|F_c|+1= \bigoh(\kappa^6)$ cells, and our branching allows us to proceed under the assumption that each of these new cells will be ``accessed'' only by two specific base cells (cf. Lemma~\ref{lem:stretchBranching}). However, at this point we have not yet identified precisely which base cells will be accessing each of the newly created cells---to do so, we can perform an additional branching step to determine which (up to two) base cells will be assigned to each of the newly created cells. This altogether gives us \[
\left(n^{\bigoh(\kappa^6)}\binom{\bigoh(\kappa^3)}{2}^{\bigoh(k^6)}\right)^{\bigoh(\kappa^3)}=n^{\bigoh(\kappa^9)} \] many branches we need to consider.

 As a consequence, we will proceed under the assumption that all cells have already been marked by most two base cells and every edge that intersects a cell starts in one of the two base cells in the marked set for the cell. However, in spite of having completely and exclusively assigned all cells in the incomplete drawing to base cells, it is still not possible to cleanly ``split'' an instance into subinstances that only consist of $2$ base cells: the remaining issue is that a vertex $w$ on the boundary between cells assigned to different base cells may still be accessed from multiple cells, and if an edge needs to be added between $w$ and a vertex $\aV$ it is not clear which base cell such an edge would be drawn in. The number of times such a situation may occur is \emph{not} bounded by a function of $\kappa$, and hence a simple branching will not suffice here; the next subsection is dedicated to resolving this obstacle.
 }

	\subsection{Grouping Interactions}
	\label{subsec:groupinter}
	Let \(\cH'''\) be an extension of \(\cH''\) obtained from the previous step described in Section~\ref{subsec:isointer}.
	\iflong{
	Recall that we can assume that each cell of \(\cH'''\) is only accessible from at most two base cells;
	or it is marked as such.
	We mark all unmarked cells appropriately.
	For a cell \(c\) of \(\cH'''\) we denote the set of its markers by \(\nu(c)\).}
	\ifshort{Recall that we assume that each cell of \(\cH'''\) is accessible from at most two base cells.}
	
	At this point we are roughly in a situation where we could apply our dynamic programming techniques {from} Section~\ref{sec:2vtcs}, namely having to consider only interactions between at most two vertices at a time.
	However we cannot apply such techniques for each cell separately as the cells are not independent.
	More specifically neighbors of vertices in \(\aV\) on the boundary of multiple cells which are accessible to that added vertex can potentially be reached through any of these cells;
	which cell is chosen impacts other edges that can be drawn into that cell, hence possibly forcing them to be drawn into other cells.
	
	\ifshort{We can now employ a similar argument as before Lemma~\ref{lem:stretchBranching} to group cells that are consecutive along the boundary of a base cell. 
For the remainder of this section, let us consider a base cell \(\mathfrak{b}\). 
	}\iflong{
	We can argue similarly as for stretches to group cells which are consecutive along the boundary of a base cell.	
	For this let \(\cG\) be a \vnice\ extension of \(\cH'''\) to \(G\) such that for any cell \(c\) of \(\cH'''\) and any edge in \(e \in E(G) \setminus E(H''')\) part of which is drawn inside \(c\) it holds that the base cell \(\mathfrak{b}\) which \(e\) starts in is an element of \(\nu(c)\).
	For the remainder of this section we fix a base cell \(\mathfrak{b}\).
	Traverse the boundary of \(\mathfrak{b}\) and concurrently track the encountered cells of \(\cH'''\) neighboring \(\mathfrak{b}\), starting from an arbitrary neighboring cell in counterclockwise direction.
	Here we say a cell is encountered, when an edge is encountered that bounds \(\mathfrak{b}\) and the cell in question.
	Note that cells may appear multiple times.
	Let the obtained ordering be given by \(a_1, \dotsc, a_{\ell}\).
	
	\begin{proposition}
		\label{prop:forbstructcells}
		There are \emph{no}
		base cells \(\mathfrak{b}_1,\mathfrak{b}_2\) and
		distinct cells \(c_1,c_2, d_1,d_2\) of \(\cH'''\) neighboring \(\mathfrak{b}\), whose first occurrence along an arbitrary traversal of the boundary of \(\mathfrak{b}\) is in this order,
		such that for all \(i \in \{1,2\}\) and \(x \in \{c,d\}\), \(\nu(x_i) = \{\mathfrak{b}, \mathfrak{b}_i\}\).
	\end{proposition}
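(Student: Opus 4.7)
The plan is to argue by contradiction in the same spirit as Proposition~\ref{prop:forbstruct}, extracting from $\cH'''^\times$ a planar configuration whose required embedding violates the Jordan curve theorem. Assume such $\mathfrak{b}_1, \mathfrak{b}_2$ and $c_1, c_2, d_1, d_2$ exist; we may further assume $\mathfrak{b}, \mathfrak{b}_1, \mathfrak{b}_2$ are pairwise distinct, since otherwise the marker assignment degenerates and the forbidden pattern collapses. Since $\nu(x_i) = \{\mathfrak{b}, \mathfrak{b}_i\}$ encodes accessibility of $x_i$ from $\mathfrak{b}_i$, and accessibility forces a curve that leaves $\mathfrak{b}_i$ through a single boundary edge and enters $x_i$, the face of $\cH'''^\times$ associated with $x_i$ must share a boundary edge with the face associated with $\mathfrak{b}_i$ for every $x \in \{c,d\}$ and $i \in \{1,2\}$.

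Next, represent each of the seven distinguished faces $\mathfrak{b}, \mathfrak{b}_1, \mathfrak{b}_2, c_1, c_2, d_1, d_2$ by a point in the face's interior and join two such points by a simple curve through each shared boundary edge witnessing neighbor-accessibility. This produces inside the plane pairwise non-crossing curves realizing the edges $b\tilde c_i$, $b\tilde d_i$ for $i \in \{1,2\}$, together with $b_1\tilde c_1$, $b_1\tilde d_1$, $b_2\tilde c_2$, $b_2\tilde d_2$. Crucially, the hypothesis that $c_1, c_2, d_1, d_2$ first appear in this order along the boundary of $\mathfrak{b}$ translates into the cyclic order $\tilde c_1, \tilde c_2, \tilde d_1, \tilde d_2$ of neighbors of $b$ in this drawing.

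The core step is then a single Jordan curve application. The concatenation $C := b_1 \tilde c_1 \cup \tilde c_1 b \cup b \tilde d_1 \cup \tilde d_1 b_1$ is a simple closed curve and so partitions the plane into two open regions $R_1, R_2$. Because the two edges $b\tilde c_1$ and $b\tilde d_1$ at $b$ belong to $C$ and are separated in the cyclic order around $b$ by $b\tilde c_2$ on one side and by $b\tilde d_2$ on the other, the edges $b\tilde c_2$ and $b\tilde d_2$ leave $b$ into the two \emph{different} open regions $R_1, R_2$, placing $\tilde c_2$ and $\tilde d_2$ on opposite sides of $C$. Since $b_2$ lies in exactly one of $R_1, R_2$, any pair of non-crossing curves from $b_2$ to both $\tilde c_2$ and $\tilde d_2$ must cross $C$, contradicting the planar realization.

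The only genuine obstacle is in the middle step: one has to choose the accessibility witnesses and the interior representative points so that the routed curves are actually pairwise non-crossing and so that the cyclic order at $b$ induced by traversing the boundary of $\mathfrak{b}$ in $\cH'''^\times$ really coincides with the one required by the Jordan argument. This is handled by routing each curve from the representative point of a face directly to the identified shared boundary edge, staying inside that face, exactly as in the analogous step of Proposition~\ref{prop:forbstruct}; the Jordan-style conclusion then follows immediately.
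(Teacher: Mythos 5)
Your proof is correct and follows essentially the same route as the paper's: both use Observation~\ref{obs:accessdist} to turn the markers into dual adjacencies, realize the resulting piece of the dual of \(\cH'''^{\times}\) in the plane with the rotation at \(\mathfrak{b}\) dictated by the boundary traversal, and derive a planarity contradiction. The only differences are presentational --- you keep \(\mathfrak{b}_1,\mathfrak{b}_2\) as intermediate path vertices and argue directly via the Jordan curve theorem, whereas the paper contracts them (and consecutive boundary edges) to exhibit a rotation-constrained non-planar graph on \(\{\mathfrak{b},c_1,c_2,d_1,d_2\}\); your implicit assumption that \(\mathfrak{b},\mathfrak{b}_1,\mathfrak{b}_2\) are pairwise distinct is the correct reading of the statement, since otherwise it would be false and it is only ever applied in that regime.
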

	
	\begin{proof}
		It is straightforward to verify that there is no plane drawing of the graph given by the vertex set
		\(\{\mathfrak{b}, c_1, c_2, d_1, d_2\}\) and the edge set
		\(\{\{\mathfrak{b},x_i\}, \{c_i,d_i\}, \{x_1,x_2\} \mid x \in \{c,d\}, i \in \{1,2\}\} \cup \{\{c_2,d_1\}\}\) such that the rotation scheme around \(\mathfrak{b}\) is given by \(c_1, c_2, d_1, d_2\) in this order.
		
		Such a drawing could however be obtained when considering the subgraph of the dual graph of \(\cH'''^\times\) inferred from the situation described in the statement of the proposition.
		For this if \(\mathfrak{b}_1 \notin \{c_1,d_1\}\) one needs to contract \(\mathfrak{b}_1, c_1\) (the appropriate edges exists by Observation~\ref{obs:accessdist}), and similarly if \(\mathfrak{b}_2 \notin \{c_2,d_2\}\) one also needs to contract \(\mathfrak{b}_2, c_2\) (the appropriate edges exists by Observation~\ref{obs:accessdist}.
		Moreover all edges of the dual graph of \(\cH'''^\times\) that correspond to the considered traversal of the neighbors of \(\mathfrak{b}\) and are not incident to can be contracted between \(c_1\) and \(c_2\), between \(c_2\) and \(d_1\), and between \(d_1\) and \(d_2\).
		Note that by these contractions of the edges do not contract the previously obtained edges \(\{c_1,d_1\}\) and \(\{c_2,d_2\}\), as these do not occur between \(c_1\) and \(c_2\), between \(c_2\) and \(d_1\), or between \(d_1\) and \(d_2\).
		
		Either one arrives at the desired path \(c_1, c_2, d_1, d_2\) in this way, or at the star with center \(c_2\) or \(d_1\).
		To resolve the latter case we assume loss of generality that \(c_2\) is the center.
		We can connect \(d_1\) to \(d_2\) that is plane together with the remainder of the considered minor of the dual graph of \(\cH'''^\times\) by tracing the path \(d_1, \mathfrak{b}, d_2\) in an \(\varepsilon\)-distance.
		This could only intersect vertices which lie between \(d_1\) and \(d_2\) in the rotation scheme around \(\mathfrak{b}\).
		Such vertices have however been removed by the contraction of edges.
	\end{proof}
	}One can divide the boundary of \(\mathfrak{b}\) into connected parts, which only separate \(\mathfrak{b}\) from cells marked as accessible from \(\mathfrak{b}\) and some specific second base cell or are not marked as accessible from \(\mathfrak{b}\).
	For a base cell \(\mathfrak{c}\), we refer to such parts as \emph{\(\mathfrak{c}\)-interfaces} (for \(\mathfrak{b}\)),
	and just \emph{interfaces} when we do not want to specify the related base cell. An illustration is provided in Figure~\ref{fig:sharedinterfaces}.
	\begin{figure}
		\begin{minipage}[c]{0.5\textwidth}
			\includegraphics[page=4,scale=.7]{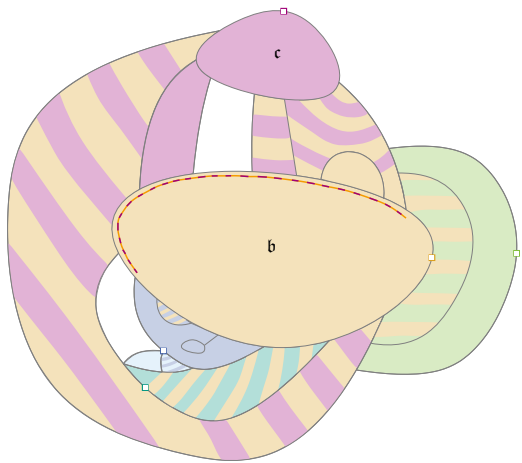}
			\vspace{-0.4cm}
		\end{minipage}\hfill
		\begin{minipage}[c]{0.45\textwidth}
			\caption{Illustration of interfaces. Square vertices are in $ \aV $, and cells are hatched in colors corresponding to a potential accessibility-marking, where every color corresponds to one base cell and a cell is marked as accessible from a base cell if and only if it contains its color.
				The orange-purple curve marks a \(\mathfrak{c}\)-interface from the viewpoint of \(\mathfrak{b}\).}
			\vspace{-0.4cm}
			\label{fig:sharedinterfaces}
		\end{minipage}
	\end{figure}
	
	\iflong{	As for stretches a division of the boundary of \(\mathfrak{b}\) into interfaces is not unique. 
Again we will consider a division which is obtained in a straightforward and greedy way:
	Go through \(a_1, \dotsc, a_\ell\)
	in one of two states: \textit{new interface}, which is also the initial state, and \textit{running interface}, which receives the base cell \(\mathfrak{c}\) and a \(\mathfrak{c}\)-interface as parameters.\\
	
	If in the state \textit{new interface} proceed until encountering a neighboring cell that is marked as accessible from \(\mathfrak{b}\) and second base cell \(\mathfrak{c}\).
	When this happens start a new \(\mathfrak{c}\)-interface which initially contains every vertex encountered since the beginning of the current \textit{new interface} state.
	Then switch into \textit{running interface} state with parameters \(\mathfrak{c}\) and the started interface.\\
	If in the state \textit{running interface} with parameters \(\mathfrak{c}\) and a \(\mathfrak{c}\)-interface add the encountered vertices to the \(\mathfrak{c}\)-interface until a cell is encountered that is marked with \(\mathfrak{b}\) and \(\mathfrak{d}\)  for some base cell \(\mathfrak{d} \notin \{\mathfrak{b},\mathfrak{c}\}\).
	When this happens switch into \textit{new interface state}.\\
	When reaching \(a_\ell\) check if the first constructed interface can be merged with the last interface under construction (no matter if the last state is \textit{new interface} or \textit{running interface}), and do so if possible.
	Observe that if this is not possible, there is a marker that prevents this, which means we are never in the situation of ending at \(a_\ell\) in the \textit{starting interface} stage without being able to merge{,} apart from the pathological case in which \(\nu\) maps only to a single base cell.
	This case can be neglected as then no further subdivision of the cells accessible from \(\mathfrak{b}\) is necessary for us to be able to arrive at the hypothetical solution \(\cG\).}
\ifshort{Given a hypothetical solution $\cG$, we can design a simple greedy procedure 
	that divides accessible cells of $\cH'''$ into interfaces. For the interfaces constructed by this procedure, we can show the following.}

\begin{lemma}
		\label{lem:interfacebound}
		Let \(\mathfrak{c}\) be a base cell.
		After the described procedure, there are at most \(\bigoh(\kappa^3)\) \(\mathfrak{c}\)-interfaces.
	\end{lemma}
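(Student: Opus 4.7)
The plan is to follow the template of the proof of Lemma~\ref{lem:stretchbound}, with Proposition~\ref{prop:forbstructcells} serving as the combinatorial backbone in place of Proposition~\ref{prop:forbstruct}. By Lemma~\ref{lem:baseregions} the total number of base cells is $z \in \bigoh(\kappa^3)$, and the lemma amounts to bounding the number $\sigma$ of $\mathfrak{c}$-interfaces by $\bigoh(z)$.

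First I would observe that, by the greedy construction, consecutive $\mathfrak{c}$-interfaces along the traversal of the boundary of $\mathfrak{b}$ are always separated by at least one encountered neighboring cell marked $\{\mathfrak{b}, \mathfrak{d}\}$ for some $\mathfrak{d} \neq \mathfrak{c}$; otherwise the greedy state transitions would have merged the two interfaces. I would pick one such cell as a \emph{witness} for each separator and associate it with its label $\mathfrak{d}$, so that $\sigma$ equals the number of witnesses.

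Next I would split into two cases mirroring the stretch proof. In the first case, if for every $\mathfrak{d} \neq \mathfrak{c}$ at most two witnesses carry the label $\mathfrak{d}$, then $\sigma \leq 2(z-1) \in \bigoh(\kappa^3)$ follows immediately. Otherwise there exists a unique $\mathfrak{d}^\star$ whose label is carried by more than two witnesses; the uniqueness (together with the matching bound of at most two witnesses for every other $\mathfrak{d}' \notin \{\mathfrak{c}, \mathfrak{d}^\star\}$) follows from Proposition~\ref{prop:forbstructcells} applied to the label pair $\{\mathfrak{b}, \mathfrak{d}^\star\}, \{\mathfrak{b}, \mathfrak{d}'\}$, which rules out the forbidden interleaving pattern. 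The extra contribution of $\mathfrak{d}^\star$-witnesses is then controlled by a further application of Proposition~\ref{prop:forbstructcells}, this time to the pair $\{\mathfrak{b}, \mathfrak{c}\}, \{\mathfrak{b}, \mathfrak{d}^\star\}$, using that every $\mathfrak{c}$-interface contains at least one $\{\mathfrak{b}, \mathfrak{c}\}$-labeled cell: too many $\mathfrak{d}^\star$-witnesses interleaved with distinct $\mathfrak{c}$-labeled cells would again produce the forbidden pattern. Combining these contributions yields $\sigma \in \bigoh(z) = \bigoh(\kappa^3)$.

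The main obstacle I expect is that Proposition~\ref{prop:forbstructcells} requires four \emph{distinct} cells, whereas a single cell of $\cH'''$ may appear multiple times along the boundary traversal of $\mathfrak{b}$ (recall that cells can be encountered more than once). Hence some care is needed when choosing the witnesses and the $\mathfrak{c}$-representatives inside each interface to ensure that the four cells extracted for each application of the proposition really are distinct; in those remaining cases where they are forced to coincide, I would attribute the coincidences to the bounded multiplicity with which a single cell can occur along the boundary of $\mathfrak{b}$, arguing that each such repeated cell can inflate the count only by a constant factor and therefore does not affect the asymptotic bound.
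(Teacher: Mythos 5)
Your proposal matches the paper's proof in essence: both rest on Proposition~\ref{prop:forbstructcells} ruling out the interleaving pattern, mirror the structure of the stretch-bound argument, and count the separators between consecutive \(\mathfrak{c}\)-interfaces to obtain an \(\bigoh(z)=\bigoh(\kappa^3)\) bound where \(z\) is the number of base cells. The paper's version is slightly leaner---it observes that at most one label can have more than one interface and counts separating \emph{interfaces} rather than witness cells (each other label contributing at most one, giving \(z-2\) directly)---and, much like your final paragraph, it leaves the distinctness/multiple-occurrence subtlety unaddressed rather than resolving it.
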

	\iflong{	
	\begin{proof}
		Let \(z\) denote the number of base cells.
		Recall that by Lemma~\ref{lem:baseregions} the number of base cells lies in \(\bigoh(\kappa^3)\).
		
		By construction all interfaces are inclusion maximal.
		If for all base cells \(\mathfrak{d} \neq \mathfrak{b}\) there is at most one \(\mathfrak{d}\)-interface,
		there is obviously at most \(1\) \(\mathfrak{c}\)-interface.
		
		Otherwise by Proposition~\ref{prop:forbstructcells} there is exactly one base cell \(\mathfrak{d} \neq \mathfrak{b}\) for which there is more than one inclusion-maximal \(\mathfrak{d}\)-interface.
		Assume this to be the case for \(\mathfrak{c}\).
		Because of inclusion maximality, the \(\mathfrak{c}\)-stretches have to be separated by some \(\mathfrak{d}\)-stretches where \(\mathfrak{d} \notin \{\mathfrak{b}, \mathfrak{c}\}\).
		This means there are at most \(z - 2\) \(\mathfrak{c}\)-interfaces.
	\end{proof}
	}

	We now formalize the subproblem that captures the case of two added vertices which we will obtain at the end of this section. The subproblem is derived on interfaces for every pair of base cells, and will be solved in the next section.

	\newcommand{\subproblem}{\textsc{2-SBCROC}}
	\begin{mdframed}\label{prob:2vsubproblem}
		\textsc{\(2\)-Subdivided Base Cell Routing with Occupied Cells (\subproblem)}\\
		{\itshape Instance:}  A graph \(S\), a connected subgraph \(T\) of \(S\) with \(V(S) = V(T)\) and two vertices \(x\) and \(y\) and a 1-planar drawing \(\mathcal{T}\) of \(T\), with some cells marked as \emph{occupied},
		and there is a {simple} \emph{\(x\)-walk} along boundaries of cells of \(\mathcal{T}\) that have \(x\) on its boundary,
		and there is a {simple} \emph{\(y\)-walk} along boundaries of different cells of \(\mathcal{T}\) that have \(y\) on its boundary.
		Moreover each edge $E(S) \setminus E(T)$ has precisely one endpoint among \(\{x,y\}\).\\
		{\itshape Task:} Find a{n untangled} 1-planar extension of \(\mathcal{T}\) to \(S\) such that no drawing of an edge in \(E(S) \setminus E(T)\) intersects the interior of an occupied cell of \(\mathcal{T}\), and every edge of \(T\) that is crossed by an edge in \(E(S) \setminus E(T)\) incident to \(x\) lies on the {prescribed} \(x\)-walk, and every edge of \(T\) that is crossed by an edge in \(E(S) \setminus E(T)\) incident to \(y\) lies on the {prescribed} \(y\)-walk.
	\end{mdframed}
	
	This is obviously a restriction of \textsc{1-Planar Drawing Extension} for \((S,T,\mathcal{T})\) and we let the relevant terminology (e.g.\ \emph{added edges}) carry over.

\ifshort{
To arrive at appropriate subinstances, Lemma~\ref{lem:interfacebound} and the untangledness of the targeted solution ultimately allows us to branch on delimiting edges for interfaces.
Neighbors of vertices in \(\aV\) where incident edges could still be drawn using a choice of two or more interfaces can be carefully separated and handled either by explicitly branching on the missing drawings of all incident edges, or within a special independent subinstance of \textsc{Untangled $\kappa$-Bounded 1-Planar Drawing Extension} which can be solved using \cite[{Corollary} 17]{Aug1P-ICALP}.
In this way we can show:
}



\begin{lemma}
		\label{lem:reductiontoTwoVertices}
		For every fixed $\kappa$, there is a $n^{\bigoh(\kappa^{28})}$-time Turing reduction 
		from the problem of finding a \vnice\ extension of $\cH'''$ conforming to the current branch	
		 to \subproblem.
	\end{lemma}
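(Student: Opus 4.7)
The plan is to branch exhaustively on a bounded number of structural choices fixing the interaction of interfaces, so that the remaining problem decomposes into pairwise independent instances of \subproblem. First, for every base cell $\mathfrak{b}$ we apply Lemma~\ref{lem:interfacebound}: for each other base cell $\mathfrak{c}$ there are $\bigoh(\kappa^3)$ $\mathfrak{c}$-interfaces for $\mathfrak{b}$, and hence $\bigoh(\kappa^9)$ interfaces in total across all pairs. Each interface is delimited along the boundary of $\mathfrak{b}$ by at most two new edges (the outermost ones routing between $\mathfrak{b}$ and $\mathfrak{c}$), and the drawing of such a delimiting edge is determined up to extendability by its endpoints together with the at most one edge of $\cH'''$ it crosses. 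We therefore branch exhaustively over all these choices, yielding $n^{\bigoh(\kappa^9)}$ branches in which the boundaries of all interfaces become part of the partial drawing; branches producing invalid partial 1-planar drawings are discarded.

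The main obstacle appears next: a neighbor $w$ of some added vertex $v_\mathfrak{b}$ may lie on the boundary of several cells accessible from $\mathfrak{b}$ through different paired base cells, so an edge $v_\mathfrak{b} w$ could in principle be routed through any of the corresponding interfaces, coupling the would-be-independent subinstances. To decouple them we branch explicitly on the drawing of every added edge incident to such an ambiguous neighbor. By Observation~\ref{obs:H'avertexdeg} each added vertex has $\bigoh(\kappa^2)$ incident added edges, and the number of ambiguous neighbors is bounded by a polynomial in $\kappa$ because each is associated with an endpoint of one of the $\bigoh(\kappa^9)$ committed interfaces; since each such edge drawing is fixed up to extendability by the at most one edge of $\cH'''$ it crosses, this adds a further $n^{\mathrm{poly}(\kappa)}$ branching factor. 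After this step, every unfixed added edge is forced to be drawn within the territory of a unique base cell pair.

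Finally, in each remaining branch, for every pair $(\mathfrak{b},\mathfrak{c})$ of base cells joined by an interface we assemble an independent \subproblem instance: the graph $T$ consists of $H'''$ together with all edges fixed during branching, restricted to the cells assigned to $(\mathfrak{b},\mathfrak{c})$; the distinguished vertices are $x := v_\mathfrak{b}$ and $y := v_\mathfrak{c}$; the $x$-walk comprises the committed $\mathfrak{c}$-interfaces of $\mathfrak{b}$ and the $y$-walk the committed $\mathfrak{b}$-interfaces of $\mathfrak{c}$ (both simple by construction and by untangledness); and any cell assigned to a different base cell pair that would touch this walk system is marked as \emph{occupied} so that the subinstance's added edges cannot pass through it. The remaining added edges incident to $v_\mathfrak{b}$ or $v_\mathfrak{c}$ are precisely those that must be routed through this pair's interfaces, so an untangled extension of the current branch corresponds to simultaneous solutions of the $\bigoh(\kappa^6)$ resulting subinstances; combining such solutions with the edges fixed during branching yields the sought based untangled 1-planar extension of $\cH'''$ and completes the reduction within the claimed time bound.
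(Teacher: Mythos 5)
Your first and third steps track the paper's argument, but the middle step contains a genuine gap. You claim that the number of ambiguous neighbors --- vertices of \(H'''\) that an added vertex could reach through more than one interface --- is bounded by a polynomial in \(\kappa\) ``because each is associated with an endpoint of one of the \(\bigoh(\kappa^9)\) committed interfaces.'' This is false: such a vertex need not sit at an endpoint of an interface. Two cells belonging to different interfaces can share a boundary path containing arbitrarily many vertices of \(H'''\), and every neighbor of \(v\in\aV\) on that path is ambiguous; the paper explicitly warns at the end of Subsection~\ref{subsec:isointer} that the number of times such a situation occurs is \emph{not} bounded by any function of \(\kappa\), so ``a simple branching will not suffice here.'' Branching on the drawing of every added edge incident to every ambiguous neighbor therefore costs \(n^{\Theta(n)}\) in the worst case rather than \(n^{\mathrm{poly}(\kappa)}\), and your reduction does not run in the claimed time.

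The paper resolves this in two stages. First, a \(K_{3,3}\)-based planarity argument shows that for each \(v\in\aV\) at most one vertex can be reachable from any fixed triple of interfaces, so at most \(\bigoh(\kappa^{27})\) vertices per added vertex lie in three or more interfaces; the resulting \(\bigoh(\kappa^{28})\) edges are branched on explicitly, and this is precisely where the exponent \(\kappa^{28}\) in the statement comes from --- your proposal offers no account of it. Second, for neighbors lying in exactly two interfaces, the paper bounds the number of \emph{connected stretches} of shared boundary along which this ambiguity occurs by \(\bigoh(\kappa^9)\), branches only on the outermost ``delimiting'' edges of each stretch, and observes that these delimiters enclose a region separated from all of \(\aV\setminus\{v\}\); each enclosed region is then handed off as an independent instance of \textsc{1-Planar Drawing Extension} with a single added vertex and solved by the polynomial-time algorithm of \cite[Corollary 17]{Aug1P-ICALP}. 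Without this delegation to a one-added-vertex subroutine (or some substitute that avoids per-vertex branching), the decoupling into \subproblem{} instances cannot be completed in polynomial time.
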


\iflong{
\begin{proof}
	
Edges in \(E(G) \setminus E(H''')\) that start in \(\mathfrak{b}\) either are completely drawn within \(\mathfrak{b}\) or cross into some neighboring cell \(c\).
This cell is part of some interface.
As \(\cG\) is \vnice, the drawings of these edges do not intersect.
This implies that the drawings of the at most two outermost edges that start in \(\mathfrak{b}\) and cross into each \(\mathfrak{c}\)-interface delimit a subset of \(\mathfrak{b}\) into which all parts of edges starting in \(\mathfrak{b}\) and crossing into this interface have to be drawn to reach \(\cG\).
What is more, all drawings of edges starting in \(\mathfrak{b}\) lie \emph{completely} in the union {of} one of these subsets with the neighboring cells determined by the corresponding interface.
By Lemma~\ref{lem:interfacebound}, for each pair of base cells, there are at most \({\bigoh(\kappa^3)}\) interfaces. Hence there are \({\bigoh(\kappa^9)}\) interfaces in total and we can enumerate the drawings of edges delimiting these interfaces from the viewpoint of respective base cell boundaries in \(n^{\bigoh(\kappa^9)}\) many branches.

At this point every cell can be assigned to at most two interfaces each from {the} point of view of different base cell. Moreover, if cell $c$ is assigned to two interfaces, one of which is a \(\mathfrak{c}\)-interface of \(\mathfrak{b}\), then it has to be assigned to a \(\mathfrak{b}\)-interface of \(\mathfrak{c}\). Otherwise, we are in an incorrect branch and we can terminate. Now, our goal is to create for each pair of base cell \((\mathfrak{b}, \mathfrak{c})\) an instance that consists of all the cell that are either part of a \(\mathfrak{c}\)-interface of \(\mathfrak{b}\)or of a \(\mathfrak{b}\)-interface of \(\mathfrak{c}\). Note however that this does not completely address the interdependency of cells due to neighbors of added vertices on the boundary of multiple cells which are accessible to that added vertex can potentially be reached through any of these cells.
Instead of having to consider interdependent cells, we now have interdependent interfaces.
In particular, if \(\mathfrak{b}\) is a base cell for \(v \in \aV\) a neighbor which has yet to be connected to \(v\) in \(\cH'''\) can of course also lie on the boundary of multiple interfaces, especially also on interfaces of a different base cell \(\mathfrak{b}'\) for \(c\) and hence can be connected through any of them.

Let us first deal with the case when a vertex is in the boundary of at least $3$ interfaces. We show that for every vertex $v$ there are at most ${\bigoh(\kappa^{27})}$ vertices that are accessible from three different interfaces for some base cell for \(v \in \aV\). We remark, that we do not assume that these interfaces are for two different base cells.
In particular it holds even if all of the interfaces are $\mathfrak{c}$-interfaces for $\mathfrak{b}$ for some fixed  $\mathfrak{c}$ and $\mathfrak{b}$. 

Let us fix three pairwise distinct interfaces $\mathfrak{I}_1$, $\mathfrak{I}_2$, $\mathfrak{I}_3$ and three base cell $\mathfrak{b}_1$, $\mathfrak{b}_2$, and $\mathfrak{b}_3$ for \(v \in \aV\) such that $\mathfrak{I}_1$ is an interface of $\mathfrak{b}_1$, $\mathfrak{I}_2$ an interface for $\mathfrak{b}_2$ and $\mathfrak{I}_3$ is an interface for $\mathfrak{b}_3$. Since there are only $\bigoh(\kappa^3)$ many base cell, there are at most $\kappa^9$ choices for the base cells $\mathfrak{b}_1$, $\mathfrak{b}_2$, and $\mathfrak{b}_3$ and by Lemma~\ref{lem:interfacebound}, there are at most $\bigoh(\kappa^6)$ choices for each of interfaces $\mathfrak{I}_1$, $\mathfrak{I}_2$, and $\mathfrak{I}_3$, respectively. Therefore there are at most ${\bigoh(\kappa^{27})}$ many triples $\mathfrak{I}_1$, $\mathfrak{I}_2$, $\mathfrak{I}_3$ and it suffice to show that for each such choice of interfaces there are at most constantly many vertices in all three of these interfaces. We will actually show that there is at most one such vertex.  

By definition, for $i\in \{1,2,3\}$, the interface $\mathfrak{I}_i$ for \(\mathfrak{b}_i\) is a part of the boundary of \(\mathfrak{b}_i\) in a connected curve, \emph{i.e.,} a subpath of $H'''$. Let $P_i$ be such subpath defined for $\mathfrak{I}_i$. Moreover, from definition of the interfaces it follows that $P_i$ and $P_j$ for $i\neq j$ do not intersect and for every vertex $x$ on the boundary of $\mathfrak{I}_i$, we can draw a edge from an vertex in $P_i$ to $x$ wholly inside $\mathfrak{I}_i$. It follows that if we contract each $P_i$ to a single vertex $p_i$, the complete bipartite graph with bipartition $\{p_1,p_2,p_3\}$ and $\{v\}\cup \{x\mid x\mbox{ is on the boundary of } \mathfrak{I}_i, i\in \{1,2,3\} \}$ is planar. Since planar graph cannot contain $K_{3,3}$ as an subgraph, it follows that there can be at most one vertex that is in all three interfaces.

As there are only $\kappa$ vertices in $\aV$, {i}t follows that there are at most $\bigoh(\kappa^{28})$ new edges that can each access their endpoint, different than the one in $\aV$, from at least three different interfaces. Since at this point we already identified the interfaces, we can easily identify these edges and we can branch on all possible $n^{\bigoh(\kappa^{28})}$ drawings of these edges. 
From now on, we can assume that all vertices are in at most two interfaces.

Now the neighbors of any \(v \in \aV\) can almost all uniquely be assigned to an interface.
The only neighbors of \(v\) for which this is not uniquely possible are those that lie on the shared boundaries of cells \(c_1\) and \(c_2\) that belong to two different interfaces (more than two have been handled by above argument).
By the bound on the pairs of interfaces, there are at most \(\bigoh(\kappa^{18})\) pairs of interfaces that might contain the shared boundary of \(c_1\) and \(c_2\) respectively.
Let \(d_1\) be in a \(\mathfrak{b}_1\)-interface from the viewpoint of \(\mathfrak{c}_1\), and \(d_2\) be in a \(\mathfrak{b}_2\)-interface from the the viewpoint of \(\mathfrak{c}_2\), and without loss of generality \(b_1\) and \(b_2\) are two (not necessarily distinct) base cells for \(v\).
By a similar approach as used for stretches as constructed in Subsection~\ref{subsec:isointer} and interfaces constructed in Subsection~\ref{subsec:groupinter} we can traverse the boundary of the union of all cells touching any fixed interface, and bound the number of connected parts along this boundary intersecting at most one other fixed interface by \(\bigoh(\kappa^9)\).

Again, similarly to the consideration for stretches and interfaces, we can branch on the drawings of missing ``delimiting'' edges in a hypothetical solution incident to \(v\) for each such connected part, which are just the outermost edges (with respect to a traversal of the unions of the interfaces touching \(c_1\) and \(c_2\)) edges incident to \(v\).
These delimiting edges then form a closed curve that separates the space enclosed by this curve from every vertex in \(\aV \setminus \{v\}\).
This enclosed space can then be treated as an instance of \textsc{1-Planar Drawing Extension} with \(|\aV|=1\), which can be solved in polynomial time using \cite[Corollary 17]{Aug1P-ICALP}, and no further edge to \(v\) needs to be drawn to the intersection outside the enclosed space.

All together this branching requires time in \(\bigoh(n^{\kappa^{18}})\).



At this point we assume that for each edge in \(E(G) \setminus E(H''')\) the interface which contains part of this edge in a targeted hypothetical solution is uniquely determined.
	In particular, our interfaces give rise to the following instances of \subproblem:
For every two base cells \(\mathfrak{b}\) and \(\mathfrak{c}\) in \(\cH''\) of \(x\) and \(y\) respectively,  consider the instance \((S,T,\mathcal{T})\) given as follows:
\(S\) is the subgraph of \(G\) which arises from \(G\) by removing all edges in \(E(G) \setminus E(H''')\) that are not incident to \(x\) or \(y\),
and also removing edges in \(E(G) \setminus E(H''')\) which are incident to \(x\) and whose endpoint in \(V(H)\) is not branched to be within a \(\mathfrak{c}\)-interface from the viewpoint of \(\mathfrak{b}\),
and edges in \(E(G) \setminus E(H''')\) which are incident to \(y\) and whose endpoint in \(V(H)\) is not branched to be within a \(\mathfrak{b}\)-interface from the viewpoint of \(\mathfrak{c}\).
By construction \(\cH'''\) is a subgraph of \(G\).
We mark all cells which are not subcells of \(\mathfrak{b}\), \(\mathfrak{c}\) or part of a \(\mathfrak{c}\)-interface from the viewpoint of \(\mathfrak{b}\) or a \(\mathfrak{b}\)-interface from the viewpoint of \(\mathfrak{c}\) as occupied.
What is more, by assumption on the targeted hypothetical solution to our original \textsc{1-Planar Drawing Extension}-instance any edge missing from \(\cH''\) that starts in a base cell of \(\cH''\) connects to a vertex, or crosses an edge on the walk along the boundary of this base cell between the neighbors of \(v\) on that boundary, without \(v\).
What is more any edge missing from \(\cH'''\) that starts in a base cell of \(\cH''\) connects to a vertex, or crosses an edge on the walk along the boundary of one of the subdivisions in \(\cH'''\) of this base cell base cell between the neighbors of \(v\) on that boundary, without \(v\).
These walks for \(x\) and \(y\) respectively are appropriate choices for the \(x\)-walk and the \(y\)-walk.
With these definitions of occupied cells, the \(x\)-walk and the \(y\)-walk we can consider the instance \((S,H''',\cH''')\) of \subproblem.

From the results in this section, we have that a branch for solving \textsc{1-Planar Drawing Extension} for \((G,H,\cH''')\) leads to a solution, whenever every added edge is considered as an added edge in one of these instances, and all the considered instances have solutions.
\end{proof}	
}
	

	\section{1-Plane Routing of Two Vertices}\label{sec:2vtcs}
	In this section we give an algorithm to solve \subproblem\ for an arbitrary instance \((S,T,\mathcal{T})\), which ultimately allows us to prove our main result.
	\iflong{In particular we consider two distinguished vertices, \(x\) and \(y\).
	We refer to the cells of \(\mathcal{T}\) along the \(x\)-walk that contain \(x\) on their boundary as \emph{starting cells} for \(x\), and the cells of \(\mathcal{T}\) along the \(y\)-walk that contain \(y\) on their boundary collectively as \emph{starting cells} for \(y\).
	}

	The idea of the algorithm to solve \subproblem\ is for the most part the same as in \cite[Section 6]{Aug1P-ICALP}.
	In particular our algorithm employs a carefully designed dynamic ``delimit-and-sweep'' approach to iteratively arrive at situations which can be reduced to a network-flow problem to which standard maximum-flow algorithms can be applied.
	However several adaptations can, or have to, be made due to the fact that the instances considered here have a slightly different structure (e.g.\ this algorithm also handles cases, where \(x\) and \(y\) lie on the boundaries of multiple cells) than the ones considered for the special case considered in \cite{Aug1P-ICALP} and are also more general (e.g.\ occupied cells have to be taken into acount).
	\ifshort{
	In the following, we include a very high-level description of our algorithm for \subproblem.
	
	Our procedure relies on the following considerations:
	{Let \(\lambda\) be a function from the cells of \(\mathcal{T}\) to \(\{x, y,\emptyset\}\), that maps every occupied cell to \(\emptyset\).}
	We say a 1-planar extension of \(\mathcal{T}\) to \(S\) is \emph{\(\lambda\)-consistent} if whenever the drawing of {any} edge in \(E(S) \setminus E(T)\) which is incident to \(x\) intersects the interior of a cell \(c\) of \(\mathcal{T}\) which is not a starting cell of \(x\), then \(\lambda(c) = x\),
	and correspondingly whenever the drawing of any edge in \(E(S) \setminus E(T)\) which is incident to \(y\) intersects the interior of cell \(c\) of \(\mathcal{T}\) which is not a starting cell of \(y\), then \(\lambda(c) = y\) (i.e., $\lambda$ restricts the drawings of which added edges may enter which face).
	{Note that a \(\lambda\)-consistent drawing is always untangled.}
	For a given \(\lambda\) it can be shown that we can either find a \(\lambda\)-consistent solution of \((S,T,\mathcal{T})\) for \subproblem\ or decide that there is none, by constructing an equivalent network flow problem.
	
	\begin{lemma}
	\label{lem:flow}
	Given $\lambda$ as above, it is possible to determine whether there exists a $\lambda$-consistent solution of \((S,T,\mathcal{T})\) for \subproblem\ in polynomial time.
	\end{lemma}
	
	Observe that for a hypothetical 1-planar extension \(\mathcal{S}\) of \(\mathcal{T}\) to \(S\), \(\lambda\) such that \(\mathcal{S}\) is a \(\lambda\)-{consistent} extension of \(\mathcal{T}\) does necessarily exist.
	This is because some cells of \(\mathcal{T}\) might need to be further subdivided, before they can be completely assigned to \(x\) or \(y\) by any \(\lambda\).
	Hence the aim of our dynamic program is to branch on the additional drawings of some edges into \(\mathcal{T}\) such that for this extended drawing \(\mathcal{T}\) we are able to iteratively extend an assignment \(\lambda\) for which we iteratively extend a \(\lambda\)-consistent {extension} of \(\mathcal{T}\) to an extension \(\mathcal{S}\) on \(S\), by applying Lemma~\ref{lem:flow}.
	For this our dynamic program proceeds along the \(x\)-walk and the \(y\)-walk simultaneously.
	Each step of the dynamic program is described by a \emph{record} which, when well-formed encodes a so called \emph{delimiter} \(D\).
	Intuitively, \(D\) is a simple curve from $x$ to $y$ that separates the instance into two subinstances such that
	\begin{enumerate}
	\item the drawing of every added edge in the hypothetical solution of \((S,T,\mathcal{T})\) does not intersect \(D\), i.e.\ lies completely on one side of \(D\), and
	\item we have an assignment $\lambda$ that maps all cells, the last vertex on the \(x\)-walk and the \(y\)-walk of which occurs before \(D\) on the traversals of the \(x\)-walk and the \(y\) walk, to $\{x,y,\emptyset\}$.
	\end{enumerate}
	In this sense \(D\) indicates, at which stage of the dynamic program we are; the cells ``before'' \(D\) on the traversals of the \(x\)-walk and the \(y\)-walk are already sufficiently processed to apply Lemma~\ref{lem:flow}, and we still need to achieve this for the remaining cells.
	The cases used to define the delimiter are nearly identical to those used in \cite[Section 6]{Aug1P-ICALP}.
	Using the fact that the maximum size of the records is polynomial, and the possibility
	of transitioning from one record to the next can be checked in polynomial time, we can show:
	\begin{lemma}
	\label{thm:2vertices}
	\subproblem\ can be solved in $\bigoh(|V(T)|^{12})$.
	\end{lemma}
	}
	
	\iflong{
	Because of this, and the sake of self-containment, we include a description of the procedure in some detail.
	In particular, we describe the situation which we can reduce to a network flow problem in Section~\ref{subsec:flow}.
	The main dynamic program is described in \ref{subsec:dynprog}; compared to the algorithm given in \cite{Aug1P-ICALP}, in the instanciation of the problem is somewhat different---once again the well-structuredness of a hypothetical solution achieved by the previous steps allows for a somewhat more elegant treatment, but more importantly imposes additional restrictions, that impact other derived subinstances and have to be additionally considered.
	
	\subsection{A Flow Subroutine}
	\label{subsec:flow}
	Our dynamic program is based on a generic network-flow subroutine that allows us to immediately parts of certain derived instances of \textsc{1-Planar Drawing Extension}.
	For this let \((S,T,\mathcal{T})\) be an instance of \subproblem.
	Let \(\lambda\) be a function from the cells of \(\mathcal{T}\) to \(\{x, y,\emptyset\}\), that maps every occupied cell to \(\emptyset\).
	
	We say a 1-planar extension of \(\mathcal{T}\) to \(S\) is \emph{\(\lambda\)-consistent} if whenever the drawing of {any} edge in \(E(S) \setminus E(T)\) which is incident to \(x\) intersects the interior of cell \(c\) of \(\mathcal{T}\) which is not a starting cell of \(x\), then \(\lambda(c) = x\),
	and correspondingly whenever the drawing of any edge in \(E(S) \setminus E(T)\) which is incident to \(y\) intersects the interior of cell \(c\) of \(\mathcal{T}\) which is not a starting cell of \(y\), then \(\lambda(c) = y\) (i.e., $\lambda$ restricts the drawings of which added edges may enter which face).
	{Note that a \(\lambda\)-consistent drawing is always untangled.}
	We show that for a given \(\lambda\) we can either find a \(\lambda\)-consistent solution of \((S,T,\mathcal{T})\) for \subproblem\ or decide that there is none, by constructing an equivalent network flow problem.
	
	\begin{lemma}
		\label{lem:flow}
		Given $\lambda$ as above, it is possible to determine whether there exists a $\lambda$-consistent solution of \((S,T,\mathcal{T})\) for \subproblem\ in polynomial time.
	\end{lemma}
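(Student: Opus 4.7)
The plan is to reduce the question to a polynomial-size integer flow problem. The crucial structural observation is that, given $\lambda$, the possible trajectories of any individual added edge $e$ are extremely restricted. Since $\mathcal{T}$ is already 1-planar and any extension must remain 1-planar, each added edge crosses at most one edge of $T$; combined with $\lambda$-consistency and the walk restriction, this means that if $e$ is incident to $z\in\{x,y\}$ with non-$z$ endpoint $v$, then $e$ is either drawn entirely inside one of the (at most polynomially many) starting cells of $z$ that has $v$ on its boundary, or it crosses exactly one edge $\alpha$ of $T$ lying on the $z$-walk and then enters the unique cell $c$ with $\lambda(c)=z$ on the other side of $\alpha$, whose boundary must contain $v$. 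For each added edge we can thus enumerate in polynomial time a list of at most $\operatorname{poly}(|V(T)|)$ candidate routes, each described by (i) the starting cell it uses and (ii) the $T$-edge it crosses, if any.

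Next, I would build a flow network $N$ with a global source $s$, a global sink $t$, and separate intermediate source nodes $s_x,s_y$ connected from $s$ with capacity $|A_x|$ and $|A_y|$ respectively, where $A_x,A_y$ are the sets of added edges incident to $x$ and $y$. For each $z\in\{x,y\}$ and each added edge $e=zv\in A_z$, I would introduce a node $e$ with incoming arc of capacity $1$ from $s_z$ and outgoing arc of capacity $1$ to a demand node $t_e$, then add a unit-capacity arc from $t_e$ to $t$. For every candidate route of $e$ I would add an intermediate path through $e$ that uses either a special ``direct'' node for the chosen starting cell (with large capacity) or the node for the crossed $T$-edge $\alpha$, where each $T$-edge node has unit capacity (modeled by node splitting). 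This unit capacity encodes 1-planarity, and the fact that $x$-edges and $y$-edges share the same pool of $T$-edge nodes correctly captures that a single $T$-edge can serve as the crossing of at most one added edge. A $\lambda$-consistent solution then clearly yields an integer flow of value $|A_x|+|A_y|$, and conversely any such integer flow selects exactly one route per added edge without overloading any $T$-edge.

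The main obstacle is the converse realizability step: given a saturating integer flow, one must actually produce an untangled 1-planar drawing. This is the place where the detailed structure of the instance matters. Fixing the chosen routes, the drawings inside each cell $c$ of $\mathcal{T}$ consist of curves with endpoints on $\partial c$ (some of them endpoints $x$ or $y$, the rest induced crossings on $T$-edges of $\partial c$ or target vertices on $\partial c$); by $\lambda$-consistency only one of $x,y$ lies on $\partial c$ as an endpoint of interior curves, and by the simplicity of the $x$-walk and $y$-walk every crossing point on $\partial c$ is unique. Hence the required curves have pairwise distinct endpoints on a single closed curve $\partial c$ and all emanate from at most one common point (namely $x$ or $y$, if present), which allows them to be realized as pairwise non-crossing arcs inside $c$; I would give a short argument by induction on the number of curves, using the Jordan curve theorem to peel off an ``outermost'' curve. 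This then yields a globally consistent 1-planar drawing.

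Finally, since $N$ has $\operatorname{poly}(|V(T)|)$ nodes and arcs and integer maximum flow in such networks is computable in polynomial time, the overall procedure runs in polynomial time, completing the proof.
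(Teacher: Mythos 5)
Your overall strategy---reducing the existence of a $\lambda$-consistent solution to a polynomial-size integer flow problem---matches the paper's, and your enumeration of the possible routes of each added edge (drawn entirely inside a starting cell, or crossing exactly one crossable walk-edge into the unique adjacent cell with the right $\lambda$-value that has the target vertex on its boundary) is correct. The difference that matters is the granularity at which routes are encoded. Your network commits each added edge to a \emph{specific} crossed $T$-edge (via unit-capacity $T$-edge nodes), whereas the paper builds two separate networks (one for $x$, one for $y$) in which the arc from a starting-cell node $v_{c_x}$ to a cell node $v_c$ with $\lambda(c)=x$ carries capacity equal to the \emph{number} of crossable edges on the shared boundary of $c_x$ and $c$; which concrete crossable edge each added edge uses is decided only at realization time.

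This difference is exactly where your proof has a genuine gap. In the converse (realizability) direction you claim that, once the routes are fixed by the flow, the curve segments inside a cell $c$ ``all emanate from at most one common point'' and can therefore be peeled off one by one as pairwise non-crossing arcs. That is true inside a starting cell, where every segment is incident to $x$ (or $y$), but it is false in a non-starting cell with $\lambda(c)=x$: there each segment is a chord of $c$ joining a prescribed crossing point on $\partial c$ to a prescribed target vertex on $\partial c$, and if two such prescribed endpoint pairs interleave in the cyclic order along $\partial c$, the Jordan curve theorem forces the two chords to cross---contradicting untangledness (and 1-planarity, since each of these added edges has already spent its one allowed crossing on a $T$-edge). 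A maximum flow in your network can perfectly well select such an interleaving assignment, so your induction that peels off an ``outermost'' curve does not get started. The repair is precisely the paper's device: the flow should fix only the multiplicity of crossings for each (starting cell, target cell) pair, and the concrete crossable edges are then assigned afterwards so that the order of crossing points along the shared boundary matches the order in which the corresponding target vertices appear along $\partial c$. Equivalently, you could keep your network but add a post-processing step that permutes, within each cell pair, which chosen crossable edge serves which added edge; without one of these fixes the realizability step fails. (Your use of a single network with shared unit-capacity $T$-edge nodes, rather than the paper's two independent networks, is otherwise a reasonable---arguably more careful---variant, since it accounts for a crossable edge being a candidate for both an $x$-edge and a $y$-edge.)
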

	\begin{proof}
		Consider the maximum flow instance $\theta_1$ constructed as follows.
		We let $\theta_1$ contain a universal sink $t$ and a universal source $s$.
		We add one ``neighbor-vertex'' \(v_w\) for each vertex in $w \in N_{(V(S),E(S) \setminus E(T))}(x)$, and every such neighbor-vertex has a capacity-1 edge to $t$.
		We also add one ``starting-vertex'' \(v_{c_x}\) for every starting cell \(c_x\) of \(x\), and an edge with unlimited capacity from \(s\) to \(v_{c_x}\).
		Moreover, we add one ``cell-vertex'' \(v_c\) for every cell $c$ of $\mathcal{T}'$ which is not a staring cell of \(x\) that $\lambda$ maps to $x$, and add a capacity-1 edge from each such vertex to every neighbor-vertex $v_w$ for which \(w\) lies on the boundary of $c$.
		Finally, we add an edge from every starting-vertex \(v_{c_x}\) to every cell-vertex $v_c$ for which \(\lambda(c) = x\) and set the capacity of this edge to be the number of crossable edges of \(\mathcal{T}'\) that lie on the shared boundary of $c_x$ and \(c\).
		(In particular, some of these edges may have capacity \(0\).)
		The instance $\theta_2$ is then constructed in an analogous fashion, but for $y$.
				
		Assume there is a $\lambda$-consistent extension \(\mathcal{S}\) of \(\mathcal{T}\) to \(S\).
		Each edge from $x$ to an endpoint $v\in N_{(V(S),E(S) \setminus E(T))}(x)$ is either non-crossing (in which case $\theta_1$ models it as a flow from \(s\) through a starting-vertex \(v_{c_x}\) and the neighbor-vertex \(v_v\), and then to $t$),
		or crosses into another cell $c \neq c_x$ (in which case $\theta_1$ models it as a flow from a starting-vertex $v_{c_x}$ to $v_c$, then to $v_v$, and finally to $t$).
		The fact that each such edge in $\mathcal{S}$ must use a separate crossable edge when crossing out of a starting cell \(c_x\) ensures that routing the flow in this way will not exceed the edge capacities of $\theta_1$.
		This construction achieves a flow value of \(|N_{(V(S),E(S) \setminus E(T))}(x)|\).
		The argument for $\theta_2$ is analogous, and hence we obtain that $\theta_1$ and $\theta_2$ allow flows of values \(|N_{(V(S),E(S) \setminus E(T))}(x)|\) and \(|N_{(V(S),E(S) \setminus E(T))}(y)|\) respectively.
				
		Conversely, assume that both $\theta_1$ and $\theta_2$ allow flows of values \(|N_{(V(S),E(S) \setminus E(T))}(x)|\) and \(|N_{(V(S),E(S) \setminus E(T))}(y)|\) respectively.
		Consider such a flow for $\theta_1$ (the procedure for $\theta_2$ will be analogous).
		Clearly, this flow must route capacity \(1\) through each neighbor-vertex $v_v$ to achieve its value.
		If the flow enters $v_v$ directly from a starting-vertex \(v_{c_x}\), then we can add a drawing of the edge from $x$ to $v$ passing only through $c_x$.
		If it instead enters $v$ after passing through a starting-vertex \(v_{c_x}\) \emph{and} a cell-vertex \(v_c\), then we can add a drawing of the edge from $x$ to $v$ starting in $c_x$ and crossing into $c$.
		Since no edge from $y$ enters a cell for with \(\lambda\)-value \(x\), there is a way of choosing which boundaries to $c$ to cross in order to ensure that edges from $x$ will not cross each other (the order of crossings matches the order on which the neighbors $v\in N_{(V(S),E(S) \setminus E(T))}(x)$ appear on the boundary of $c$).
		Applying the same argument for $\theta_2$ results in a $\lambda$-consistent extension $\mathcal{S}$ of \(\mathcal{T}\) to \(S\).
	\end{proof}

	Observe that for a hypothetical 1-planar extension \(\mathcal{S}\) of \(\mathcal{T}\) to \(S\), it is \emph{not necessarily} true that for any restriction \(\mathcal{T}'\) of \(\mathcal{S}\) to some \(T \subseteq T' \subseteq S\), there is a \(\lambda\) such that \(\mathcal{S}\) is a \(\lambda\)-consistent extension of \(\mathcal{T}'\).
	More specifically this is not always the case, because some cells of \(\mathcal{T}'\) might need to be further subdivided, before they can be completely assigned to \(x\) or \(y\) by any \(\lambda\).
	Obviously this is in particular the case for \(\mathcal{T}\) itself. 
	In view of this above discussion our aim will be to branch on the additional drawings of some edges into \(\mathcal{T}\) such that for this extended drawing \(\mathcal{T}'\) we are able to iteratively extend an assignment \(\lambda\) for which we iteratively extend a \(\lambda\)-consistent to an extension \(\mathcal{S}\) to \(S\), by applying Lemma~\ref{lem:flow}.

	\subsection{Dynamic Programming}
	\label{subsec:dynprog}
	Our dynamic program will proceed along the \(x\)-walk and the \(y\)-walk to extend \(\mathcal{T}\) and a partial function \(\lambda\) that maps from some cells of that extension to \(\{x,y\}\), with the aim of applying Lemma~\ref{lem:flow}.
		
	To explicitly distinguish between edges that we no longer or that we still can cross, we mark edges on the \(x\)-walk and on the \(y\)-walk as \emph{uncrossable} whenever they are crossed in \(\mathcal{T}\).
	Additionally, we mark every edge on the shared boundary of a starting cell of \(x\) or \(y\) and an occupied cell of \(\mathcal{T}\) as uncrossable.
	All other edges on the \(x\)-walk and on the \(y\)-walk are marked as \emph{crossable}.
	
	We traverse the \(x\)-walk and the \(y\)-walk in counterclockwise and clockwise direction (from the perspectives of \(x\) and \(y\)) respectively.
	Denote the vertices on the \(x\)-walk in the order described by the traversal as \(x_1, \dotsc, x_{\text{end}_x}\),
	and the vertices on the \(y\)-walk in the order described by the traversal as \(y_1, \dotsc, y_{\text{end}_y}\).
	Note that then vertices that are on the shared boundary of starting cells of \(x\) and \(y\) correspond to some \(x_i\) as well as some \(y_j\).
	Also observe that, because of the ``compatible'' directions of traversals of the \(x\)-walk and the \(y\)-walk, vertices that are on the shared boundary of starting cells of \(x\) and \(y\) occur in the same order in the traversal of the \(x\)-walk and the traversal of the \(y\)-walk, i.e.\
	if \(x_i = y_j\) and \(x_{i'} = y_{j'}\) then \(i \leq i'\) implies that \(j \leq j'\).
	
	\paragraph*{Colored Terminology.}
	Paralleling the the presentation in \cite{Aug1P-ICALP}, we shift to a more colorful terminology which is also helpful for visual representation in figures.
	We associate cells of \(\mathcal{T}\) which have a crossable edge of the \(x\)-walk but no crossable edge of the \(y\)-walk on their boundary with the color \emph{red},
	cells of \(\mathcal{T}\)  have a crossable edge of the \(y\)-walk but no crossable edge of the \(x\)-walk on their boundary with the color \emph{blue},
	and cells which have a crossable edge of the \(x\)-walk as well as a crossable edge of the \(y\)-walk on their boundary with the color \emph{red} with the color \emph{purple}.
	Moreover, we call edges that lie on the \(x\)-walk and on the \(y\)-walk \emph{green}, and set \(R = N_{(V(S),E(S) \setminus E(T))}(x)\) and \(B = N_{(V(S),E(S) \setminus E(T))}(y)\).
					
	\paragraph*{Records.}					
	We now formalize the records used in our dynamic programming procedure: a \emph{record} is given by a tuple $(\alpha_x,\alpha_y,\tau)$.
	Here $\alpha_x$ is either some vertex in $x_i$ or the drawing of an edge between $x$ and some $v \in R$ that is added to $\mathcal{T}$ and crosses a crossable edge in $\mathcal{T}$ at most once.
	$\alpha_y$ is defined analogously, i.e.\ either is some \(y_i\) or the drawing of an edge between $y$ and some $v \in B$ that is added to $\mathcal{T}$ and crosses a crossable edge in $\mathcal{T}$ at most once.
	Note that we can combinatorially represent such drawings by specifying its crossing point and endpoint, since two drawings of edges with the same crossing point and endpoint have the same topological properties and hence are equivalent for the purposes of this section.
	$\tau$ is then an auxiliary element that simplifies the description of the algorithm and will be assigned one out of $10$ values (this will be detailed in the next subsection).
	There are also two special records, \textsc{Start} and \textsc{End}.
	From their definition, one can see that the total number of records is upper-bounded by $\bigoh(|V(T)|^4)$.
					
	During our dynamic program, we will keep two sets of records: the set $\texttt{Reach}$ of records that are ``reachable'', and the set $\texttt{Proc}$ of records that have already been exhaustively ``processed''.
	Initially $\texttt{Proc}=\emptyset$ and $\texttt{Reach}=\{\texttt{Start}\}$.
	As soon as \texttt{End} is added to \texttt{Reach}, then the algorithm ascertains that $(S,T,\mathcal{T})$ is a YES-instance of 1-planar drawing extension, and we can even find a restriction that conforms to the restrictions imposed by the branching that led to the subinstance \((S,T,\mathcal{T})\).
	Such an extension can be computed via standard backtracking along the successful run of the dynamic algorithm.
	If however, at any stage $\texttt{Proc}=\texttt{Reach}$ without \texttt{End} having been added to \texttt{Reach} then we let the algorithm output ``NO''.
					
	\paragraph*{Record Types and Delimiters.}
	The type of a record (stored in $\tau$) intuitively represents which out of $10$ ``cases'' the record encodes. Here, we provide a formal description of each such case.
	We note that for each case, it will be easy to verify whether it is compatible with a certain choice of $\alpha_x$ and $\alpha_y$---if it is not, we do not consider such a ``malformed'' combination $(\alpha_x,\alpha_y,\tau)$ in our branching and for our records.
					
	We also define a \emph{delimiter} $D$ for each record type---this is a simple curve from $x$ to $y$ that separates the instance into two subinstances.
	Intuitively, in a ``correct'' branch we will later be able to assume that 
	\begin{enumerate}
		\item the drawing of every added edge in the hypothetical solution of \((S,T,\mathcal{T})\) does not intersect \(D\), i.e.\ lies completely on one side of \(D\), and
		\item we have an assignment $\lambda$ that maps all cells, the last vertex on the \(x\)-walk and the \(y\)-walk of which occurs before \(D\) on the traversals of the \(x\)-walk and the \(y\) walk, to $\{x,y,\emptyset\}$.
	\end{enumerate}
	In this sense the delimiter indicates, at which stage of the dynamic program we are; the cells the last vertex on the \(x\)-walk and the \(y\)-walk of which occurs before \(D\) on the traversals of the \(x\)-walk and the \(y\)-walk are already sufficiently processed to apply Lemma~\ref{lem:flow}, and we still need to achieve this for the remaining cells.
					
	We now describe the five types (possible entries for \(\tau\)) of our records.
	Note that all descriptions are given from the ``perspective'' of $x$, and symmetric cases also exist for $y$.
	}
\iflong{
	\begin{figure}
		\begin{subfigure}{.31\textwidth}
			\includegraphics[page=2]{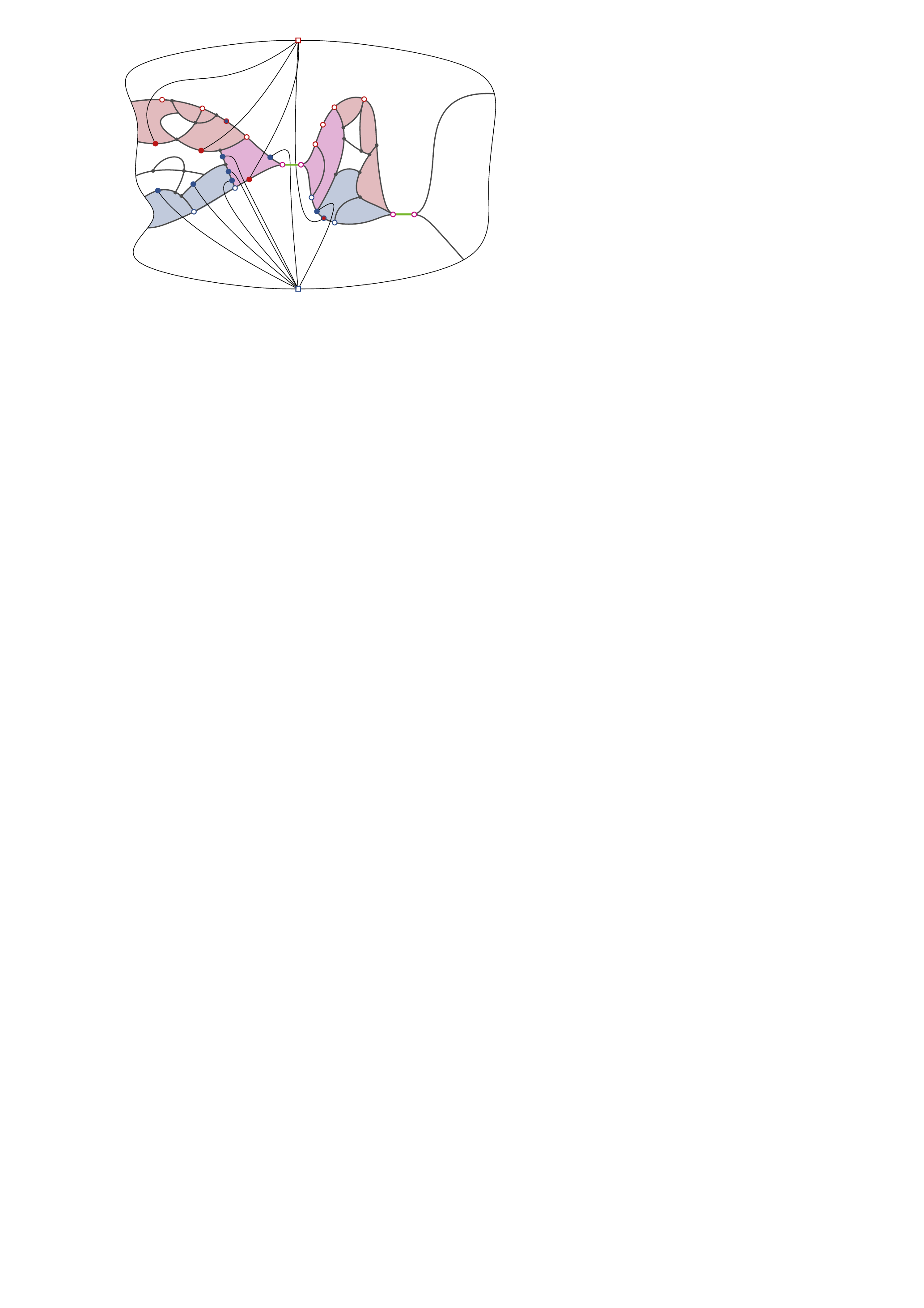}
			\caption{\textbf{Green Pointer}.}
			\label{fig:records:case1}
		\end{subfigure}
		\hfill
		\begin{subfigure}{.31\textwidth}
			\includegraphics[page=3]{dynprog.pdf}
			\caption{\textbf{Double Incursion}.}
			\label{fig:records:case2}
		\end{subfigure}	
		\hfill
		\begin{subfigure}{.31\textwidth}
			\includegraphics[page=4]{dynprog.pdf}
			\caption{\textbf{Left Incursion}.}
			\label{fig:records:case3}
		\end{subfigure}
		
		\begin{subfigure}{.31\textwidth}
			\includegraphics[page=5]{dynprog.pdf}
			\caption{\textbf{Right Incursion}.}
			\label{fig:records:case4}
		\end{subfigure}	
		\begin{subfigure}{.31\textwidth}
			\includegraphics[page=6]{dynprog.pdf}
			\caption{\textbf{Slice}.}
			\label{fig:records:case5}
		\end{subfigure}
		
		\caption{The five types of records for the dynamic program. The orange curves are the delimiters computed in each case. $ x $ and $ y $ are the red and blue rectangular vertices respectively\ifshort{.}\iflong{, colored disks are the sets $ B $ and $ R $, while white disks are the vertices $ x_i $ and $ y_j $ depending on their border-color.}}
		\label{fig:records}
	\end{figure}
	}
	
	\iflong{
	\noindent \textbf{1. Green Pointer.} $\alpha_x=\alpha_y$ are the same vertex $x_i$, and that vertex is incident to a green edge $x_ix_{i+1}$.
					
	\smallskip
	\noindent \emph{Delimiter:} The delimiter in this case is a simple curve that starts in $x$, crosses through $x_ix_{i+1}$, and then proceeds to $y$ without crossing any edges other than \(x_ix_{i+1}\), i.e.\ the delimiter remains within the starting cells of \(x\) and \(y\) with \(x_ix_{i+1}\) on their boundaries.
					
	\noindent \emph{Intuition:} This is the simplest type of our records, and is only used in a few boundary cases to signify that ``nothing noteworthy happened in this dynamic programming step'' in the sense that the cells that were created in this dynamic programming step are assumed to remain empty in the targeted hypothetical drawing extension.
					
	\smallskip
	\noindent \textbf{2. Double Incursion.} $\alpha_x$ is an edge which crosses a crossable edge $x_ix_{i+1}$ into some cell $c$ (which may, but need not, be $c_y$);
	let $\omega_c$ be the walk around the boundary of $c$ starting in \(x_{i + 1}\) and ending in the endpoint of $\alpha_x$ that does not contain \(x_i\).
	$\alpha_y$ is an edge which ends on $\omega_c$ and crosses into a cell $c'$ (which may, but need not, be a starting cell $c_x$ of \(x\)) through some crossable edge $y_jy_{j+1}$.
					
	\noindent \emph{Delimiter:} The delimiter in this case is a simple curve that starts in $y$, crosses through $y_jy_{j+1}$ (intuitively, this occurs ``behind'' the crossing point of $\alpha_y$) without crossing any edges before that, and then follows $\alpha_y$ (without crossing it) until reaching the endpoint of $\alpha_y$.
	At that point, it follows $\omega_c$ in reverse direction (without crossing it) towards $x_i$. At some point, the delimiter must reach an edge that is on the \(x\)-walk (since $x_ix_{i+1}$ is such an edge; that being said, the delimiter might already be on the \(x\)-walk when it reaches the endpoint of $\alpha_y$).
	Whenever that happens, the delimiter diverges from the \(x\)-walk and connects to $x$ without crossing any further edges, i.e.\ within the starting cell at whose boundary the delimiter first hit the \(x\)-walk.
	Notice that even if the delimiter crosses through several purple cells, only at most one of these will remain accessible from both \(x\) and \(y\) when the drawings of \(\alpha_x\) and \(\alpha_y\) are added to \(\mathcal{T}\).
					
	\noindent \emph{Intuition:} This is the most complicated record, since it covers a number of cases of non-trivial interactions between $\alpha_x$ and $\alpha_y$.
	When using it, we will ensure that our edges \(\alpha_x\) and \(\alpha_y\) have a maximality property in a targeted hypothetical extension which ensures that no added edge which is incident to $y$ ends between $x_{i+1}$ and the endpoint of $\alpha_y$, and no added edge which is incident to $x$ ends between the endpoint of $\alpha_x$ and $y_jy_{j+1}$.
	The edge $\alpha_y$ will split whatever cell it crosses into one part that remains accessible from $y$, and one part that is no longer accessible from $y$.
	$\alpha_x$ induces a similar splitting of the cell it crosses into.
					
	\smallskip
	\noindent \textbf{3. Left Incursion.} $\alpha_x$ is an edge $e$ which ends in $y_i$ after crossing across a crossable green edge $y_{j}y_{j+1}$, where $j>i$ and $\alpha_y=y_{j}$.
					
	\noindent \emph{Delimiter:} The delimiter in this case is a simple curve that starts in $x$, crosses through $y_jy_{j+1}$ (intuitively, this occurs ``behind'' the crossing point of $e$), and then proceeds to $y$ without crossing any edges other than \(y_iy_{i+1}\), i.e.\ the delimiter remains within the starting cells of \(x\) and \(y\) with \(y_iy_{i+1}\) on their boundaries.
					
	\noindent \emph{Intuition:} This record type represents the case where only one ``half'' of a \textbf{Double Incursion} is present.
	$e$ splits a cell off $c_y$ that is no longer accessible from \(y\).
					
	\smallskip
	\noindent \textbf{4. Right Incursion.} $\alpha_x$ is an edge $e$ which crosses a crossable edge $y_iy_{i+1}$ and ends in $y_j$ (where $j>i$) and the previous cases \textbf{Double Incursion} and \textbf{Left Incursion} do not apply. $\alpha_y=y_j$.
					
	\noindent \emph{Delimiter:} The delimiter in this case is a simple curve that starts in $y$, crosses through $y_jy_{j+1}$ (intuitively, this occurs ``behind'' $y_j$) without crossing any other edges before that, and then proceeds along the \(y\)-walk in reverse direction towards $y_i$ without crossing it.
	Since $y_iy_{i+1}$ is a green edge, the delimiter eventually reaches the \(x\)-walk (it might, in fact, already be there).
	Whenever that happens, the delimiter diverges from the \(x\)-walk and connects to $x$ without crossing any further edges, i.e.\ within the starting cell at whose boundary the delimiter first hit the \(x\)-walk.
					
	\noindent \emph{Intuition:} This record represents the case where only one ``half'' of a \textbf{Double Incursion} is present, and can be viewed as a degenerate \textbf{Double Incursion} where $\alpha_y$ is collapsed into a single vertex.
					
	\smallskip
	\noindent \textbf{5. Slice.} $\alpha_x$ is either a vertex on the boundary of a purple cell $c$, or an edge which crosses a crossable edge of \(T\) into such a cell $c$.
	$\alpha_y$ is either a vertex on the boundary of $c$ or an edge which crosses a crossable of \(T\) into $c$.
	Moreover, if $\omega^\emph{right}_c$ is the unique walk along the boundary of \(c\) from the maximum-index vertex $x_i$ of the \(x\)-walk on the boundary of $c$ to the maximum-index vertex $y_j$ of the \(y\)-walk on the boundary of $c$ that does not contain the lowest-index $y_{j'}$ on the boundary of $c$, then neither $\alpha_y$ nor $\alpha_x$ end on (if they are edges) or are in (if they are vertices) $\omega^\emph{right}_c$.
					
	\noindent \emph{Delimiter:}
	Note that the boundary of $c$ might intersect a ``previous delimiter'' (later denoted by \(D'\)), which we then treat as part of the boundary of \(c\).
	The delimiter starts in $y$.
	If $\alpha_y$ is an edge that crosses $y_{p}y_{p+1}$, the delimiter then crosses through $y_{p}y_{p+1}$ without crossing any other edges before that and continues along \(\alpha_y\) and then along the part of the boundary of $c$ that does not contain \(\omega^\emph{right}_c\) on the side of \(c\).
	If \(\alpha_y\) is a vertex $y_p$, the delimiter then crosses through $y_{p}y_{p+1}$ without crossing any edges before that and immediately continues along the part of the boundary of $c$ that does not contain \(\omega^\emph{right}_c\) on the side of \(c\).
	Once the delimiter reaches an endpoint or crossing point of $\alpha_x$ in case \(\alpha_x\) is an edge (whichever comes first); or \(\alpha_x\) itself otherwise, it proceeds analogously as for $\alpha_y$ to reach $x$.
	
	\noindent \emph{Intuition:} This record represents, in a unified way, a multitude of situations that may occur inside a purple cell.
	It will be used in a context where $\alpha_x$ and $\alpha_y$ satisfy a certain maximality condition which ensures that, in the part up to the delimiter (i.e., between the previous and current delimiter), edges from $x$ only enter $c$ ``under'' $\alpha_y$, and similarly edges from $x$ only enter $c$ ``under'' $\alpha_x$.
					
	\paragraph*{Branching Steps.}
	We can now describe the actual dynamic program $\mathbb{A}$.
	The top-level procedure iterated by $\mathbb{A}$ is that it picks some record from $\textsc{Reach}\setminus\textsc{Proc}$ (at the beginning this is \texttt{Start}, but later it will be some $(\alpha'_x,\alpha'_y,\tau')$).
	For records with record types, i.e.\ records that are not \texttt{Start} or \texttt{End}, it then computes a delimiter \(D'\).
					
	Now, $ \mathbb{A} $ exhaustively branches over all possible choices of $\alpha_x$, $\alpha_y$, and $\tau$, and checks that the record $(\alpha_x, \alpha_y,\tau)$ satisfies the conditions for the given type $\tau$ as described above (i.e., that it is not malformed).
	It then adds $\alpha_x$, $\alpha_y$ into the drawing, computes the delimiter $D$ for $(\alpha_x, \alpha_y,\tau)$, as described above in the corresponding paragraph for \(\tau\).
	It then checks that $D$ occurs ``after'' $D'$ with respect to the traversals of both the \(x\)-walk and the \(y\)-walk.
	If either of these conditions is not satisfied, it discards this choice.
					
	If all of the above checks were successful, $\mathbb{A}$ constructs a mapping $\lambda$ to verify whether the branch represents a valid step of the algorithm---notably, whether the added edges whose endpoints are enclosed by $D'$ and $D$, and potentially \(\alpha_x\) and \(\alpha_y\), can be drawn into the partial drawing in a 1-planar way without crossing \(D'\) or \(D'\).
	We can already partially define \(\lambda\), by setting \(\lambda(c)\) to \(x\) for red cells \(c\), to \(y\) for blue cells \(c\),
	and to \(\emptyset\) for occupied cells \(c\).
	We consider the cells of the drawing resulting by inserting \(D'\) and \(D\) into \(\mathcal{T}\) which are enclosed by \(D'\) and \(D\).
	We call these cells \emph{delimited}.
	Delimited cells which are (not necessarily proper) subcells of cells of \(\mathcal{T}\) for which \(\lambda\) is already defined, inherit the same \(\lambda\)-value.
	We call cells for which \(\lambda\) is not yet cell defined \emph{unassigned}.
	
	In the special case where \(D'\) is a \textbf{Slice}, and \(D\) is not a \textbf{Slice} involving the same purple cell \(c'\),
	we perform an additional branching subroutine to correctly assign the unassigned delimited part of the purple cell intersected by \(D'\).
	For this we branch over every drawing of a potential undominated added edge $\beta_x$ from $x$ that crosses into $c'$ and ends in $\omega^\emph{right}_{c'}$, where $\omega^\emph{right}_{c'}$ is defined as in the description of the \textbf{Slice} record type (also considering the case that no such edge exists).
	Similarly, we branch over every drawing of a potential undominated edge $\beta_y$ from $y$ that crosses into $c'$ and ends in $\omega^\emph{right}_{c'}$.
	We then use these drawings to split $c'$ into up to three subcells, and have $\lambda$ map these as follows:
	the unique part delimited by $\beta_x$ that contains the maximum-index $x_i$ in $c'$ is mapped to $x$,
	the unique part delimited by $\beta_y$ that contains the maximum-index $y_j$ in $c'$ is mapped to $y$,
	and the rest of $c'$ is mapped to \(\emptyset\).
	
	For all further unassigned cells we extend \(\lambda\) depending on the record type of the newly guessed record (assuming the cases are given w.r.t.\ $x$, as in the description of the cases; mirrored cases are treated analogously, but with swapped $x$ and $y$):
					
	\begin{enumerate}
		\item If $\tau=\textbf{Green Pointer}$, set $\lambda(c)=\emptyset$ for every unassigned cell;
		\item If $\tau=\textbf{Double Incursion}$, $\alpha_x$ splits the cell \(c\) of \(\mathcal{T}\) it crosses into into two cells.
		Let $x_ix_{i+1}$ be the edge of \(T\) that $\alpha_x$ crosses and \(c\) be the cell of \(\mathcal{T}\) that \(\alpha_x\) crosses into.
		Then \(c\) is subdivided into two cells by \(\alpha_x\), exactly one of which has \(x_{i+1}\) on its boundary.
		We map this subcell of \(c\) to \(x\) while the other subcell of \(c\) is mapped to\(y\).
		Similarly $\alpha_y$ crosses an edge \(y_jy_{j+1}\) of \(T\) and splits the cell $c'$ of \(\mathcal{T}\) it crosses into into two subcells, exactly one of which has \(y_j\) on its boundary.
		We map this subcell of \(c'\) to \(y\) while the other one is mapped to \(x\).
		All remaining unassigned cells are mapped to \(\emptyset\).
		\item If $\tau=\textbf{Left Incursion}$ or $\tau=\textbf{Right Incursion}$, $\alpha_x$ splits $c_y$ into two subcells, where the part containing $y$ is mapped to $y$ and the other part to $x$.
		All remaining unassigned cells are mapped to \(\emptyset\).
		\item If $\tau=\textbf{Slice}$, and $\alpha_x$ is an edge, it splits a purple cell $c$ of \(\mathcal{T}\) into two subcells, exactly one of which does not contain \(\omega^\emph{right}_c\) as defined in the definition of the \textbf{Slice} record type on its boundary.
		We map this subcell to \(x\), if it exists.
		Similarly if \(\alpha_y\) is an edge, it splits a purple cell $c$ of \(\mathcal{T}\) into two subcells, exactly one of which does not contain \(\omega^\emph{right}_c\) as defined in the definition of the \textbf{Slice} record type on its boundary, and we map this subcell to \(y\), if it exists.
		All remaining unassigned cells are mapped to \(\emptyset\).
	\end{enumerate}
					
	Finally, with this mapping, we invoke Lemma~\ref{lem:flow} on the (sub-)instance that arises by considering the instance restricted to delimited cells.
	If the resulting network flow instance is successful, we add $(\alpha_x,\alpha_y,\tau)$ to \textsc{Reach}.
	Once all branches have been exhausted for a given record, we add it to \textsc{Proc}.
					
	We also include the special record \texttt{End} among the records we branch over.
	For this record, we do not include a delimiter \(D\), i.e.\
	we consider all cells that occur after \(D'\) along both the traversal of the \(x\)-walk and the \(y\)-walk.
	If $\mathbb{A}$ adds \texttt{End} to \texttt{Reach}, then \textsc{A} outputs ``YES'' (and backtracks to find a suitable drawing by inductively invoking Lemma~\ref{lem:flow} on the relevant (sub-)instances of $(S,T,\mathcal{T})$).
	If however at some stage \texttt{Proc}=\texttt{Reach} then $\mathbb{A}$ rejects $(S,T,\mathcal{T})$.
	The total running time of $\mathbb{A}$ is upper bounded by $\bigoh(|V(T)|^{12})$, since the number of records is at most $\bigoh(|V(T)|^4)$ and the branching subroutines take time at most $\bigoh(|V(T)|^8)$.
					
	\paragraph*{Domination and Correctness.}
	All that remains now is to show that $\mathbb{A}$ is, in fact, correct. To do so, we will need the notion of \emph{undominated edges}, which formalizes the maximality assumption that we informally hinted at in the intuitive description of our record types.
	Given a hypothetical solution $\mathcal{S}$ to $(S,T,\mathcal{T})$, an edge $e$ starting in $x$ is \emph{dominated} if one of the following holds:
	\begin{itemize}
		\item $e$ crosses into $c_y$, and the part of \(c_y\) it encloses with the boundary of $c_y$ that does not contain $y$ is contained in the part of \(c_y\) enclosed by another added edge from $x$ and the boundary of $c_y$ not containing $y$;
		\item $e$ crosses into a purple cell $c$ of \(\mathcal{T}\), and one of the two parts of \(c\) it encloses together with the boundary of \(c\) is contained in one part of \(c\) enclosed by another added edge from $x$ and the boundary of $c_y$.
	\end{itemize}
					
	An edge is \emph{undominated} if it is not dominated.
	The definition for added edges starting in \(y\) is analogous.
	We extend the notion of domination towards vertices by considering a vertex to be a loop of arbitrarily small size (hence, every edge ``enclosing'' that vertex dominates it).
	The role of undominated edges (and vertices) is that they will help us identify which parts of a hypothetical solution $\mathcal{S}$ to focus on in order to find a relevant step of $\mathbb{A}$ for our correctness proof.
					
	\begin{lemma}
		\label{lem:correct}
		$\mathbb{A}$ is correct.
	\end{lemma}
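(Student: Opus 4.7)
The plan is to split the correctness proof into the two standard directions, soundness and completeness, which for a dynamic program over records correspond to showing that every successful run induces a valid extension and that every hypothetical extension induces a successful run. Soundness will be straightforward, while completeness is the main work and will proceed by a ``canonical sweep'' of a given hypothetical solution, where at each step the shape of the solution just past the current delimiter forces one of the five record types.

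For soundness, I would argue inductively over the order in which records are added to \texttt{Reach}. When $\mathbb{A}$ transitions from $(\alpha'_x,\alpha'_y,\tau')$ with delimiter $D'$ to $(\alpha_x,\alpha_y,\tau)$ with delimiter $D$, the successful invocation of Lemma~\ref{lem:flow} yields a $\lambda$-consistent drawing of exactly those added edges both of whose endpoints lie in the region strictly between $D'$ and $D$, and the transition step itself inserts $\alpha_x$ and $\alpha_y$ (when they are edges) together with any Slice-completing edges $\beta_x,\beta_y$. Because $D'$ only crosses edges that are either uncrossable at that stage or become so once $\alpha'_x,\alpha'_y$ are inserted, and because successive delimiters partition the plane into pairwise disjoint delimited regions, the union of the partial drawings obtained along a path from \texttt{Start} to \texttt{End} is a well-defined untangled 1-planar extension of $\mathcal{T}$ to $S$ that respects the $x$-walk and $y$-walk restrictions.

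For completeness, assume a solution $\mathcal{S}$ exists and let me build, by induction on the sweep position, a sequence of records that $\mathbb{A}$ reaches. Given a current record with delimiter $D'$ that is \emph{compatible} with $\mathcal{S}$ (meaning no added edge of $\mathcal{S}$ crosses $D'$, and $\lambda$ agrees with how $\mathcal{S}$ uses each delimited cell), I look at the next undominated added edges of $\mathcal{S}$ just past $D'$ along the $x$-walk and $y$-walk. A case distinction on whether these edges cross into a starting cell of the other vertex, into a common purple cell, or not at all, forces exactly one of \textbf{Green Pointer}, \textbf{Double Incursion}, \textbf{Left Incursion}, \textbf{Right Incursion}, or \textbf{Slice} to apply; picking $\alpha_x,\alpha_y$ to be these undominated edges (or their collapsed-to-vertex degenerations) yields a well-formed record whose delimiter $D$ lies strictly past $D'$. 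Consistency of $\mathcal{S}$ with the freshly defined $\lambda$ on delimited cells then provides a valid $\lambda$-consistent partial drawing of the edges strictly between $D'$ and $D$, so Lemma~\ref{lem:flow} returns YES by its forward direction and $\mathbb{A}$ adds the record to \texttt{Reach}. Iterating this sweep until no new undominated actions remain, $\mathbb{A}$ eventually reaches \texttt{End}.

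The main obstacle will be the case analysis in the inductive step, and in particular two subtleties. First, showing that the undominated edges past $D'$ fall into exactly one of the five record types requires careful topological reasoning: the untangledness of $\mathcal{S}$, the uniqueness of the \emph{outermost} edges at each cell boundary, and the fact that $x$- and $y$-walk vertices shared on starting-cell boundaries occur in compatible orders must all be combined to rule out configurations outside the five cases. Second, for Slice-to-non-Slice transitions, one must argue that the branching over $\beta_x,\beta_y$ in $\omega^{\text{right}}_{c'}$ correctly captures how $\mathcal{S}$ partitions the leftover purple cell $c'$; this amounts to showing that the undominated edges from $x$ and from $y$ ending in $\omega^{\text{right}}_{c'}$ indeed split $c'$ into an $x$-part, a $y$-part, and an unused remainder compatible with $\mathcal{S}$. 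Once these two points are handled, the rest of the inductive step reduces to checking the (routine) definitions of the delimiters and verifying that the resulting $\lambda$ assigns each delimited cell in agreement with $\mathcal{S}$.
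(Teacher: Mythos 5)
Your proposal follows essentially the same route as the paper's proof: soundness by iterating Lemma~\ref{lem:flow} along the successful record sequence, and completeness by inductively advancing a delimiter not crossed by the hypothetical solution $\mathcal{S}$, using undominated edges to pick $\alpha_x,\alpha_y$ and a case distinction that forces one of the five record types (including the special handling of leftover purple cells via $\beta_x,\beta_y$). The paper simply carries out the case analysis you defer (its Cases 1--6 for delimiters avoiding purple cells and Cases A--D otherwise), but the structure and key ideas are identical.
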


	\begin{proof}
		Assume $\mathbb{A}$ outputs ``YES''. Then one can follow the sequence of branching and records considered by $\mathbb{A}$ that leads to $\texttt{End}$ being added to \textsc{Reach}.
		During our branching we make sure to only consider records whose edges do not cross uncrossable edges or the delimiter of the previous record, and so an iterative application of Lemma~\ref{lem:flow} on each of our branching steps guarantees that if we follow this sequence we will end up routing all added edges from $x$, $y$ to their required endpoints---i.e., with a solution to $(S,T,\mathcal{T})$.
						
		Conversely, assume $(S,T,\mathcal{T})$ has a solution $\mathcal{S}$.
		To complete the proof, we will inductively build a sequence of records whose delimiters are not crossed by any edge in $\mathcal{S}$.
		This property is trivially true when we have no delimiter, i.e.\ for our initial record \texttt{Start}, but in general we will simply assume we have found an arbitrary delimiter $D'$ that has this property and added it to \texttt{Reach}.
		It now suffices to show that $\mathbb{A}$ will be able to find (via branching) and add an additional record to its set \texttt{Reach} that strictly pushes the delimiter further with respect to the traversals of the \(x\)-walk or the \(y-walk\); since the total number of records is bounded, this would necessarily imply that at some point $\mathbb{A}$ would terminate with a positive output by adding \texttt{End} to \texttt{Reach}.
						
		We will now perform an exhaustive case distinction, where for each case we will identify a record that would be detected by branching and where the existence of $\mathcal{S}$ would guarantee that the network flow instance constructed by Lemma~\ref{lem:flow} using the corresponding $D$ and $\lambda$ (constructed as per the description of our branching steps) terminates successfully. 
		
		Let us begin by considering the case where $D'$ does not cross through any purple cell.
		Let $x_i$ be the first vertex to occur strictly after $D'$ along the \(x\)-walk that is:
		\begin{enumerate}[\text{Case} 1]
			\item the endpoint of an added edge $e$ starting in $y$ that crosses the \(x\)-walk into a starting cell \(c_x\) of \(x\), or
			\item the endpoint of an added edge $e$ starting in $x$ that crosses the \(y\)-walk into a starting cell \(c_y\) of \(y\), or
			\item incident to a green edge $e=x_{i-1}x_i$ crossed by an added edge starting in $x$, or
			\item incident to a green edge $e=x_{i-1}x_i$ crossed by an added edge starting in $y$, or
			\item a vertex on the boundary of a purple cell $c$, or
			\item no such vertex exists.
		\end{enumerate}
						
		In Cases 1 and 2, this gives rise to a \textbf{Left Incursion}. In both cases, it is easy to verify that an appropriate restriction of $\mathcal{S}$ is a $\lambda$-consistent solution and hence the branching would detect the corresponding record and add it to \texttt{Reach}, as required.
						
		In Case 3. and 4., this gives rise to a \textbf{Right Incursion} or \textbf{Double Incursion}, depending on whether there is an edge $e'$ from $y$ (or $x$, respectively) ending on the walk between the crossing point and the endpoint of $e$. If no such edge exists, we obtain a \textbf{Right Incursion}. Moreover, it is once again easy to verify that an appropriate restriction of $\mathcal{S}$ is $\lambda$-consistent:
		Let \(c_x\) and \(c_y\) be the starting cells of \(x\) and \(y\) respectively that have \(e\) on their shared boundary.
		The ``inside'' of the added edge that crosses \(e\) within $c_y$ is not reachable from $y$ directly anymore (as it would require crossing that added edge), and also not via the boundary with $c_x$ due to the non-existence of $e'$.
		Hence the new delimiter will not be crossed by $\mathcal{S}$ and the restriction of $\mathcal{S}$ to the cells delimited by \(D'\) and the corresponding \(D\) will be $\lambda$-consistent.
						
		On the other hand, if an edge $e'$ as described above exists, let us fix $e'$ to be the unique undominated edge with this property.
		Then the ``inside'' of $e'$ within the face it crosses into is only reachable from $y$, while the ``outside'' of $e'$ cannot be reached by $y$ anymore (due to $e'$ being undominated).
		The argument for $e$ is the same as above, and putting these two together we obtain that the appropriate restriction of $\mathcal{S}$ is also $\lambda$-consistent and $D$ is not crossed.
						
		If Case 5. applies to $x_i$ as well as to the symmetrically constructed $y_j$, then by construction \(x_i\) and \(y_j\) lie on the boundary of the same purple cell \(c\), and the pair forms a (degenerate) \textbf{Slice} in $c$.
		Let $e$ be the unique undominated edge dominating $x_i$ (if none exists, we use $x_i$ instead),
		and the same for $e'$ and $y_j$.
		In this case, the record $(e,e',\textbf{Slice})$ would pass our branching test.
		If Case 5. applies to $x_i$, but not to the symmetrically constructed $y_j$, then depending on which of the Cases 1. - 4. applies to \(y_j\) symmetric considerations can be made from the viewpoint of \(y\).
		We are left with the case that Case 5. applies to \(x_i\) and Case 6. applies to the symmetrically constructed \(y_j\).
		However, this case cannot occur, as then \(D'\) would occur after the maximum-index vertex on the shared boundary of \(c\) and \(c_y\) along the \(y\)-walk, but before the maximum-index vertex on the intersection of the boundary of \(c\) and the \(x\)-walk.
		This never happens in the construction of our delimiters.
		
		Finally, in Case 6., if any of the Cases 1.-5. applies to the symmetrically constructed \(y_j\) symmetric considerations can be made from the viewpoint of \(y\).
		We are left with the case that Case 6. applies to both \(x_i\) and the symmetrically constructed \(y_j\).
		Then the added edges already lie entirely before \(D'\) in \(\mathcal{S}\), and thus \texttt{End} would pass our branching test, which means $\mathbb(A)$ would correctly identify a solution to $(S,T,\mathcal{T})$.
						
		Now, consider the case where $D'$ does cross through a purple cell $c$.
		Let $x_i$ be the minimum-index vertex to occur strictly after $D'$ along the \(x\)-walk that is:
		\begin{enumerate}[\text{Case} A]
			\item the endpoint of an edge that crosses into $c$ or incident to an edge that is crossed by an added edge to enter \(c\), or
			\item incident to the green edge $x_ix_{i+1}$, or
			\item a vertex on the boundary of a different purple cell $c'$, or
			\item no such vertex exists.
		\end{enumerate}
						
		Case C is handled analogously as Case 5. in the previous case distinction, and Cases B and D (which is itself analogous to Case 6. from before) are also simple.
		The same holds, when the symmetrically defined vertex \(y_j\) on the \(y\)-walk satisfies one of these cases.
		The by far most complicated remaining case is Case A for both \(x_i\) and \(y_j\), which has several subcases.
		
		First, we consider what happens if both $x_ix_{i+1}$ and $y_jy_{j+1}$ are crossed by the undominated edges $e,e'$ (respectively) which end in $\omega_c^\emph{right}$.
		Then, let us consider which of Case 2., Case 3., and Case 4. occurs after $c$ (along the traversal of the \(x\)-walk and the \(y\)-walk): we will use these to define our new record, say $(\alpha_x,\alpha_y,\tau)$.
		For this record, it is just as easy to argue that the corresponding delimiter is not crossed by $\mathcal{S}$ as before.
		As for $\lambda$, the basic mapping associated with $(\alpha_x,\alpha_y,\tau)$ would leave the delimited cell $c$ unassigned---that is where the additional branching rule via $\beta_x$ and $\beta_y$ comes into play.
		In the case where $\beta_x=e$ and $\beta_y=e'$, the obtained branch results in a mapping $\lambda$ such that the appropriate restriction of $\mathcal{S}$ is $\lambda$-consistent.
						
		If the previous subcase of Case A does not occur, we distinguish the special case where an edge $e$ from $x$ crosses into $c$ via $x_ix_{i+1}$ and an edge $e'$ crosses into a cell \emph{different} from $c$ and ends on the walk from $x_{i+1}$ to the endpoint of $e$ along the boundary of \(c\) that does not contain \(x_i\)---in this case, the edges $e$ and $e'$ represent a \textbf{Double Incursion}.
		Crucially, since $e$ does not end on $\omega_c^\emph{right}$ (as that was handled by the previous subcase), $e'$ must cross into the a starting cell $c_x$ of \(x\).
		As before, we will assume that $e'$ is undominated. The resulting $\lambda$ then splits $c_x$ and $c$ in a way that is forced by $e$ (up to $D$, no added edge incident to \(x\) can enter in $c$ outside of the area delimited by $e$) and $e'$ (up to $D$, no added edge incident to \(y\) can enter in $c_x$ outside of the area delimited by $e'$).
						
		Finally, if none of the previous subcases of Case A occurs, then we are left with a situation that will boil down to a \textbf{Slice}---the last obstacle remaining is to identify the relevant edges. In particular, let $e$ be the endpoint of the edge that crosses near or ends at $x_i$, and let $e'$ be the analogously defined edge for $y_j$.
		If both edges have the same endpoint among \(\{x,y\}\), then it is trivial to see that they must be the same edge, and that will be the (single) edge used to define our \textbf{Slice}.
		Otherwise, w.l.o.g.\ $e$ starts in $x$ and $e'$ in $y$.
		We perform the following check: does the part of $c$ split off by $e$ which does not contain $\omega_c^\emph{right}$ on its boundary contain a part of $c$ split off by $e'$?
		If this is not the case in either direction, then the new \textbf{Slice} is defined by both edges i.e.\ the record \((e,e',\textbf{Slice})\);
		otherwise, it is defined by the single edge that delimits the smaller part of $c$ not containing $\omega_c^\emph{right}$ and the vertex \(x_i\) or \(y_j\) which is not related to that edges endpoint, i.e.\ the record \((e,y_j,\textbf{Slice})\) or \((x_i,e',\textbf{Slice})\).
		In all of these cases, it is easy to verify that $D$ indeed is not crossed by $\mathcal{S}$ and that the appropriate restriction of $\mathcal{S}$ is a $\lambda$-consistent solution.
	\end{proof}
	
	Using Lemma~\ref{lem:correct} and the running time analysis from above yields:
	\begin{lemma}\label{thm:2vertices}
	\subproblem\ can be solved in $\bigoh(|V(T)|^{12})$.
	\end{lemma}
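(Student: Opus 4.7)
The plan is to mimic the ``delimit-and-sweep'' dynamic program from \cite[Section 6]{Aug1P-ICALP}, but with two adaptations forced by the more general \subproblem: the presence of occupied cells, and the fact that $x$ and $y$ may lie on the boundary of several starting cells (so the traversals are along the prescribed $x$-walk and $y$-walk rather than around a single face). The high-level idea is to reduce ``static'' subproblems of the form \emph{given a mapping $\lambda$ that assigns each cell either to $x$, to $y$, or to $\emptyset$, is there an untangled $\lambda$-consistent extension?} to a max-flow instance, and then to enumerate via dynamic programming a polynomial family of consistent partial choices of $\lambda$ that together cover every hypothetical solution.

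First I would establish the flow subroutine (Lemma~\ref{lem:flow}). Given $\lambda$, build two independent flow networks $\theta_x$ and $\theta_y$: a universal source sends unit flow through ``starting-cell'' vertices for $x$, which may either traverse directly into a ``neighbor-vertex'' $v_w$ (capacity-$1$ sink edge, corresponding to a non-crossing edge $xw$) or pass through a ``cell-vertex'' $v_c$ for some $c$ with $\lambda(c)=x$ (with capacity equal to the number of crossable $T$-edges shared by the starting cell and $c$) before reaching $v_w$. Occupied cells receive no cell-vertex. A standard exchange argument shows that simultaneous saturating flows in $\theta_x,\theta_y$ correspond exactly to $\lambda$-consistent extensions, since (i) flow units through cell-vertices translate to orderable crossings along the relevant boundary, and (ii) the disjointness $\lambda(c)\in\{x,y,\emptyset\}$ guarantees that $x$-edges and $y$-edges never compete for the same cell.

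Next I would set up the dynamic program. Mark as uncrossable every $T$-edge that is already crossed in $\cT$ or that separates a starting cell from an occupied cell, and traverse the $x$-walk and $y$-walk simultaneously in compatible directions. A record $(\alpha_x,\alpha_y,\tau)$ where each of $\alpha_x,\alpha_y$ is either a walk-vertex or a partially drawn added edge (combinatorially encoded by its single crossing and its endpoint) encodes a delimiter $D$, a simple $x$-to-$y$ curve that separates ``processed'' from ``unprocessed'' cells. The five record types (\textbf{Green Pointer}, \textbf{Double Incursion}, \textbf{Left/Right Incursion}, \textbf{Slice}) cover all the ways the frontier can advance in a targeted solution. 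The dynamic program maintains sets $\texttt{Proc}\subseteq\texttt{Reach}$, starting from $\{\texttt{Start}\}$, and iteratively picks a reachable-but-unprocessed record with delimiter $D'$, branches over all $\bigoh(|V(T)|^4)$ candidate successors, constructs from $D'$ and the candidate $D$ the corresponding slice of delimited cells, extends $\lambda$ on that slice according to the type rules (with a further local branching over at most two ``slice-boundary'' edges $\beta_x,\beta_y$ when transitioning out of a \textbf{Slice}), and invokes Lemma~\ref{lem:flow} on the delimited sub-instance; successful branches are added to $\texttt{Reach}$. The algorithm accepts iff $\texttt{End}$ is ever added.

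The hardest step will be the correctness proof (soundness is immediate from chaining Lemma~\ref{lem:flow}). For completeness, given a hypothetical solution $\cS$ and a delimiter $D'$ not crossed by $\cS$, I would build the next record by an exhaustive case analysis: locate the first vertex $x_i$ after $D'$ on the $x$-walk that is an endpoint/crossing-point of an added edge or a purple-cell vertex, do the same symmetrically for $y_j$, and match the resulting configuration to exactly one of the five record types, always using \emph{undominated} witnesses (those whose enclosed pocket is inclusion-minimal) so that the resulting $\lambda$ really is satisfied by a restriction of $\cS$. The subtlety added over the prior work is that occupied cells forbid certain crossings and the $x$- and $y$-walks may meet several times, which I would handle by forcing $\lambda(c)=\emptyset$ on occupied and on ``pocketed'' cells and by checking that each new $D$ genuinely advances past $D'$ on both walks. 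For the running time, records number $\bigoh(|V(T)|^4)$, the $(\beta_x,\beta_y)$-sub-branching contributes another $\bigoh(|V(T)|^4)$ factor per transition, and the flow test runs in polynomial time, yielding the claimed $\bigoh(|V(T)|^{12})$ bound.
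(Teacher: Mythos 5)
Your proposal follows essentially the same route as the paper's own proof: the same $\lambda$-consistency flow subroutine, the same five record types with delimiters and the $\texttt{Reach}$/$\texttt{Proc}$ bookkeeping, the same undominated-witness case analysis for completeness, and the same $\bigoh(|V(T)|^4)$ records times $\bigoh(|V(T)|^8)$ branching accounting for the $\bigoh(|V(T)|^{12})$ bound. No substantive differences to report.
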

	}

We are now ready to formally prove our main theorem.

\iflong{
\begin{proof}[Proof of Theorem~\ref{thm:main}]
	By Corollary~\ref{cor:reduction}, we can reduce the instance $(G,H,\cH)$ in $n^{\bigoh(\kappa^3)}$ time into $n^{\bigoh(\kappa^3)}$ many instance{s} of \textsc{Untangled $\kappa$-Bounded 1-Planar Drawing Extension} such that $(G,H,\cH)$ is yes-instance if and only if at least one of these instance{s} is yes-instance. Now let $(G,H', \cH')$ be one of these instances. By Lemma~\ref{lem:baseregions}, there are at most $\bigoh(\kappa^3))$ base regions in any \nice\ extension of $\cH'$. Each base region is defined by two added edges incident to the same vertex and their intersections with the boundary of the cell incident with the vertex. This allows us to enumerate all $n^{\bigoh(k^3)}$ possibilities for the base regions. For each possibility, we can now include the guessed edges that bound the regions which now creates $\bigoh(\kappa^3)$ base cell, such that we can assume that each edge starts in one of the base cells. We let the resulting graph and its drawing be $H''$ and $\cH''$, respectively. By Proposition~\ref{prop:2markbound}, the number of vertices $v$ in $V(G)\setminus \aV$ such that the new edges incident to $v$ start in more than two different base regions in any \vnice\ extension of $\cH''$ is bounded by $\bigoh(\kappa^3)$. We can again branch into $n^{\bigoh(\kappa^4)}$ many branches. In each branch, we guess the subset of vertices that are incident to edges starting in more than two base region together with drawing of all at most $\kappa$ edges incident to each of the vertices in this set.  By Lemma~\ref{lem:stretchbound} there are at most $\bigoh(\kappa^3)$ cells that are accessible by three base cells. By Lemma~\ref{lem:stretchBranching}, for each such cell $c$ there is a set of at most $\bigoh(\kappa^6)$ new edges such that if we draw these edges, then each subcell of $c$ intersects only new edges that starts in two different base cells. By guessing the drawings of these edges for each of $\bigoh(\kappa^3)$ many cells and guessing of the assignment of at most two accessible base cells for each newly created cell, we branch further into $n^{\bigoh(\kappa^9)}$ many instances. In each we instance, every cell $c$ contains a set of two markers (base cells) \(\nu(c)\) and we are only looking for a solution such that if a drawing of a new edge $e$ intersects $c$, then $e$ starts in one of the base cells in \(\nu(c)\).
	
	To solve this case, we apply the Turing reduction given by Lemma~\ref{lem:reductiontoTwoVertices}, which results in $n^{\bigoh(\kappa^{28})}$ branches and solve the resulting instances of \subproblem\ using Lemma~\ref{thm:2vertices}. The total number of instances we need to consider and solve is 
	\[
	n^{\bigoh(\kappa^3)}\cdot n^{\bigoh(\kappa^3)}\cdot n^{\bigoh(\kappa^4)}\cdot n^{\bigoh(\kappa^9)}\cdot n^{\bigoh(\kappa^{28})} = n^{\bigoh(\kappa^{28})}\qedhere
	\]

\end{proof}
}
\ifshort{
\begin{proof}[Sketch of Proof of Theorem~\ref{thm:main}]
	By Corollary~\ref{cor:reduction}, we can reduce the instance $(G,H,\cH)$ in $n^{\bigoh(\kappa^3)}$ time into $n^{\bigoh(\kappa^3)}$ many instance{s} of \textsc{Untangled $\kappa$-Bounded 1-Planar Drawing Extension} such that $(G,H,\cH)$ is yes-instance if and only if at least one of these instance{s} is yes-instance. Now let $(G,H', \cH')$ be one of these instances. By Lemma~\ref{lem:baseregions}, there are at most $\bigoh(\kappa^3)$ base regions in any \nice\ extension of $\cH'$, and hence we can branch to determine these base regions and their delimiting edges to create the corresponding base cells. We let the resulting graph and its drawing be $H''$ and $\cH''$, respectively. 
	
	By Proposition~\ref{prop:2markbound}, the number of vertices $v$ in $V(G)\setminus \aV$ such that the new edges incident to $v$ start in more than two different base regions in any \vnice\ extension of $\cH''$ is bounded by $\bigoh(\kappa^3)$, and hence these can be added to the drawing by branching. Moreover, there are at most $\bigoh(\kappa^3)$ cells that are accessible by three base cells. 
	By Lemma~\ref{lem:stretchBranching}, for each such cell $c$ there is a set of at most $\bigoh(\kappa^6)$ new edges such that if we draw these edges, then each subcell of $c$ intersects only new edges that start in two different base cells. By branching on the drawings of these edges for each of $\bigoh(\kappa^3)$ many cells and branching on the assignment of at most two accessible base cells for each newly created cell, we obtain a set of $n^{\bigoh(\kappa^9)}$ new subinstances. In each obtained subinstance, every cell $c$ contains a set of two markers (base cells) \(\nu(c)\) and we are only looking for a solution such that if a drawing of a new edge $e$ intersects $c$, then $e$ starts in one of the base cells in \(\nu(c)\).
	
	Finally, the goal is to group the cells marked by the same pair of base cells together so that we can solve the resulting instances separately. This is carried out by applying Lemma~\ref{lem:reductiontoTwoVertices}, which produces a set of subsintances that we solve using Lemma~\ref{thm:2vertices}. Altogether, this results in a running time of at most $n^{\bigoh(\kappa^{28})}$.
\end{proof}
}

\section{Concluding Remarks}

In this paper we have presented a constructive polynomial-time algorithm for extending partial 1-planar graph drawings (or report that no such extension exists) in the restricted case that the edge+vertex deletion distance $\kappa$ between the partial drawing and its extension is bounded.\iflong{
This closes one of the main unresolved questions identified in previous work that investigated the extension of 1-planar drawings~\cite{Aug1P-ICALP}.}
We believe that the most interesting open question in this direction is whether the 1-planar drawing extension problem is fixed-parameter tractable not only w.r.t.\ the edge deletion distance~\cite{Aug1P-ICALP}, but also w.r.t.\ the edge+vertex deletion distance. \ifshort{It would also be worthwhile to obtain an understanding of the complexity of the problem when restricted to instances with natural structural properties, such as having bounded treewidth.}
\iflong{One may also investigate the complexity of this as well as other prominent extension problems through the lens of structural restrictions on the input graph---for instance, what would happen if we restrict the set of instances to those where $G$ has bounded treewidth?}

Last but not least, we note that while the requirement on the connectivity of $H$ is well motivated from an application perspective and has also been used in other drawing extension settings~\cite{mels-lam-95,mnr-ecpdg-15,HongN08,Aug1P-ICALP}, the problem of course remains of interest when this requirement is dropped. 
Our techniques and results do not immediately carry over to the case where $H$ is disconnected, and generalizing the presented results in this direction is an 
interesting question left for future research.

\bibliographystyle{plainurl}
\bibliography{references}
	
\end{document}

	\section{Introduction}
	\todo[inline]{T:  ``Significant'' changes made to R's suggestion (not including completely new additions): -I would not start the intro with explaining planarity, a concept that plays a very minor role in our contribution, but prefer to put more emphasis on the extension problem on its own. -The storyline is less focused on first generalising planarity to 1-planarity and then generalising one parameter to a less restrictive one, and more focused on extension problems, 1-planarity extension as an important extension problem and the considered parameter as more general and difficult. -I left out the explicit statement of the IC-planarity results, in particular the one in connection with v+e-distance. I am somewhat torn about this change. -I don't mention FPT yet (save that for the conclusion?) but am also not sure about this change.}
	A recent trend in graph drawing is to study so called \emph{(drawing) extension problems},
	where instead of drawing a graph \(G\) from scratch, a partial drawing \(\cH\) of a subgraph \(H\) of \(G\) is given as part of the input and the goal is to find a drawing of \(G\) that \emph{extends} \(\cH\) while satisfying certain desirable properties, such as containing few or restricted crossings.
	These drawing extension problems are motivated by applications in network visualization, where important patterns (subgraphs) are required to have a special layout, or where new vertices and edges in a dynamic graph must be inserted into an existing (partial) connected drawing, which must remain stable to preserve its mental map~\cite{mels-lam-95}.
	
	Similar lines of research have also been pursued for other important graph problems throughout the fields of graph algorithms and graph theory in various contexts.
	Examples for this include graph coloring for which the extension problem is known as \emph{precoloring extension}~\cite{}\todo{T: add citations}, and finding certain graph representations other than drawings for which the extension problems are commonly referred to as partial representation extension problems~\cite{kkorss-eprpuig-17,kkosv-eprig-17,kkos-eprscg-15,kkkw-eprfgpg-12,cfk-eprcg-13,cdkms-crpgeprh-14,cggkl-pvrep-18} to name a few.
	\todo{T: maybe add one more example?}
	Generally speaking extension versions of graph problems often have connections to dynamic and search tree algorithms for the respective graph problem, and the counting and enumeration versions of the respective graph problem.
	Additionally the study of partial representation extension problems leads to new insights into graph classes which are characterized by the respective presentations.
	\todo{T: maybe add concrete examples for the last two sentences?}

	For one of the most important classes of drawings, namely plane drawings, Angelini, Di Battista, Frati, Jelinek, Kratochvil, Patrignani and Rutter provide a linear-time algorithm that constructs a plane extension of \(\cH\) to \(G\) (if it exists)~\cite{adfjkp-tppeg-15}.
	This shows that in terms of classical complexity the extension problem for plane drawings is not harder than the finding general plane drawings.
	This contrasts earlier results on more restricted kinds of planar drawings.
	For instance it is well known by Fáry's Theorem that every planar graph admits a planar straight-line drawing.
	However testing straight-line extensibility of partial planar straight-line drawings is generally \NP-hard~\cite{p-epsd-06}.
	Similarly level-planarity testing takes linear time~\cite{jlm-lptlt-98}, but testing the extensibility of partial level-planar drawings is \NP-complete~\cite{br-pclp-17}.
	On the other end of the planarity spectrum, Arroyo et al.~\cite{adp-esd-19,Arroyo2019-ext1edge} studied drawing extension problems, where the number of crossings per edge is not restricted, yet the drawing must be \emph{simple}, i.e., any pair of edges can intersect in at most one point. 
	They showed that the simple drawing extension problem is \NP-complete~\cite{adp-esd-19}, even if just one edge is to be added~\cite{Arroyo2019-ext1edge}.

Very recently we initiated the study of extension problems on well-known beyond-planar graph classes, by considering IC-planar and more notably 1-planar partial drawings~\cite{Aug1P-ICALP}.
Specifically, given a graph $G$, a connected subgraph $H$, and a 1-planar drawing $\cH$ of $H$, the \textsc{1-Planar Drawing Extension} problem asks whether $\mathcal H$ can be extended by inserting the remaining vertices $\aV = V(G) \setminus V(H)$ and edges $\aE = E(G) \setminus E(H)$ of $G$ into $\mathcal H$ while maintaining the property of being 1-planar.
1-planarity represent one of the most natural and most studied generalizations of planarity~\cite{klm-ab1-17,dlm-sgdbp-19,r-sk-65};
a 1-planar graph is a graph that admits a drawing in the plane with at most one crossing per edge.
This definition dates back to Ringel (1965)~\cite{r-sk-65}  and since then the class of 1-planar graphs has been of considerable interest in graph theory, graph drawing and (geometric) graph algorithms, see the recent annotated bibliography on 1-planarity by Kobourov et al.~\cite{klm-ab1-17} collecting 143 references. 
More generally speaking, interest in various classes of beyond-planar graphs (not limited to, but including 1-planar graphs) has steadily been on the rise~\cite{dlm-sgdbp-19} in the last decade.
Recognizing 1-planar graphs is known to be \NP-complete~\cite{gb-agewce-07,km-mo1h1t-13}, even if the graph is a planar graph plus a single edge~\cite{cm-aepgmcn1h-13}, or has bounded bandwidth, pathwidth, or treewidth~\cite{bce-pc1-18}.
The \NP-completeness easily carries over to the extension problem~\cite{Aug1P-ICALP}.
In view of the \NP-completeness  of the problem, it is natural to ask about its complexity when $H$ is nearly ``complete'', i.e., we only need to extend the drawing $\cH$ by a small part of $G$.
In this sense, deletion distance represents the most straightforward way of quantifying how far $H$ is from $G$.

The most immediate measure to capture the completeness of $H$ in this sense is to consider the \emph{edge deletion distance} to $G$, which we denoted by \(k\).
In our previous work we were able to establish the following tractibility result:

\begin{fact}[\cite{Aug1P-ICALP}]
	\label{thm:fptk}
	\textsc{1-Planar Drawing Extension} is polynomial-time solvable when restricted to instances for which \(k\) is bounded by some fixed constant.
\end{fact}

However, already in that paper it was pointed out that requiring the edge deletion distance to be bounded can be rather restrictive: after all, the deletion of a vertex from a graph is often considered an atomic operation and yet could have an arbitrarily large impact on that measure.
For this reason we considered a second measure, namely the \emph{vertex+edge deletion distance} to $G$, i.e. the minimum\ee{minimum? Isn't this actually unique given $G$ and $H$}{} number of vertices and edges that need to be deleted from $G$ to obtain $H$.
We call this parameter $\kappa$. 
Since we can always assume that the vertex+edge deletion distance between between \(H\) and \(G\) is upper bounded by the edge deletion distance between \(H\) and \(G\).
The reverse is obviously not true -- we might consider an instance with a single added vertex, to which possibly \(O(|V(H)|)\) of added edges are incident, i.e.\ \(\kappa = 1\) while \(k \in \Theta(|V(H)|)\).
Thus bounding by a constant $\kappa$ leads to a more general (and difficult) problem compared to bounding \(k\).

Subsequently it is not surprising that the ideas to achieve Theorem~\ref{thm:fptk} are not applicable for bounded \(\kappa\) --
a key step in proving Theorem~\ref{thm:fptk} is that one can derive a graph of treewidth bounded by a polynomial in \(k\) from the partial drawing, and encode possibilities for 1-planar extensions as an MSO formula, thereby making the application of Courcelle's theorem possible.
However, when \(\kappa\) rather than \(k\) is bounded, this approach immediately fails, as the treewidth of the derived graph is not necessarily bounded in terms of \(\kappa\).\ee{Is this completely true? Like, we can focus to subgraph of $H$ that has bounded treewidth, just it does not have enough information for MSO, I guess? }
Indeed, when considering \(\kappa\) the problem seems to require a much more involved approach which relies more heavily on the structure of a hypothetical solution.

In our previous publication we were merely able to show polynomial-time solvability when \(\kappa \leq 2\).
This required some initial branching together with an elaborate dynamic programming procedure, signifying that even this most simple case, is by no means easy to solve.

\subparagraph{Contribution.}
In this work we are able to provide a polynomial time algorithm for any fixed bound on \(\kappa\).
Notably our main result is:
\begin{restatable*}{mainthm}{xp}
	\label{thm:xp}
	\textsc{1-Planar Drawing Extension} is polynomial-time solvable when restricted to instances for which \(\kappa\) is bounded by some fixed constant.
\end{restatable*}

This is achieved by an in-depth analysis of the structure of a hypothetical solution to \textsc{1-Planar Drawing Extension}.
This analysis allows us to perform some branching to determine important parts of the structure, which by further branching can be separated in such a way, that we are left with subinstances that contain at most two vertices in \(V(G) \setminus V(H)\).
For these subinstances we can roughly apply the algorithm given for \(\kappa \leq 2\) in \cite{Aug1P-ICALP}, although due to the fact that the instances will be given slightly differently than instances to the original problem, slight adaptations have to be made.

\medskip

\noindent After introducing necessary terminology and the formal definition of the problem in Section~\ref{sec:prelims},
we give a high-level overview of our algorithm in Section~\ref{sec:overview}.
We structure its precise description in four steps each of which correspond to a Sections~\ref{sec:branching}-\ref{sec:2vtcs}.

\newpage